\newcolumntype{L}{>{\raggedright\arraybackslash}X}%
\newcolumntype{C}{>{\centering\arraybackslash}X}%
\newcolumntype{R}{>{\raggedleft\arraybackslash}X}%
\renewcommand{\cite}{\citep}
\newcommand{\secref}[1]{\autoref{#1} (``\nameref{#1}'')}
\newcommand{\marginemph}[1]{\emph{#1}}
\newcommand{\gr}{Golomb ruler}
\newcommand{\gro}{\textsc{Optimal Golomb Ruler}}
\newcommand{\grd}{\textsc{Golomb Ruler Decision}}
\newcommand{\grdu}{\textsc{Unary Golomb Ruler Decision}}
\newcommand{\grsm}{\textsc{Golomb Subruler}}
\newcommand{\grmd}{\textsc{Golomb Subruler Mark Deletion}}
\newcommand{\wacs}{\textsc{Weighted Antimonotone 2-CNF SAT}}
\newcommand{\chr}{\textsc{Characteristic Hypergraph Recognition}}
\newtheorem{definition}{Definition}[section]
\newtheorem{theorem}{Theorem}[section]
\newtheorem{lemma}{Lemma}[section]
\newtheorem{corollary}{Corollary}[section]
\newtheorem{rrule}{Reduction Rule}[section]
\newtheorem{construction}{Construction}[section]
\newtheorem{observation}{Observation}[section]
\newcommand{\decprob}[3]{%
  \begin{center}%
    \begin{minipage}{0.9\linewidth}%
      \textsc{#1}\\
      \textbf{Input:} #2\\
      \textbf{Question:} #3
    \end{minipage}%
  \end{center}%
}
\newcommand{\optprob}[3]{%
  \begin{center}%
    \begin{minipage}{0.9\linewidth}%
      \textsc{#1}\\
      \textbf{Instance:} #2\\
      \textbf{Task:} #3
    \end{minipage}%
  \end{center}%
}
\newcommand{\optprobnotitle}[3]{%
  \begin{center}%
    \begin{minipage}{0.9\linewidth}%
      \textbf{Instance:} #2\\
      \textbf{Task:} #3
    \end{minipage}%
  \end{center}%
}
\newcommand{\probsp}{\vspace{2mm}}  
\newcommand{\claP}{\text{P}}
\newcommand{\claNP}{\text{NP}}
\newcommand{\PeqNP}{$\claP = \claNP$}
\newcommand{\PneqNP}{$\claP \neq \claNP$}
\newcommand{\twonat}{\beta^\mathbb{N}}
\newcommand{\twonatapprox}{\beta^\mathbb{N}_{\sim}}
\renewcommand{\emptyset}{\varnothing}
\newenvironment{itemize+}[1][0]
  { \begin{itemize}
    \addtolength{\itemsep}{#1\baselineskip}
    \addtolength{\baselineskip}{#1\baselineskip} }
  { \end{itemize} }
\DeclareMathOperator{\length}{length}
\DeclareMathOperator{\bigO}{O}
\DeclareMathOperator{\bigOmega}{\Omega}
\begin{document}

\begin{titlepage}
\centering
\noindent \rule{\textwidth}{0.5pt}

\vspace{1em}
\Huge Algorithmic Aspects of\\ Golomb Ruler Construction

\normalsize\vspace{1.12\topsep}\large

\normalsize\vspace{\topsep}\Large
\textsc{Manuel Sorge}\\
\rule{\textwidth}{0.5pt}

\vfill
\Large Studienarbeit\\
\normalsize \today

\vspace{\topsep}
  Supervision:\\
  Dr. Hannes Moser,\\
  Prof. Dr. Rolf Niedermeier, \\
  Dipl.-Inf. Mathias Weller

\vfill
\normalfont
\includegraphics[width=5cm,trim=0cm 4cm 0cm 0cm,clip=true]{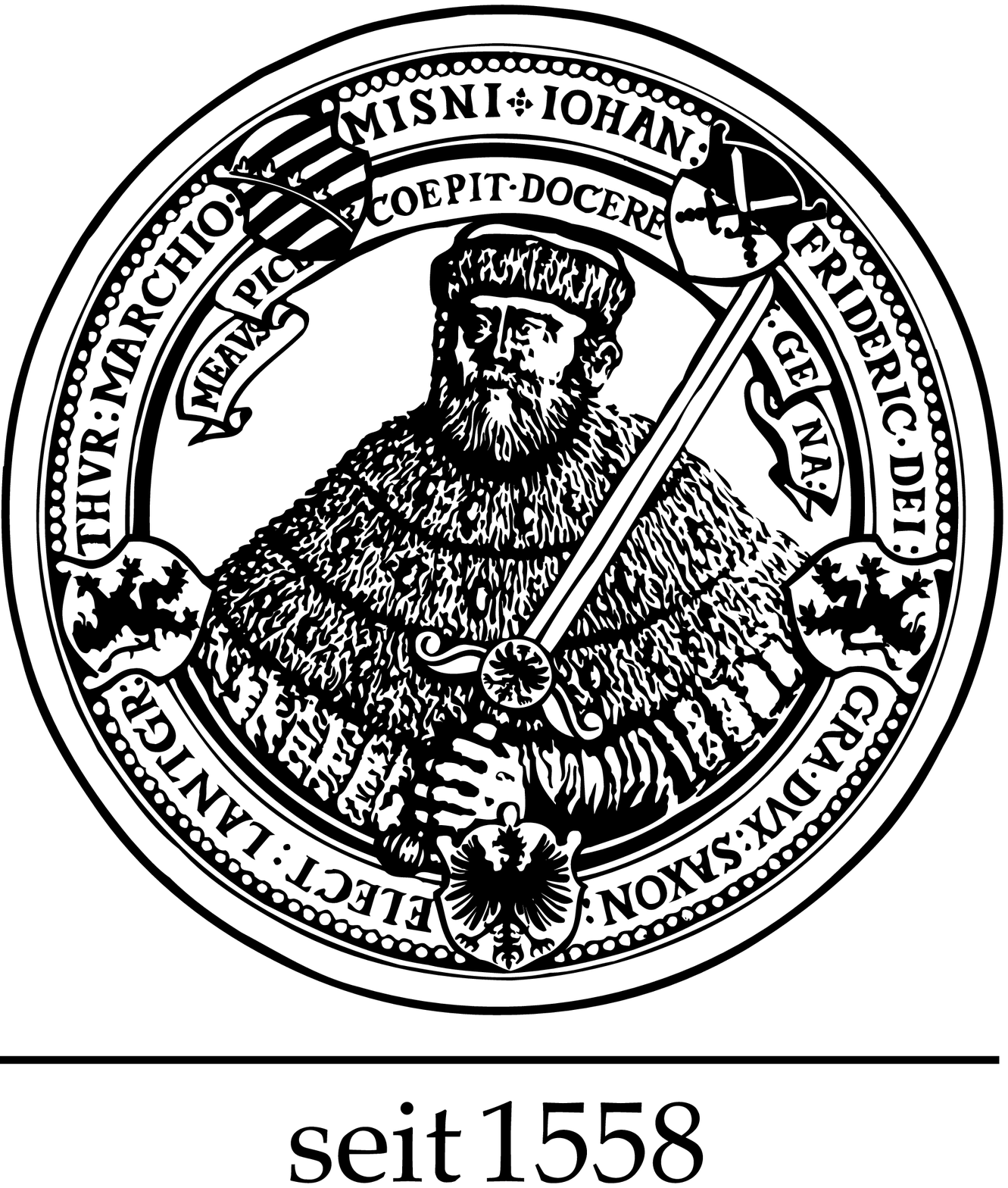}

\vspace{\topsep}
Friedrich-Schiller-Universität Jena\\
  Institut für Informatik\\
  Theoretische Informatik I / Komplexitätstheorie
\end{titlepage}

\begin{abstract}
  We consider \gr s and their construction. Common rulers feature marks at every unit measure, distances can often be measured with numerous pairs of marks. On \gr s, for every distance there are at most two marks measuring it. The construction of optimal---with respect to shortest length for given number of marks or maximum number of marks for given length---is nontrivial, various problems regarding this are NP-complete. We give a simplified hardness proof for one of them. We use a hypergraph characterization of rulers and \gr s to illuminate structural properties. This gives rise to a problem kernel in a fixed-parameter approach to a construction problem. We also take a short look at the practical implications of these considerations.
\end{abstract}

\tableofcontents

\chapter{Introduction}

\label{sec:intro}
A Golomb ruler is a specific type of ruler: Whereas common rulers have marks at every unit measure, a Golomb ruler only has marks at a subset of them. Precisely, the distance measured by any two marks on a Golomb ruler is unique on it. An example can be seen in \autoref{fig:golomb-common}. Golomb rulers are named after Professor Solomon Golomb. According to various sources \cite{Col03, Dim02}, he was one of the first to study their construction. 

\gr s have various applications ranging from radio astronomy to cryptography. This explains the interest in computing \gr s that are particularly short for a given number of marks or have many marks when given a maximum length. Unfortunately, from a computational complexity point of view, some of the corresponding decision problems have been proven to be NP-complete, while little is known about other very natural problems.

Despite this, much effort has been made to compute short or dense \gr s and to prove them optimal. Various implementations of exhaustive searches have been given and discussed as well as heuristic and evolutionary approaches. A sophisticated project searches for \gr s through a distributed computer network, enabling users to donate idle computing time.

In this work, we give a short insight into the work that has been done in the field and we briefly consider two natural problems of unsettled computational complexity. We give a natural hypergraph characterization for rulers such that only \gr s correspond to a specific subset of the graphs. We then consider a construction problem that has been proven to be NP-complete. We give a simplified proof for this and then look at two natural parameterizations. For one of the parameterizations, we provide a fixed-parameter algorithm, and some heuristic improvements along with a cubic-size problem kernel that mainly follows from some structure that we observe in characteristic hypergraphs. Finally, we implemented an algorithm that uses the fixed-parameter approach and comment on our experimental results.

\begin{figure}
  \begin {center}
    \includegraphics{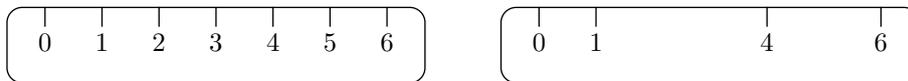}
    \caption{A common ruler (left) and a \gr{} (right) is shown. For every distance in \gr s, there is at most one pair of marks that measure this distance. For example, the distance one is only measured by the marks 0 and 1 on the \gr{} whereas this distance is measured by six pairs of marks on the common ruler. Both rulers measure every integer distance up to their length. Such rulers are called perfect. This, however, is a rare trait among \gr s, as one can proof \cite{Dim02}. }
    \label{fig:golomb-common}
  \end{center}
\end{figure}

\section{Origins and Applications}
\label{sec:originsappl}
According to \citet{Col03} and \citet{Dim02}, W. C. Babcock first discovered Golomb rulers while analyzing positioning of radio channels in the frequency spectrum. He investigated inter-modulation distortion appearing in consecutive radio bands~\cite{Bab53} and observed that when positioning each pair of channels at a distinct distance, then third order distortion was eliminated and fifth order distortion was lessened.

\citet{Ran93} lists other interesting applications, two of which we touch shortly here. In radio astronomy, arrays of radio telescopes are used to gather information about celestial bodies via interferometry. The telescopes are arranged in a single line, and information is extracted from difference measurements between two telescopes \cite{BBR74}. By placing them at the marks of a Golomb ruler, the number of these measurements and thus information gathered is maximized. This is a special case of a linear array. Linear arrays are also used in other related fields such as antennae construction. 

In computer communication networks \gr s can be used to simplify the message passing process. When allocating the node names corresponding to marks on a \gr, messages do not need to specify both origin and destination addresses. Since the differences between marks in a \gr{} are unique, the difference and the direction of arrival suffice to identify the origin and destination node \cite{BG77}.
\section{Preliminaries and Problem Definitions}
\label{sec:problemdef}

\reversemarginpar

We now gather a common knowledge base for our considerations. At first we will introduce rulers and \gr s, then go on to hypergraphs, some basic fixed parameter techniques and finally define some notation. We assume the reader to be familiar with basic mathematics and classic complexity theory. There are many recommendable books on complexity theory, see for example \cite{AB09, Pap94}.

\subsection{Rulers}

\begin{definition}[Ruler]
  A \marginemph{ruler} is a set $R:=\{ m_i:1\leq i\leq n\}\subset \mathbb{Z}$ with~$m_i < m_{i + 1}$. The mark $m_i$~is called the \marginemph{$i$'th mark} on $R$. The ruler is said to have $n$ marks and~$|m_n-m_1|$ is called the \marginemph{length} of the ruler. We call a set~$R' \subseteq R$ a \marginemph{subruler} of $R$.
\end{definition}
\begin{definition}[\gr]
 A ruler $R = \{m_i : 1 \leq i \leq n\}$ is called \marginemph{\gr} if for every $d \in \mathbb{N} \setminus \{0\}$ there is at most one solution to the equation $d = m_i - m_j$, $m_i, m_j \in R$. 
\end{definition}
It is easy to see that if a ruler $\{m_i : 1 \leq i \leq n\}$ is Golomb, so are the rulers~$\{m_ic : 1 \leq i \leq n\}$ and $\{m_i+c : 1 \leq i \leq n\}$ for a constant $c \in \mathbb{Z}$.

So for every \gr{} $R=\{ m_i:1\leq i\leq n\}$, there is a \gr{}~$R'$ with only positive marks, starting with the mark $0$ and having the same set of differences $\{m_i - m_j: m_i \neq m_j \wedge m_i,m_j \in R\}$. A ruler~$R'$ is also called the \marginemph{canonical form} of $R$ \cite{Dim02}. 
\begin{definition}[Perfect ruler]
  A ruler $R=\{ m_i:1\leq i\leq n\}$ is called \marginemph{perfect} if for every integer $1 \leq d \leq (m_n - m_1)$ there is at least one solution to the equation $d = m_i - m_j$, $m_i,m_j \in R$.
\end{definition}
It is not hard to see that there are no perfect \gr s with more than four marks \cite{Ran93}. In fact, every \gr{} with $n > 4$ marks has length greater than~$ n (n-1) / 2$.

This insufficiency leads to following problem:
\probsp{}
\optprob{\gro}{$n \in \mathbb{N}$.}{Find a \gr{} with $n$ marks and minimum length.}
\probsp{}
A natural decision problem (as posed by \citet{MP08}) is:
\probsp{}
\decprob{\grd}{$n, D \in \mathbb{N}$.}{Is there a \gr{} with at least $n$ marks and length at most $D$?}
\probsp{}
\begin{definition}[Optimal \gr{} and the function $G(n)$]
  A \gr{} with $n$~marks is called \marginemph{optimal} if it is of shortest possible length.
  For every $n \in \mathbb{N}$ the function \marginemph{$G(n)$} is defined as the length of an optimal \gr{} with $n$ marks.
\end{definition}
No closed form expression is known for $G(n)$ \cite{Dim02}, and even using massive computational power, $G(n)$ to date is only known for $n \leq 26$ \cite{DistribStat}.


\subsection{Graphs and Hypergraphs} 
A \marginemph{hypergraph}~$H$ is a tuple~$(V,E)$, where $V$ is a finite set and $E$ is a family of sets~$e$ such that~$e \subseteq 2^V \setminus \{\emptyset\}$, where $2^V$ denotes the powerset of~$V$. The elements of~$V$ are called \marginemph{vertices} and the elements of~$E$~\marginemph{edges}. Two vertices are called \marginemph{adjacent}, if there is an edge that contains both of them. A vertex~$v \in V$ and an edge~$e \in E$ are called \marginemph{incident}, if~$v \in e$. 

Two hypergraphs $H=(V,E)$ and $H'=(V',E')$ are called \marginemph{isomorphic} and we write~$H \cong H'$, if there is a bijection~$\phi:V \rightarrow V'$ such that the following holds: $$\{v_1, ..., v_i\} \in E \Leftrightarrow \{\phi(v_1), ..., \phi(v_i)\} \in E'$$

The hypergraph $H'=(V',E')$ is called subhypergraph or short \marginemph{subgraph} of a hypergraph~$H=(V,E)$, if~$V' \subseteq V$ and $E' \subseteq E$. The hypergraph $H$ is then also called a \marginemph{supergraph} of $H'$. The hypergraph $H[V'] := (V', E')$ is called (vertex) \marginemph{induced subgraph} of a hypergraph $H=(V,E)$, if~$V' \subseteq V$ and $E'$~contains every set~$e$, such that~$e \in E$ and $e \subseteq V'$. A hypergraph $M_H$ is called a \marginemph{minor} of a hypergraph $H$, if $M_H$ can be obtained from $H$ by removing vertices, removing edges and contracting edges. Contracting an edge $e$ means to delete $e$ and every vertex contained in $e$ and to introduce a new vertex $v$ that is added to every edge that was incident to a vertex in~$e$. 

The hypergraph~$H$ is called \marginemph{simple}, if~$e \neq \hat{e}$ for every~$e, \hat{e} \in E$. The hypergraph~$H$ is called \marginemph{$d$-uniform} for an integer~$d$, if~$|e| = d$ for every~$e \in E$. 

We draw a hypergraph by drawing points for every vertex, drawing curves for every edge~$e, |e| > 2$, encircling all vertices in that edge, drawing straight lines between the contained vertices for edges $e, |e| = 2$ and drawing loops at the contained vertex for edges $e, |e| = 1$.

A \marginemph{graph} is a simple 2-uniform hypergraph. A graph is called \marginemph{complete} or a \marginemph{Clique}, if all possible edges are present in the graph. A graph~$G=(V_1 \cup V_2,E)$ is called \marginemph{bipartite}, if~$V_1 \cap V_2 = \emptyset$ and there is no~$e \in E$ such that~$e \subseteq V_1$ or~$e \subseteq V_2$. A bipartite graph~$G=(V_1 \cup V_2, E)$ is called \emph{complete} or a~$K_{|V_1|,|V_2|}$ if it contains all possible edges and still maintains its bipartite property. A graph is called \marginemph{planar}, if it can be embedded in the plane, i.e. it can be drawn in a plane such that the drawings of the edges intersect only in their endpoints. A well known theorem by Kuratowski states that a graph is planar, if and only if it does not contain a clique with five vertices or a $K_{3,3}$ as minor \cite{We00}.

The \marginemph{incidence graph} of a hypergraph~$H = (V,E)$ is defined as the bipartite graph~$I = (V \cup E, E')$ with $E' = \{ \{v, e\} : v \in V \wedge e \in E \wedge v \in e\}$. A hypergraph is called planar, if its incidence graph is planar.

\subsection{Basic Fixed-Parameter Techniques and Complexity Theory}

Many natural problems are NP-hard and thus are believed not to be solvable within running time bounded by a polynomial function. However, in practise the phenomenon can be observed that some instances of NP-hard problems in fact can be solved within reasonable time. 
This is because in classic computational complexity mostly worst case running times depending only on the input length are contemplated. The exponential worst case running times notwithstanding it is possible to identify structures that can be exploited by algorithms in some problems. Fixed-parameter algorithmics can be seen as the approach to find efficient algorithms not only with respect to input length but also to such structures---called parameters. We only recapitulate some very basic definitions and techniques here, for more on the topic see e.g. \cite{Nie06, DF99, FG06}.

A \marginemph{parameterized problem} is a language $L\subseteq \Sigma^* \times \Sigma^*$, where $\Sigma$ is a finite alphabet. The second component is called the \marginemph{parameter} of the problem. In our problems the parameter will always be a nonnegative integer and therefore we restrict this definition to languages $L \subseteq \Sigma^* \times \mathbb{N}$ in this paper. A parameterized problem $L$ is \marginemph{fixed-parameter tractable} with respect to the parameter $k$ if there exists an algorithm that decides $L$ in $f(k)p(n)$ time, where $f:\mathbb{N} \rightarrow \mathbb{N}$ is a computable function only depending on $k$ and $p$ is a polynomial.

Bounded \marginemph{search trees} are a standard way to classify a problem as fixed-parameter tractable. Search trees are a way of systematic exhaustive search. The strategy is to find a small part of the input in polynomial time, such that at least one element of this part has to be in the solution. Then we branch into all cases of choosing one element of this part and recurse until a solution is found. The graph with the (recursive) calls of the algorithm as nodes and an edge between two nodes, if one was called by the other is called the search tree. If in every node time is spent that is bounded by a polynomial in the input length and if we can bound the number of succeeding recursive calls at one node and the height of the tree by a function that depends only on the parameter, we obtain a fixed-parameter algorithm. If the algorithm has an input of size $s$ and branches into recursively solving instances of sizes~$s - d_1, ..., s - d_i$, then $(d_1, ..., d_i)$ is called the \marginemph{branching vector} of this recursion. 
Since search tree algorithms often terminate early and are easily parallelized, this technique has applications in practise. 

Let $L$ be a parameterized problem. A \marginemph{reduction rule} is a mapping of instances~$(I,k)$ to instances~$(I',k')$ such that the following conditions hold: First, $(I,k) \in L \Leftrightarrow (I',k') \in L$, which is also called \marginemph{correctness} of the rule. Second, $\length(I') \leq \length(I)$ and $k' \leq k$. An instance is called \marginemph{reduced} with respect to a reduction rule, if the rule cannot be applied to the instance anymore. That is, the image of the instance under the reduction function is the same as the instance itself. A reduction to a \marginemph{problem kernel} is a reduction rule that can be computed in $\bigO((\length(I))^c)$~time for some constant~$c$ such that $\length(I') \leq g(k)$, where $g: \mathbb{N} \rightarrow \mathbb{N}$ is computable and depends only on $k$. The function~$g$ is also called the \marginemph{size} of the problem kernel.

Let $L, L'$ be two parameterized problems. A \marginemph{parameterized reduction} from~$L$ to $L'$ is a function~$r:\Sigma^* \times \mathbb{N} \rightarrow \Sigma^* \times \mathbb{N}, (I, k) \mapsto (I',k')$ such that $r$ is computable in $f(k)p(\length(I,k))$ time, $(I,k) \in L \Leftrightarrow (I',k') \in L'$ and $k'$ depends only on $k$. Here, $f$ is a computable function depending only on $k$ and $p$ is a polynomial.

A parameterized problem $L$ belongs to the class \marginemph{W[t]} if there is a parameterized reduction from $L$ to a weighted satisfiability problem for the family of circuits of weft---the maximum number of gates with unbounded fan-in on an input output path---at most $t$ and depth at most some function of the parameter~$k$. For an introduction to the W-hierarchy see \cite{FG06}. We only use the fact that W[1]-hard problems are assumed not to be fixed-parameter tractable. A parameterized problem can be shown to be \marginemph{W[1]-hard} by giving a parameterized reduction from a W[1]-complete problem. For example, the following problem is W[1]-complete with respect to parameter $k$.
\decprob{Independent Set}{A graph $G=(V,E)$ and an integer $k$.}{Is there a vertex subset $S \subseteq V$ such that $k \leq |S|$ and $G[S]$ contains no edges?}

\subsection{Conventions}
We denote the number of vertices of a (hyper-)graph by~\marginemph{$n$} and the number of edges by~\marginemph{$m$}, where it is not ambiguous. Also, we denote an edge of size $d$ in hypergraphs by \marginemph{$d$-edge}. Hypergraphs whose edges are of size exactly three or four play a major role in our considerations. We call these graphs~\marginemph{3,4-hypergraphs}.

In proofs, we use the lightning-symbol~$\lightning$ to indicate an exposed contradiction and thus a finished case of a reductio ad absurdum.

\section{Previous Work}
\label{sec:pastresearch}

In this section, we give a quick overview of the research that has been done in the \gr{} field in the past.

\subsection{Golomb Ruler Construction and Bounds on $G(n)$}
\label{sec:prbounds}

\citet{Dim02} analyzed the \gr{} problem with respect to the older so called Sidon set problem. Informally, a Sidon set is a finite set~$S \subset \mathbb{N}$ such that the sum of any two elements of $S$ is distinct. 
It can easily be shown that this definition is equivalent to the definition of the \gr s. Similarly to \gr s, there is an optimization problem for Sidon sets that asks the following: Given~$n$, what is the maximum cardinality of a~Sidon~set~$S \subseteq \{1, ..., n\}$?

This problem has been studied extensively. Lower bounds have been given by \citet{Lin69} (as claimed in \cite{Dim02}) and \citet{ET41} via construction strategies. These bounds then have been applied to \gr s by \citet{Dim02}, yielding the lower bound $G(n) > n^2 - 2n \sqrt{n} + \sqrt{n} - 2$.

Other construction strategies for Sidon sets and thus \gr s have been given indirectly. \citet{Sin38} discovered a method for generating a set of $q + 1$ residues modulo $q^2 + q + 1$ that form a \gr, $q$ being a prime power. \citet{Dim02} and \citet{Rus93} claim that \citet{Bos42} found a strategy to do the same for $q$ residues modulo $q^2 - 1$ and \citet{Rus93} found one for $p - 1$ integers such that their pairwise sums are all different modulo $p(p-1)$, $p$ being a prime number. The former implies that $G(q) \leq q^2 - 1$ for prime powers $q$.

Still, for all integers the best known upper bound is $G(n) \leq 2n^3 + n$. This is due to a relatively simple construction described in \cite{Dim02}.

Erd{\H{o}}s conjectured an upper bound to be $G(n) \leq n^2 + c$, with $c \in \mathbb{R}$ being a constant. \citet{Dim02} has computed relatively short \gr s and thus showed with computer aid that $G(n) < n^2$ for $n \leq 65,000$.

\subsection{Computational Approaches}
\label{sec:prcomput}

Numerous efforts have been made to compute \gr s and prove their optimality with computer aid. We can only cover some selected cases here.

\citet{Ran93} has developed some exhaustive search algorithms, reviewed their empirical running time, and provided a parallel implementation. He has been able to compute and prove optimal \gr s with 17 and 18 marks in 1993. Using a cluster of one Sun SparcServer 1000 and twelve Sun SparcClassic workstations, it took about 840 hours to complete the run for the 18-mark ruler.

\citet{Distrib}---founded in 1997---attacks the \gr{} problem through a globally distributed computer system. A client is provided, which enables the user to donate idle computing time to the project. The participants were able to show that previously shortest known \gr s of length 24 through 26 were indeed optimal. With 124,387 participants during the run for the 25-mark \gr, it took 3,006 days to find the result. The run for the 26-mark \gr{} allegedly took only 24 days. Unfortunately, the source code of the projects key elements are not publicly available because of security reasons. 

\gr s have also raised some interest in the field of evolutionary algorithms and other heuristic techniques. For example, see articles by \citet{TPC07, CDFH07, PTC03} and \citet{CF05}.

\subsection{Complexity Theory}
\label{sec:prcomplex}

Surprisingly, given the number of implementations, seemingly little is known about the computational complexity of \grd{} and \gro{}. 
See recent publications by \citet{MP08} and \citet{MaY08}.

\citet{MP08} have focussed on the construction of \gr s and proved the following problems to be NP-complete:
\probsp{}
\decprob{\grsm}{A finite set $S \subseteq \mathbb{N}$ and $n\in\mathbb{N}$.}{Is there a \gr{} $S' \subseteq S$ with at least $n$ marks?}
\decprob{Golomb Ruler Sum}{A finite set $T \subseteq \mathbb{N}$ and $D, n \in \mathbb{N}$.}{Are there elements $t_1, ..., t_i \in T$ such that $n - 1 \leq i$ and the ruler $\{ \sum_{j=1}^kt_j : 1 \leq k \leq i \} \cup \{0\}$ is a \gr{} of length equal to~$D$?}
\decprob{Golomb Ruler Subset Distances}{A finite set of interval lengths $T \subseteq \mathbb{N}$ and $n \in \mathbb{N}$.}{Is there a \gr{} $R = \{m_i : 1 \leq i\leq n\}$ such that $\{|m_j - m_i | : 1 \leq i < j \leq n\} \subseteq T$?}
\probsp{}
\citet{MaY08} have reduced \gro{} to a problem called \textsc{Seed Optimization}. Unfortunately, they also note that the complexity of this problem is unknown.

However, there has also been some research on intuitively related problems, i.e., \textsc{Difference Cover} and \textsc{Turnpike}: A set $\Delta \subseteq \mathbb{N}$ is called a \emph{difference cover} for a set $Y \subset \mathbb{N}$ if for each $y \in Y$ there exist at least two elements $a,b \in \Delta$, such that $y = a - b$. \citet{MP06} have proven that a polynomial time algorithm for the following problem would imply $\text{P}=\text{NP}$.
\probsp{}
\optprob{Minimum Difference Cover}{A set $Y \subset \mathbb{N}$.}{Find the minimum cardinality $\Delta \subset \mathbb{N}$, such that $\Delta$ is a difference cover for $Y$.}
\probsp{}
The \textsc{Turnpike} problem is defined as follows:
\probsp
\decprob{Turnpike}{A multiset of $n(n - 1)/2$ integer distances.}{Is there a set of $n$ points in $\mathbb{N}$ with the given distances?}
\probsp
\citet{MP08} note that the complexity of \textsc{Turnpike} is unknown. Variants and special cases of it have been studied, some of which have been proven to be in P, some to be pseudo-polynomial time solvable and others to be NP-complete. See references in \cite{MP08}.


\chapter{Algorithms and Complexity}

\citet{MP08} provided insight into the computational complexity of constructing \gr s. However, there are still white spots in the map of complexity of problems related to \gr s. In \autoref{sec:grocomplex} we consider one central white spot, and try to explain why it still has not been settled.

In \autoref{sec:hgraphchara} we introduce the notion of characteristic hypergraphs for rulers. This new technique helps illustrating problems related to \gr s  and we also gather some structural insights based on these graphs. The characterization serves as base for our considerations in the succeeding sections:

\autoref{sec:simplnphardness} contains an alternative and simplified proof for the NP-hardness of \grsm{}. (The original proof has been given by \citet{MP08}.) We also consider the fixed-parameter tractability of \grsm{} with two natural parameters in \autoref{sec:fpt}. We give a positive result for one of them and prove a cubic-size problem kernel.

\section{Notes on the Complexity of \gro}
\label{sec:grocomplex}

Recall the definitions of \gro{} and \grd:

\probsp{}
\optprob{\gro}{$n \in \mathbb{N}$.}{Find a \gr{} with $n$ marks and minimum length.}
\decprob{\grd}{$n, D \in \mathbb{N}$.}{Is there a \gr{} with at least $n$ marks and length at most $D$?}
\probsp{}

Although some authors \cite{Dim02, SHL95} believe that \gro{} is computationally hard, to date there is no proven evidence. Even \citet{MP08}, who focussed on the complexity of problems related to \gr s, do not state a conjecture on whether \grd{} is NP-hard or whether it is in NP. We now discuss why these questions seem to be difficult to answer.

At first, observe that the encoding of the maximum length $D$ and the minimum number of marks $n$ has a heavy impact on the complexity of \grd{}. This is due to the following.

\begin{theorem}[\citet{Ber78}]
  \label{the:berman}
  If there is an NP-complete language L and a polynomial function~$f: \mathbb{N} \rightarrow \mathbb{N}$ such that the following statement holds, then \PeqNP. $$\forall l \in \mathbb{N} : |\{x \in L: \length(x) \leq l\}| \leq f(l) $$
\end{theorem}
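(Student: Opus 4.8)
The plan is to show that the hypothesis forces some NP-complete problem into P; since $L$ is NP-complete, it suffices to exhibit a polynomial-time algorithm for any single NP-complete problem. I would use \textsc{SAT}, because it is \emph{self-reducible}: a satisfying assignment can be built bit by bit, asking at each step whether the current partial assignment still extends to a full satisfying one. Organising these queries as a binary tree over assignment prefixes gives a search whose only defect is its exponential width; the whole proof then consists of using the sparseness of $L$ (the census bound $f$) to prune this tree to polynomial width at every level, so that the total work becomes polynomial.

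The pruning mechanism is a counting argument. The predicate ``does this partial assignment extend to a satisfying one'' defines a language in NP, so by NP-completeness of $L$ there is a polynomial-time many-one reduction $g$ sending each such query $q$ to a string with $g(q) \in L$ exactly when the answer is yes. On an input of size $n$ every query has size polynomial in $n$, so $|g(q)| \le N$ for a fixed polynomial bound $N = N(n)$. The census hypothesis then says at most $f(N)$ strings of $L$ of length $\le N$ exist, so among all \emph{affirmative} queries at one level of the tree the images $g(q)$ occupy at most $f(N)$ distinct elements of $L$. Whenever two surviving candidates map to the same element of $L$, I would keep only one, which is what caps the surviving width.

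First I would warm up with the tally case, which is the setting of \citet{Ber78}: if $L \subseteq \{0\}^*$ then \emph{every} image $g(q)$ is a unary word of length $\le N$, so there are only $N+1$ possible images in total and the width bound is immediate, regardless of whether the queries are affirmative. The difficulty in the general sparse case is that \emph{negative} queries may have many distinct images lying outside $L$, so ``keep one candidate per image'' does not by itself bound the width, and naively discarding a candidate risks throwing away the only branch leading to a solution. The hard part will therefore be arranging the pruning so that it is provably lossless.

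To handle this I would replace the bare extendability query by Mahaney's \emph{left set}: encoding assignments of $\phi$ as $n$-bit integers, let $A$ be the NP-language of pairs $\langle \phi, y\rangle$ for which $\phi$ has a satisfying assignment numerically $\ge y$. For fixed $\phi$ this set is monotone in $y$, being true for all $y$ up to the lexicographically largest satisfying assignment and false above it. I would grow the prefix tree of threshold candidates and, after reducing each node's left-set query through $g$, whenever two \emph{affirmative} nodes receive the same image keep only the lexicographically larger one; monotonicity makes this lossless, since the lex-largest satisfying assignment dominates every affirmative threshold, while sparseness caps the affirmative survivors at $f(N)$ per level. Bounding the width this way at every level yields a polynomial-time test for satisfiability, and hence \PeqNP{}. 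The delicate point, on which I would spend the most care, is verifying that this collision-based pruning never deletes the branch carrying the maximal satisfying assignment --- precisely the step where the monotone left set, rather than the plain extendability predicate, is indispensable.
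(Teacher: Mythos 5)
The paper does not prove this theorem; it imports it as a citation, so there is no in-paper proof to compare against. Your proposal is the standard route from the literature: Berman's original argument handles the tally case, and the general polynomial-census statement as printed here is really Mahaney's theorem, for which your left-set/threshold construction is the accepted proof. You correctly identify all the ingredients --- self-reducibility of \textsc{Sat}, the reduction $g$ into the sparse set, the census bound $f(N)$ on images of affirmative queries, and the monotone left set $A=\{\langle\phi,y\rangle: \exists\text{ sat.\ assignment}\ge y\}$ as the device that makes collision-pruning lossless.

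There is, however, one genuine gap at the crux. Your stated pruning rule --- on a collision $g(\langle\phi,y_i\rangle)=g(\langle\phi,y_j\rangle)$ keep the larger threshold --- only guarantees that the surviving candidates have pairwise distinct images; it caps the \emph{affirmative} survivors at $f(N)$ but leaves the negative survivors unbounded, since their images are distinct strings outside $L$. This is exactly the difficulty you name in your third paragraph, and introducing the left set does not by itself dispose of it: after collision-pruning the level can still have exponentially many candidates, almost all negative. The missing step is to additionally discard all but the $f(N)+1$ smallest surviving thresholds. This is safe because, by monotonicity, every candidate lying at or below the one whose subtree contains the maximal satisfying assignment $z_{\max}$ is affirmative, and these affirmative candidates have pairwise distinct images in $L\cap\Sigma^{\le N}$, so there are at most $f(N)$ of them; hence the branch carrying $z_{\max}$ survives the truncation. (A second, smaller slip: you restrict the collision rule to pairs of \emph{affirmative} nodes, but affirmativeness cannot be tested --- deciding it is the whole problem. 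The rule must be applied to every collision; fortunately deleting the smaller threshold is also safe when both colliding queries are negative, since then neither subtree contains $z_{\max}$.) With these two repairs the per-level width is at most $f(N)+1$, each level doubles and re-prunes in polynomial time, and the argument closes as you intend.
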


In \grd, let $D$ and $n$ be encoded with binary alphabet and let \grdu{} denote the same problem, but with the input encoded with unary alphabet. In every unary language, there are at most $l + 1$ words of length at most $l$. Therefore, the number of positive instances of length at most $l$ in \grdu{} is clearly bounded by a polynomial function in $l$. Thus, by \autoref{the:berman} and under the assumption that \PneqNP{}, \grdu{} cannot be NP-complete. The problem clearly lies in NP, because the trivial certificate---a \gr{} satisfying the conditions---is of length $\bigO(n \log{D})$, which is polynomial in~$n$ and~$D$, so it cannot be NP-hard.

For \grd, that is, using binary encoding of the input, it is not even clear whether it is in NP. This is due to the fact that, if it is in NP, then there must be a certificate of polynomial size for every instance. As noted above, the trivial certificate is of length $\bigO(n \log{D})$, which now is exponential in the input length $(\log{n}+\log{D})$. Obviously, this does not imply that such a certificate cannot exist, but it seems intuitively plausible that it comprises a \gr{} or some notion of the positions of its marks. To encode this into a word of size polylogarithmic in $D$ and $n$ seems to be a difficult task.

Because the number of marks $n$ is the culprit to the exponentiality of the trivial certificate here, it might seem obvious to try and encode only $n$ in unary and give $D$ in binary. However, this makes \autoref{the:berman} applicable again: 

\begin{observation}
  \label{obs:nontriviallanguages}
  Let $L$ be a language and let $f$ be a polynomial-time computable function such that either $f(w)=\perp$ or $f(w)=M$, where $M$ is the coding of a Turing machine that decides in polynomial time whether $w \in L$ or not. 
  Furthermore, let $L_{nt} = \{ w \in L : f(w) = \perp\}$. Then $L$ and $L_{nt}$ are polynomial-time equivalent.
\end{observation}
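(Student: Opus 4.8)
The plan is to establish the two polynomial-time many-one reductions $L_{nt} \leq_{\mathrm p} L$ and $L \leq_{\mathrm p} L_{nt}$; together these witness the claimed polynomial-time equivalence. The guiding idea is that $f$ splits $\Sigma^*$ into the \emph{trivial} words, for which $f(w) = M \neq \perp$ supplies an explicit polynomial-time decider, and the \emph{nontrivial} words, for which $f(w) = \perp$. On the trivial words, membership in $L$ can be decided outright, so the whole difficulty of $L$ is concentrated in $L_{nt}$. Throughout I may assume that $L$ (and hence $L_{nt}$) is neither empty nor all of $\Sigma^*$, since otherwise both languages lie in $\claP$ and the statement is immediate; this lets me fix words $y^+ \in L_{nt}$ and $y^- \notin L$ to serve as canonical yes- and no-instances. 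Note that, as $L_{nt} \subseteq L$, the word $y^-$ is a non-member of both languages.

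For $L_{nt} \leq_{\mathrm p} L$, on input $w$ I first compute $f(w)$, which takes polynomial time by hypothesis. If $f(w) = \perp$, then by the definition $L_{nt} = \{w \in L : f(w) = \perp\}$ we have $w \in L_{nt} \Leftrightarrow w \in L$, so I output $w$ unchanged. If $f(w) \neq \perp$, then $w \notin L_{nt}$ regardless of whether $w \in L$, so I output the fixed non-member $y^-$. Correctness is immediate in both cases.

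For the converse $L \leq_{\mathrm p} L_{nt}$, I again compute $f(w)$. If $f(w) = \perp$, then $w \in L \Leftrightarrow w \in L_{nt}$ and I output $w$. If $f(w) = M \neq \perp$, I simulate $M$ on $w$ and output $y^+$ if $M$ accepts and $y^-$ if it rejects; here $L_{nt}$ is not actually consulted for the answer—the decider $M$ suffices, and the output $y^+/y^-$ merely records its verdict as a yes- or no-instance of $L_{nt}$.

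The step requiring the most care, and the main obstacle, is arguing that the simulation of $M$ in the second reduction runs in time polynomial in $|w|$ \emph{uniformly} over all inputs. Since $f$ is polynomial-time computable, its output length is polynomially bounded, so $|M| = |f(w)|$ is polynomial in $|w|$; reading ``$M$ decides whether $w \in L$ in polynomial time'' as a common time bound $t(\cdot)$ shared by every such $M$, a universal machine simulating $M$ on $w$ for $t(|w|)$ steps runs in time polynomial in $|w|$. I would make this uniformity assumption explicit, observing that without a shared bound one must instead read the statement as the weaker assertion $L \in \claP \Leftrightarrow L_{nt} \in \claP$, which follows from the same case analysis: the forward direction because $L_{nt} = L \cap \{w : f(w) = \perp\}$ and the latter set is decidable in polynomial time, and the backward direction from the reduction $L \leq_{\mathrm p} L_{nt}$ above.
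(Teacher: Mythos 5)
Your proof is correct and follows essentially the same route as the paper's: compute $f(w)$, pass $w$ through unchanged when $f(w)=\perp$, and otherwise simulate the decider $M$ and output a hard-coded canonical yes- or no-instance. You are in fact somewhat more careful than the paper, which takes the canonical yes-instance merely from $L$ rather than from $L_{nt}$, dismisses the easy direction by appealing only to $L_{nt}\subseteq L$, and silently assumes the uniform polynomial time bound on the machines $M$ that you explicitly flag.
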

\begin{proof}
  It is clear that the language $L_{nt}$ can be decided in polynomial time with~$L$, because~$L_{nt} \subseteq L$. 

  To reduce $L$ to $L_{nt}$, one simply computes $f(w)$ for $w \in L$. If $f(w) = \perp$ then $w \in L_{nt}$ and we can output $w$. Otherwise, simulate $f(w)=M$ with input $w$ and output $w_{no}$ or $w_{yes}$ if $M$ does not or does accept, respectively, where $w_{no} \notin L$ and $w_{yes} \in L$ (we can code a constant number of such words in the coding of the Turing machine computing this reduction).
\end{proof}
The interpretation of \autoref{obs:nontriviallanguages} is that any problem $L$ is polynomial-time inter-reducible with a problem $L_{nt}$ which contains only ``non-trivial instances'', that is, all instances of $L$ except those that are known to be decidable in polynomial time and that can be classified to be so in polynomial time. Therefore, for the sake of complexity classification, we can exclude trivial instances in every problem. 

We know from \secref{sec:prbounds} that the length of an optimal \gr{}  with $n$ marks is upper-bounded by $t \in \bigO(n^3)$. Thus, for a given number $n$ of marks, there are only polynomially in $n$ many lengths $D$ that form instances which can not trivially be checked in polynomial time. Out of the words of length at most $l$ in the language induced by \grd{} with $n$ in unary, $D$ in binary and without trivial instances, there can be at most $l + 1$ words that represent distinct values of $n$ and thus, there can be at most $t(l+1)$ words of length at most $l$. This means that if $n$ is given in unary in \grd{} and $D$ is encoded binarily, a language defined by this problem that excludes trivial instances again satisfies the condition of \autoref{the:berman}.

Also, since many NP-complete problems have a notion of efficient self-reduction, an approach to get a hint on the complexity of \grd{} would be to search for such a procedure. Self-reduction is a procedure to compute an optimal solution for an optimization problem using an oracle for the corresponding decision problem. Unfortunately, \citet{MP08} note that it seems difficult to efficiently compute a valid \gr, given an oracle for \grd.

\section{Characterizing Golomb Rulers through Hypergraphs}

\label{sec:hgraphchara}

In this section we provide a simple hypergraph characterization of \gr s and consider structural properties of the implied hypergraphs. The characterization serves as base for considerations in the succeeding sections.

\subsection{Hypergraph Construction}
\label{sec:hgraphconstr}

\begin{figure}
  \begin{center}
    \includegraphics{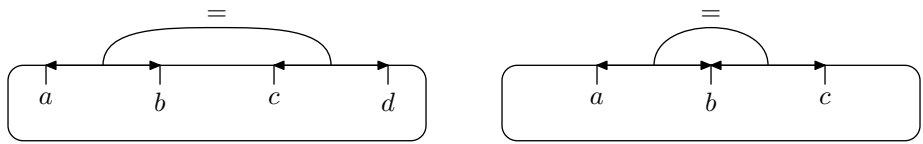}
    \caption{Two rulers with the marks $a,b,c$ and $d$, respectively. To the left, we see that the marks $a$ and $b$ measure the same distance as $c$ and $d$. We consider this to be a conflict with respect to \gr s and model it as an edge~$\{a, b, c, d\}$ in the corresponding hypergraph. To the right we see a degenerated form of a conflict which leads to an edge with only three vertices.}
    \label{fig:distancesedges}
  \end{center}
\end{figure}

We start with a ruler $R \subseteq \mathbb{N}$ and construct a hypergraph with edges consisting of either three or four vertices. Every mark corresponds to one vertex in the graph and every edge to one conflict, i.e., to a distance that is measured by two pairs of marks (see \autoref{fig:distancesedges}). Constructing such a graph can be done by simply iterating over every three and four-tuple of marks and checking whether the marks in the tuple have equal distances. See also \autoref{alg:hgraphconstr}. 
\begin{algorithm}
  \KwIn{A ruler $R \subset \mathbb{N}$.}
  \KwOut{A hypergraph $H_R=(R,E)$.}
  
  Start with an empty hypergraph $H$\;
  Let $R$ be the set of vertices in $H$\;
  \For{$(a, b, c, d) \in R^4, |\{a, b, c, d\} | = 4$}{
    \lIf{$|a - b| = |c - d|$}{add the edge $\{a, b, c, d\}$ to $H$}}
  \For{ $(a, b, c) \in R^3, |\{a, b, c\} | = 3$}{
    \lIf{$|a - b| = |b - c|$}{add the edge $\{a, b, c\}$ to $H$}}
  \Return $H$\;
  \SetAlgoRefName{Hyper\-graph\-Construction}
  \caption{Constructing a characteristic hypergraph for a given ruler}
  \label{alg:hgraphconstr}
\end{algorithm}
\begin{definition}
  In the following sections, we denote the hypergraph constructed from the ruler $R$ by $H_R=(R,E)$ and call it \emph{characteristic hypergraph of $R$}. 
\end{definition}
The construction of $H_R$ is clearly computable in $\bigO(|R|^4)$.
\begin{lemma}
  Let $R$ be a ruler and $H_R=(R,E)$ the corresponding characteristic hypergraph constructed from $R$ using \autoref{alg:hgraphconstr}. Then $R$ is a \gr{} if and only if $E = \emptyset$.
\end{lemma}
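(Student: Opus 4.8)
The plan is to establish the biconditional by proving its two contrapositives: that a nonempty edge set forces $R$ to fail the Golomb property, and that a non-Golomb ruler must produce at least one edge. Both rest on a single dictionary, namely that every edge records a positive distance realized by two distinct pairs of marks, which is exactly a violation of the Golomb condition.

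First I would treat the direction ``$E \neq \emptyset \Rightarrow R$ not Golomb''. I would pick any edge and split on its size. If it is a $4$-edge $\{a,b,c,d\}$, it was added because $|a-b| = |c-d|$ with all four marks distinct; setting $d' := |a-b| > 0$, the distinct unordered pairs $\{a,b\}$ and $\{c,d\}$ both measure $d'$, so $d' = m_i - m_j$ has at least two solutions and $R$ is not Golomb. If it is a $3$-edge $\{a,b,c\}$, it was added because $|a-b| = |b-c| =: d' > 0$, and then the distinct pairs $\{a,b\}$ and $\{b,c\}$ both measure $d'$, giving the same conclusion.

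For the converse ``$R$ not Golomb $\Rightarrow E \neq \emptyset$'', I would start from a positive distance $d'$ realized by two distinct pairs $\{p,q\}$ and $\{r,s\}$ and branch on whether these pairs are disjoint or share a mark. In the disjoint case the four marks are distinct and $|p-q| = |r-s|$, so the $4$-tuple loop of \autoref{alg:hgraphconstr} inserts the edge $\{p,q,r,s\}$. In the shared case, the pairs cannot share two marks (or they would coincide), so they share exactly one mark $x$; the two other marks both sit at distance $d'$ from $x$ yet are distinct, which forces them to be $x - d'$ and $x + d'$, and then the $3$-tuple loop inserts the edge $\{x-d', x, x+d'\}$. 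Either way $E \neq \emptyset$.

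The step I expect to demand the most care is this shared-mark subcase of the converse: I will need to argue cleanly that two distinct pairs realizing the same positive distance and sharing a mark are forced into the arithmetic-progression shape with the shared mark in the middle (the ``degenerate'' conflict of \autoref{fig:distancesedges}), and to confirm that the $3$-tuple loop, which ranges over all ordered triples, actually encounters this configuration in the order $(x-d', x, x+d')$. I will also keep distances strictly positive throughout, so that the two realizing pairs genuinely count as two different solutions of $d' = m_i - m_j$ rather than the trivial solution with $m_i = m_j$.
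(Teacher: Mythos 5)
Your proposal is correct and follows essentially the same route as the paper: both prove the contrapositive biconditional that $R$ fails the Golomb property exactly when $E \neq \emptyset$, translating edges into repeated distances and vice versa. Your treatment is in fact slightly more careful than the paper's, which compresses the shared-mark (3-edge) subcase of the converse into the single phrase ``where at least $a$, $b$ and $d$ are pairwise not equal'' rather than spelling out, as you do, that the shared mark is forced into the middle of an arithmetic progression caught by the 3-tuple loop.
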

\begin{proof}
  We show that both directions of the following equivalent statement hold: The ruler $R$ is not a \gr{} if and only if $E \neq \emptyset$.

  First assume that $E$ is not empty. Thus, there is either an edge $\{a, b, c\}$ or  $\{a, b, c, d\}$ in $E$ and the corresponding equations $|a - b| = |c - d|$ or $|a - b| = |b - c|$, respectively, hold. In other words, $R$ contains a number of marks that have pairwise equal distances. That means $R$ is not a \gr{}.

  Now assume that $R$ is not a \gr{}. Thus, there are two equal differences~$|a - b| = |c - d|$ for some $a,b,c,d \in R$, where at least $a, b$ and $d$ are pairwise not equal. This means that $\{a, b, c, d\} \in E$ and $E \neq \emptyset$.
\end{proof}
We can improve the running time of the construction to $\bigO(|R|^3)$ using a different approach. Instead of simply verifying every possible tuple, one can look at the distances between marks present in the ruler and examine which of them lead to edges in the graph. \ref{alg:hgraphconstrimpr} is a description of such an algorithm.

In this algorithm we use an auxiliary map $M$ to keep track of pairs of marks that measure specific distances. At first, we fill up this map: The first two loops iterate over distances present in $R$ and add every pair of vertices to the entry in $M$ corresponding to their distance. The map $M$ then contains for every necessary distance in $R$ a list with all pairs of marks that measure this distance. In the second step, we add the edges to the designated characteristic hypergraph~$H$: The last three nested loops again iterate over distances present in the ruler and simply add an edge to $H$ for every pair of marks that measure this distance.
\begin{algorithm}
  \LinesNumbered

  \KwIn{A ruler $R \subset \mathbb{N}$.}
  \KwOut{A hypergraph $H_R=(R,E)$.}
  Start with an empty hypergraph $H$\;
  Let $R$ be the set of vertices in $H$\;
  Create an empty map $M$ that maps integers to lists\;
  $\delta_{max} \leftarrow \max\{x:x \in R\} - \min\{x:x \in R\}$\;
  \For{$i \in R$}{
    \For{$j \in R, i < j \leq i + \nicefrac{\delta_{max}}{2}$}{
      Add $(i,j)$ to the list mapped to $j - i$ in $M$\;}}

  \For{$i \in R$}{
    \For{$j \in R, i < j \leq i + \nicefrac{\delta_{max}}{2}$}{
      \For{$(k, l)$ in the list mapped to $j - i$ in $M, j \leq k$}{
        Add the edge $\{i, j, k, l\}$ to $H$\;}}}
  \Return $H$\;
  \SetAlgoRefName{Hyper\-graph\-Construction\-Improved}
  \caption{Constructing a characteristic hypergraph for a given ruler}
  \label{alg:hgraphconstrimpr}
\end{algorithm}
\begin{figure}
  \begin{center}
    \includegraphics{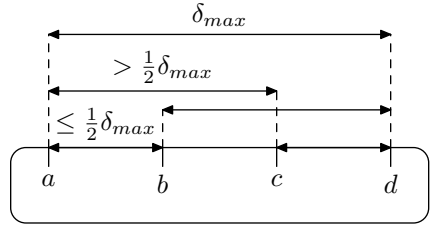}
    \caption{Shown is a ruler with four marks $a, b, c, d$. The pair $a, c$ measures a distance that is above half the maximum distance $\delta_{max}$ measurable by marks on the ruler. The pair $b, d$ measures the same distance as $a, c$. However, because they measure such long distances, the measurements must overlap. By~\autoref{lem:nonoverlap}, there also is another distance (measurable without overlap) that also leads to this conflict. This distance must of course be below $\nicefrac{\delta_{max}}{2}$.}
    \label{fig:longdistances}
  \end{center}
\end{figure}

For the correctness of this algorithm we first make some auxiliary observations:

\begin{lemma}
  \label{lem:nonoverlap}
  Every edge in a characteristic hypergraph is due to two pairs of marks that measure the same distance and the measurements do not overlap.
\end{lemma}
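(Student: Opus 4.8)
The plan is to take an arbitrary edge of $H_R$ and exhibit two equal-distance pairs of marks whose measuring intervals are disjoint. By the construction, every edge arises from two \emph{distinct} pairs of marks $\{p,q\}$ and $\{r,s\}$, say with $p<q$ and $r<s$, satisfying $q-p=s-r=:\delta$; geometrically these are two intervals $[p,q]$ and $[r,s]$ of common length $\delta$ on the ruler. First I would normalize the situation: since the two pairs are distinct but have equal length, their intervals cannot coincide, so after possibly swapping the two pairs I may assume $p<r$ (equality $p=r$ would force $q=s$ and hence identical pairs, $\lightning$). This immediately yields $s=r+\delta>p+\delta=q$, so the four marks are arranged as $p<r$ and $q<s$.

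Next I would split into two cases according to whether the measurements overlap. If $r\ge q$, then $[p,q]$ ends no later than $[r,s]$ begins, the two measurements do not overlap (they meet in at most the single point $q=r$), and the original pairs already witness the edge; this branch also subsumes the three-vertex edges, where the two pairs share a mark and the intervals merely abut, i.e.\ the case $q=r$. The interesting case is $r<q$, where the measurements overlap; here the chain $p<r<q<s$ holds, so all four marks are distinct and the edge is a four-vertex edge. The key algebraic observation is that $q-p=s-r$ is equivalent to $q+r=s+p$, hence to $r-p=s-q$: the defining equation simply re-pairs the same four marks. Therefore the pairs $\{p,r\}$ and $\{q,s\}$ measure the common distance $r-p=s-q>0$, and since $r<q$ their intervals $[p,r]$ and $[q,s]$ are disjoint. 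Because $\{p,r,q,s\}=\{p,q,r,s\}$, these new pairs witness the very same edge, which completes the argument.

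The one step requiring care—and which I expect to be the main obstacle—is the overlapping case, specifically checking that the re-pairing produces genuine, non-degenerate pairs rather than collapsing. The strict inequalities $p<r<q<s$, which follow from distinctness of the original pairs together with equal interval length, are exactly what guarantee $r-p>0$, $s-q>0$, and that the two new intervals do not overlap; I would state them explicitly and exhibit the contradiction $\lightning$ for any attempted coincidence, keeping the bookkeeping airtight. Everything else is a routine rearrangement of the defining equation. Finally I would note the payoff for \autoref{alg:hgraphconstrimpr}: two disjoint intervals of equal length $\delta$ satisfy $2\delta\le\delta_{max}$, so every edge is witnessed by a distance at most $\nicefrac{\delta_{max}}{2}$, which is precisely the range the improved construction scans.
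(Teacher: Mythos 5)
Your argument is correct and is essentially the paper's own proof: the key step in both is the re-pairing identity that $c-a=d-b$ for overlapping intervals $a<b<c<d$ is equivalent to $b-a=d-c$, which yields non-overlapping witnesses for the same edge. You simply make explicit the normalization of the ordering and the trivial non-overlapping case, which the paper leaves implicit.
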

\begin{proof}
  Assume that there are four marks $a < b < c < d$ such that the following equation holds $$c - a = d - b =: \delta$$ That is, the measurements of the pairs $a, c$ and $b, d$ overlap. Then subtracting the overlap $c - b$ from the distance $\delta$ gives $b - a = d - c$, which implies measurements that do not overlap.
\end{proof}
\begin{corollary}
  \label{cor:longdistances}
  Every edge in the characteristic hypergraph of the ruler $R$ is due to a distance that is at most half the maximum distance measurable by marks on $R$ (see \autoref{fig:longdistances}). 
\end{corollary}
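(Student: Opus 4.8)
The plan is to derive the corollary directly from \autoref{lem:nonoverlap}. That lemma already guarantees that every edge of the characteristic hypergraph is witnessed by two pairs of marks whose measurements do \emph{not} overlap, so I may assume from the outset that the conflict underlying a given edge is exhibited by marks $a < b \leq c < d$ of $R$ with $b - a = d - c =: \delta$ and $b \leq c$. The four-vertex edge corresponds to $b < c$, while the degenerate three-vertex edge corresponds to the shared mark $b = c$; the single inequality $c - b \geq 0$ handles both cases uniformly, so no case distinction is needed.

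First I would bound the total span $d - a$ from below in terms of $\delta$. Decomposing the span as
\[
  d - a = (b - a) + (c - b) + (d - c) = \delta + (c - b) + \delta
\]
and using the non-overlap inequality $c - b \geq 0$ gives $d - a \geq 2\delta$, hence $\delta \leq (d - a)/2$.

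Next I would relate the span to the maximum measurable distance. Since $a$ and $d$ are both marks of $R$ with $a < d$, the quantity $d - a$ is a distance measured by a pair of marks on $R$, and so $d - a \leq \delta_{max} = \max R - \min R$. Chaining this with the previous inequality yields $\delta \leq \delta_{max}/2$, which is precisely the asserted bound.

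There is essentially no obstacle here; the one point that warrants care is the reliance on \autoref{lem:nonoverlap}. If one only had an arbitrary (possibly overlapping) pair of witnessing measurements, the span $d - a$ could be as small as $\delta$ itself, and the factor of one half would be lost. It is exactly the non-overlap reduction that forces the span to contain two disjoint copies of the distance $\delta$, and this doubling is what produces the claimed bound of $\delta_{max}/2$.
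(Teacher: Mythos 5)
Your argument is correct and matches the paper's intended reasoning: the paper leaves this corollary without an explicit proof precisely because, once \autoref{lem:nonoverlap} lets one assume non-overlapping witnessing measurements $a < b \leq c < d$ with $b-a = d-c = \delta$, the span $d-a \geq 2\delta$ is bounded by $\delta_{max}$, giving $\delta \leq \delta_{max}/2$. Your write-up simply makes that one-line deduction explicit, including the uniform treatment of the degenerate $b=c$ case.
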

\begin{lemma}
  \autoref{alg:hgraphconstrimpr} constructs a characteristic hypergraph for its input ruler.
\end{lemma}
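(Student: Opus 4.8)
The plan is to prove that the hypergraph $H$ returned by \autoref{alg:hgraphconstrimpr} coincides with the characteristic hypergraph $H_R=(R,E)$ as defined via \autoref{alg:hgraphconstr}. Both have vertex set $R$, so the whole argument reduces to matching the edge sets. First I would pin down precisely which edges the improved algorithm emits. The second nested block adds a set exactly when it processes a pair $(i,j)$ with $j-i=:d\leq\nicefrac{\delta_{max}}{2}$ and finds a pair $(k,l)$ in the list $M[d]$ with $j\leq k$. Since membership in $M[d]$ forces $l-k=d$ and $k<l$, the guard $j\leq k$ exactly expresses that the two measurements do not overlap, i.e. $i<j\leq k<l$. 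I would then split on whether $j<k$ or $j=k$: in the former case all four marks are distinct and $\{i,j,k,l\}$ is a genuine $4$-edge with $|i-j|=|k-l|$; in the latter case the set collapses to $\{i,j,l\}$ with $j$ as midpoint, i.e. a $3$-edge with $|i-j|=|j-l|$.

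Soundness (every emitted edge lies in $E$) is then immediate from this description, because each emitted $4$-edge satisfies the equation checked in the first loop of \autoref{alg:hgraphconstr} and each emitted $3$-edge satisfies the equation checked in its second loop. For completeness (every edge of $E$ is emitted) I would take an arbitrary edge and exhibit the pairs that trigger its emission. For a $4$-edge, relabel its marks as $w<x<y<z$; by \autoref{lem:nonoverlap} the conflict is realized by the non-overlapping pairing $(w,x),(y,z)$ with $x-w=z-y=:d$, and by \autoref{cor:longdistances} we have $d\leq\nicefrac{\delta_{max}}{2}$. Hence both pairs enter $M[d]$ in the first phase, and processing $(i,j)=(w,x)$ against $(k,l)=(y,z)$, which satisfies $j=x\leq y=k$, emits exactly $\{w,x,y,z\}$. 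For a $3$-edge $\{p,q,r\}$ with $p<q<r$, the midpoint relation $q-p=r-q=:d$ gives $2d=r-p\leq\delta_{max}$, so both $(p,q)$ and $(q,r)$ enter $M[d]$, and processing $(p,q)$ against $(q,r)$, with $j=q\leq q=k$, emits $\{p,q,r\}$.

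The step I expect to demand the most care is the completeness argument for $4$-edges: I must verify that the non-overlapping pairing is genuinely the one the algorithm uses and that no alternative pairing of the four marks yields an edge the algorithm could miss. I would therefore examine all three ways of splitting $\{w,x,y,z\}$ into two disjoint pairs. The nested pairing $(w,z),(x,y)$ can never give equal distances, since $z-w=(z-y)+(y-x)+(x-w)>y-x$. The crossing pairing $(w,y),(x,z)$ gives equal distances only when $y-w=z-x$, which by \autoref{lem:nonoverlap} forces $x-w=z-y$ and hence reduces to the non-overlapping pairing already handled. This confirms that every $4$-edge of $E$ is witnessed by a non-overlapping pairing of distance at most $\nicefrac{\delta_{max}}{2}$, so the two phases of \autoref{alg:hgraphconstrimpr} recover it, establishing $H=H_R$ and completing the proof.
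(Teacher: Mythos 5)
Your proposal is correct and follows essentially the same route as the paper's proof: both reduce every conflict to a non-overlapping pairing of distance at most $\nicefrac{\delta_{max}}{2}$ via \autoref{lem:nonoverlap} and \autoref{cor:longdistances}, and then trace the two phases of the algorithm to see that this pairing is recorded in $M$ and later emitted. You are somewhat more thorough than the paper, which argues only the completeness direction by contradiction (covering the $3$-edge case implicitly through the condition $b \leq c$) and leaves soundness and the exhaustive check of the three pairings of a $4$-edge unstated.
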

\begin{proof}
  Assume that there are marks $a < b \leq c < d$ in $R$ such that  $\{a, b, c, d\}$ has not been added to $H$ by the algorithm although the relation~$b - a = d - c \leq \nicefrac{\delta_{max}}{2}$ holds. We can assume the properties of the marks because of \autoref{lem:nonoverlap} and \autoref{cor:longdistances}.

  It is clear that $(a, b)$ and $(c, d)$ have been added to $M$ in lines 5 through 7. Then at some point in the execution of the algorithm, $a$ must occur in line 8 and $b$ in line 9. Because $(d, c)$ has been added to the list mapped to $b-a$ in $M$, the edge $\{a, b, c, d\}$ then is added to $H$. $\lightning$
\end{proof}
\begin{lemma}
  \autoref{alg:hgraphconstrimpr} runs in $\bigO(|R|^3)$~time.
\end{lemma}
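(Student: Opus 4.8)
The plan is to bound the running time of each phase of the algorithm separately and then add them up. Throughout I assume the word-RAM model, in which a single lookup or insertion in the map $M$ costs $\bigO(1)$ time (achievable by hashing, since the number of distinct distances present is $\bigO(|R|^2)$); otherwise every map access incurs an extra logarithmic factor, which I would note explicitly. The setup in lines~1--4 is clearly doable in $\bigO(|R|)$ time: initialising $H$ and $M$ is constant, fixing $R$ as the vertex set is linear, and computing $\delta_{max}$ needs one scan over $R$.

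Next I would handle the filling phase in lines~5--7. Its two nested loops range exactly over the pairs $(i,j)$ with $i<j\le i+\nicefrac{\delta_{max}}{2}$, of which there are at most $\binom{|R|}{2}=\bigO(|R|^2)$. Each iteration performs one $\bigO(1)$ insertion into $M$, so this phase costs $\bigO(|R|^2)$ time.

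The crux is the edge-adding phase in lines~8--11, whose three nested loops naively suggest a cost larger than cubic. The key observation I would use is twofold. First, the outer two loops again iterate over at most $\binom{|R|}{2}=\bigO(|R|^2)$ pairs $(i,j)$. Second, for a fixed distance $\delta$, the list mapped to $\delta$ in $M$ contains only pairs whose difference equals $\delta$; since each such pair is determined by its smaller mark, the list has at most $|R|$ entries. Hence the innermost loop executes at most $|R|$ times for each pair $(i,j)$, and each of its iterations adds a single $4$-edge to $H$ in $\bigO(1)$ time. Multiplying gives $\bigO(|R|^2)\cdot\bigO(|R|)=\bigO(|R|^3)$ for this phase, which dominates the others and yields the claimed bound.

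I expect the main obstacle to be precisely this last count: recognising that although the construction is written as three nested loops over marks, the middle loop is effectively a loop over the $\bigO(|R|^2)$ pairs, while the inner loop ranges over lists that can never exceed $|R|$ entries. An equivalent way to see the bound, which I might include as a sanity check, is to let $c_\delta$ denote the number of recorded pairs at distance $\delta$; the total work in the innermost loop is $\sum_\delta c_\delta^2$, and since $c_\delta\le|R|$ while $\sum_\delta c_\delta\le\binom{|R|}{2}$, this sum is at most $|R|\sum_\delta c_\delta=\bigO(|R|^3)$.
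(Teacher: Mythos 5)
Your proposal is correct and follows essentially the same route as the paper: bound the two outer loops of the edge-adding phase by $\bigO(|R|^2)$ and the innermost loop by the $\bigO(|R|)$ size of any distance list (the paper argues each mark measures a fixed distance at most twice, giving $2|R|$; your observation that a pair is determined by its smaller mark gives the slightly tighter $|R|$, but both yield the same bound). The only cosmetic difference is the map implementation — the paper uses red-black trees with $\bigO(\log|R|)$ access where you assume $\bigO(1)$ hashing — and either choice is absorbed by the dominant cubic term.
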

\begin{proof}
  Obviously the running time is mainly dependent on the last three nested loops in lines 8-10. The two outer loops each iterate at most $|R|$ times. 
Retrieving the list from $M$ in line 10 can be done in $\bigO(\log(|R|))$~time, using red-black trees as implementation for $M$ \cite{CLRS01}. The iteration of the innermost loop in line 10 is bounded by a term in $\bigO(|R|)$, because any fixed distance~$\delta$ between two marks on the ruler $R$ can occur at most $2|R|$ times: $\delta$ can be measured at most two times by one mark with any other mark. Adding the edge to the hypergraph is possible in $\bigO(1)$ using an incidence graph representation of $H$ and adjacency lists for example and thus the running time is in $\bigO(|R|^3)$.
\end{proof}

Note that we omitted long distances in the loop-headers in lines 6 and 9. However, the omission does not influence the asymptotic upper bound on the running time. This is a heuristic trick that could prove useful in practice.

\begin{lemma}
  The upper bound on the running time of \autoref{alg:hgraphconstrimpr} is tight.
\end{lemma}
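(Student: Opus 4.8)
The plan is to exhibit, for arbitrarily large $n$, a single ruler on $n$ marks that forces \autoref{alg:hgraphconstrimpr} to execute the body of its innermost loop (line~11) at least $\bigOmega(n^3)$ times. Since the preceding lemma shows the total running time is $\bigO(|R|^3)$ and here $n = |R|$, this matching lower bound establishes that the stated bound is tight. The natural candidate is the \emph{common ruler} $R_n := \{1, 2, \ldots, n\}$, i.e.\ the densest possible ruler with $\delta_{max} = n-1$. Intuitively, this ruler is as far from being a \gr{} as possible and so should generate the largest conflict structure, making the triply nested loop in lines 8--11 fire as often as the analysis allows.

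The combinatorial fact driving the argument is that in $R_n$ every distance $\delta \in \{1, \ldots, n-1\}$ is measured by exactly $n - \delta$ pairs of marks, namely the pairs $(i, i+\delta)$ with $1 \le i \le n - \delta$. After lines 5--7 fill the map $M$, the list stored under the key $\delta$ therefore contains these $n-\delta$ pairs. The body of the innermost loop is executed once for every triple consisting of a distance $\delta \le \delta_{max}/2$, an outer pair $(i, i+\delta)$ chosen in lines 8--9, and an inner pair $(k, k+\delta)$ taken from the list that satisfies the guard $i + \delta \le k$. Moreover the guard prevents any double counting, so it suffices to bound the number of such triples from below.

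To obtain a clean cubic bound I would avoid evaluating the exact sum and instead restrict to a convenient subfamily of triples. Take any distance $\delta \in \{1, \ldots, \lfloor n/4 \rfloor\}$ (about $n/4$ choices) and any outer left endpoint $i \in \{1, \ldots, \lfloor n/4 \rfloor\}$ (about $n/4$ choices), so that $j := i + \delta \le n/2$. For the inner pair we may then pick any $k$ with $j \le k \le n - \delta$; since $j \le n/2$ and $n - \delta \ge 3n/4$, the integer interval $[j, n-\delta]$ contains at least $n/4$ values of $k$, each giving a valid inner pair $(k, k+\delta)$ in the list for $\delta$. Distinct $(\delta, i, k)$ yield distinct loop executions, so multiplying the three counts gives at least $(n/4)^3 = n^3/64$ executions of the innermost loop body, hence $\bigOmega(n^3)$ work overall. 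Combined with the upper bound of the preceding lemma, this shows \autoref{alg:hgraphconstrimpr} runs in $\Theta(|R|^3)$ time in the worst case, so the bound is tight.

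I expect the only real subtlety — the rest being routine bookkeeping — to be confirming that the omission of long distances in the loop headers (the constraints $j \le i + \delta_{max}/2$ in lines 6 and 9) does not discard any of the counted triples. This holds precisely because every chosen distance satisfies $\delta \le n/4 \le (n-1)/2 = \delta_{max}/2$, so both the outer pairs in lines 8--9 and the pairs inserted into $M$ in lines 5--7 survive the guard; the remaining task is merely to verify that the guard $i + \delta \le k$ together with $k + \delta \le n$ leaves a linearly large range for $k$ over a quadratically large set of $(\delta, i)$ pairs, which the estimates above make explicit.
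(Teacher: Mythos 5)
Your proof is correct and uses the same witness family as the paper---the ruler of consecutive integers---and essentially the same cubic triple count over (distance, left pair, right pair); the sub-box restriction $\delta, i \le n/4$ is just a clean way to avoid evaluating the sum exactly. The one genuine difference is \emph{what} you count. You lower-bound the number of executions of the innermost loop body directly, indexing iterations by $(\delta, i, k)$, which are trivially pairwise distinct, whereas the paper lower-bounds the number of distinct \emph{edges} of the characteristic hypergraph. Because a single edge $\{a,b,c,d\}$ could in principle arise from two different distances, the paper must add a de-duplication argument (if an edge arises from distances $\delta_1 < \delta_2$, the $\delta_2$-measurements overlap, so only the $\delta_1$-instance appears in the non-overlapping enumeration) to ensure its sum counts distinct edges. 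Your route sidesteps that argument entirely and is therefore a bit more elementary; the trade-off is that the paper's version establishes the slightly stronger fact that the output itself has $\bigOmega(|R|^3)$ edges, so \emph{no} algorithm that writes the characteristic hypergraph explicitly can beat cubic time, while your argument is specific to \autoref{alg:hgraphconstrimpr}. Your closing check that the chosen triples survive the $\delta \le \nicefrac{\delta_{max}}{2}$ guards in lines 6 and 9 is exactly the right detail to verify, and it goes through.
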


\begin{proof}
  There are characteristic hypergraphs that contain $\bigOmega(|R|^3)$~edges: This holds for graphs constructed from rulers whose marks are taken consecutively from $\mathbb{N}$. Consider the ruler $R = \{0, 1, ..., n\} \subset \mathbb{N}$ for even~$n$; obviously every distance up to $n$ is measured by marks in~$R$. Let $\delta \leq \nicefrac{n}{2}$ be a fixed distance measured. How many possibilities are there to choose two pairs of marks such that they both measure $\delta$ and the measurements do not overlap? One can place one pair leftmost on the ruler, count every possible placement of the other pair to the right, then move the first one to the right by one and iterate. Summing over every distance up to $\nicefrac{n}{2}$ this gives a lower bound on the number of edges in~$H_R$:

\begin{equation*}
  \sum\limits_{\delta=1}^{\nicefrac{n}{2}}\sum\limits_{j=0}^{n - 2\delta}\sum\limits_{k=j+\delta}^{n-\delta}1 \in \bigOmega(n^3)
\end{equation*}

No edge is counted twice here, because if there is an edge due to two different distances $1 \leq \delta_1 < \delta_2$, the measurements of $\delta_2$ overlap: let $a < b$ and $c < d$ both measure $\delta_1$ with $b \leq c$. Where can $\delta_2$ be placed? Certainly not between $a,b$ and the other pair and also not between the pairs $a, d$ and $b,c$. It can only be measured by both of the pairs $a,c$ and $b,d$ and thus the measurements overlap. However, in the above construction, we are only counting non-overlapping measurements.
\end{proof}

In conclusion, the following theorem now readily follows:

\begin{theorem}
  There is a hypergraph characterization for rulers such that \gr s and only \gr s correspond to hypergraphs without edges. The characteristic hypergraph for a ruler $R$ can be computed in time $\bigO(|R|^3)$ and this bound is also tight.
\end{theorem}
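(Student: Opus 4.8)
The plan is to observe that this concluding theorem merely collects results already established by the lemmas of this section, so the proof amounts to assembling three finished pieces in the right order rather than proving anything new.

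First, for the characterization claim, I would invoke the first lemma of this section, which states that for a ruler $R$ the characteristic hypergraph $H_R = (R, E)$ satisfies $E = \emptyset$ if and only if $R$ is a \gr{}. This is precisely the assertion that \gr s, and only \gr s, correspond to edge-free hypergraphs; no further argument is needed beyond noting that this property is intrinsic to the characteristic hypergraph of $R$ and therefore independent of which construction algorithm produced it.

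Second, for the running-time upper bound, I would combine the lemma stating that \autoref{alg:hgraphconstrimpr} indeed outputs a characteristic hypergraph for its input ruler with the subsequent lemma bounding its running time by $\bigO(|R|^3)$. Together these exhibit an algorithm computing $H_R$ within the claimed time. For tightness, I would appeal to the lemma showing that characteristic hypergraphs can have $\bigOmega(|R|^3)$ edges, witnessed by the ruler $R = \{0, 1, \ldots, n\}$. Since any algorithm that constructs $H_R$ explicitly must at least write down all of its edges, this output-size lower bound forces $\bigOmega(|R|^3)$ time in the worst case, matching the upper bound and thereby establishing optimality up to constant factors.

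I do not anticipate a genuine obstacle, as every ingredient is already in hand. The only point that deserves a word of care is the interplay between the two construction algorithms: the characterization lemma is phrased for \autoref{alg:hgraphconstr}, whereas the efficiency and tightness lemmas are phrased for \autoref{alg:hgraphconstrimpr}. I would make explicit that both refer to the same object, namely the characteristic hypergraph of $R$ as fixed by its definition, so that chaining the characterization statement together with the efficiency bounds is legitimate and the three claims combine into the single theorem.
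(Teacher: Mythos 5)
Your proposal matches the paper exactly: the paper offers no separate proof, stating only that the theorem ``readily follows'' from the preceding lemmas (the edge-emptiness characterization, the correctness and $\bigO(|R|^3)$ running-time bounds for \autoref{alg:hgraphconstrimpr}, and the $\bigOmega(|R|^3)$ edge-count lower bound). Your additional remark that the characterization and efficiency lemmas both refer to the same object, the characteristic hypergraph of $R$, is a reasonable point of care that the paper leaves implicit.
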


\subsection{Notes on the Structure of the Characteristic Hypergraphs}
\label{sec:hgraphstruc}

At first, notice that the set of hypergraphs that can be constructed from rulers as in \secref{sec:hgraphconstr} is a strict subset of all hypergraphs with edges of size three and four. This is because the construction algorithm can be carried out using $\bigO(n^3)$ edge additions, $n$ being the number of marks and thus vertices. However, generic 3,4-hypergraphs can have $\binom{n}{4} \in \bigOmega(n^4)$ edges.

\subsubsection{Forbidden Substructures}

It would be very interesting to determine which sort of hypergraphs can and cannot be constructed; for example through a forbidden subgraph characterization. That is, a set~$F$ of hypergraphs, such that a 3,4-hypergraph~$H$ is a characteristic hypergraph for a ruler if and only if~$H$ does not contain a graph~$G \in F$ in the sense of graph minors, subgraphs or induced subgraphs.  Such a characterization would most likely be very useful in creating efficient algorithms for the problems we discuss in \secref{sec:fpt}. Unfortunately, we were not able to find a forbidden subgraph characterization. However, examples of forbidden and forbidden induced subgraphs have been found, which we discuss now.

\begin{figure}
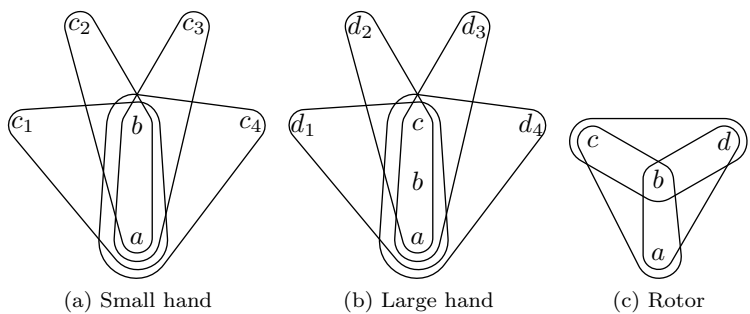

  \begin{center}
    \subfloat[Small hand]{
      \includegraphics{size3-4tooth-fork.1}
      \label{fig:s34tfork}
    }
    \subfloat[Large hand]{
      \includegraphics{size4-4tooth-fork.1}
      \label{fig:s44tfork}
    }
    \subfloat[Rotor]{
      \includegraphics{lollipop.1}
      \label{fig:lollipop}
    }
    
    \caption{Forbidden subgraphs}
    \label{fig:forbiddensub}
  \end{center}
\end{figure}

\begin{lemma}[Small hand forbidden subgraph]
  \label{lem:smallhand}
  The subgraph shown in \autoref{fig:s34tfork} cannot occur in a characteristic hypergraph.
\end{lemma}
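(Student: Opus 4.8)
The plan is to translate the combinatorial claim into arithmetic on the marks and then finish by a counting argument. The small hand of \autoref{fig:s34tfork} is a ``fork'' consisting of one common pair of vertices --- call their marks $u$ and $v$ --- together with four further ``fingertip'' vertices, each of which forms a $3$-edge with the pair $\{u,v\}$. So I would assume, for contradiction, that some ruler $R$ has a characteristic hypergraph $H_R$ containing this configuration as a subgraph, and fix six \emph{distinct} marks $u, v, w_1, w_2, w_3, w_4 \in R$ realising it, where the four teeth are the $3$-edges $\{u, v, w_i\}$ for $i \in \{1,2,3,4\}$.

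The first step is to recall what a $3$-edge encodes. By \autoref{alg:hgraphconstr}, a $3$-set is an edge exactly when one of its marks is equidistant from the other two, i.e.\ when one of them is the arithmetic mean of the remaining pair (the degenerate conflict of \autoref{fig:distancesedges}). The crucial observation is that, although the drawing suggests which vertex plays the role of the midpoint, the hypergraph itself does \emph{not} pin this down; hence I must argue over every possible assignment of the midpoint role. For a single tooth $\{u, v, w_i\}$ there are exactly three cases: either $w_i$ is the midpoint, forcing $w_i = \tfrac{u+v}{2}$; or $u$ is the midpoint, forcing $w_i = 2u - v$; or $v$ is the midpoint, forcing $w_i = 2v - u$. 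In every case $w_i$ is forced to equal one of the three values $\tfrac{u+v}{2},\ 2u-v,\ 2v-u$, all of which depend only on $u$ and $v$.

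The argument then closes by pigeonhole: there are four fingertips $w_1, \dots, w_4$, but only three admissible values are available once $u$ and $v$ are fixed. Hence two of the $w_i$ must coincide, contradicting the fact that the vertices of a subgraph correspond to distinct marks. $\lightning$

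The step I expect to be the main obstacle --- and essentially the only place where care is needed --- is the second one: allowing the midpoint role to be ambiguous and checking that in all three cases $w_i$ is confined to the \emph{same} three-element set determined by $u$ and $v$; everything else (distinctness of the marks and the count $4 > 3$) is routine. I would also note that this scheme explains why three teeth are not yet forbidden, so the threshold of four teeth drawn in the figure is sharp, and that an analogous ``at most three completions'' count for a shared \emph{triple} of vertices is exactly what will drive the large-hand case of \autoref{fig:s44tfork}.
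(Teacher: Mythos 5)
Your proof is correct and follows the same route as the paper: fix the shared pair, observe that in a $3$-edge one mark must be the exact midpoint of the other two, enumerate the three possible roles (and hence the three possible positions $\tfrac{u+v}{2}$, $2u-v$, $2v-u$ for the third mark), and conclude by pigeonhole that a fourth tooth is impossible. The paper states this more tersely but the argument is identical.
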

\begin{proof}
  In an edge with three marks there is one mark exactly between the other two. Let $a,b$ be two marks on a ruler. Where can a third mark in an edge already comprising $b$ and $a$ be? Either $a$, $b$, or the new mark can be the mark in the middle, and thus there are at most three edges with three vertices intersecting in $a$ and $b$.
\end{proof}

\begin{lemma}[Large hand forbidden subgraph]
  \label{lem:largehand}
  The subgraph shown in \autoref{fig:s44tfork} cannot occur in a characteristic hypergraph.
\end{lemma}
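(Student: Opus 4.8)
The plan is to mirror the counting argument used for the small hand (\autoref{lem:smallhand}), but now for $4$-edges sharing a common triple rather than $3$-edges sharing a common pair. I read the large hand as a collection of four distinct $4$-edges $\{a,b,c,d_1\},\dots,\{a,b,c,d_4\}$ that all contain a fixed triple $\{a,b,c\}$ (the ``palm''), with four distinct ``finger'' vertices $d_1,\dots,d_4$. The goal is to show that at most three such edges can exist, so the fourth tooth is impossible. (Note that the analogous question for a shared \emph{pair} does not bound the number of $4$-edges, which is why the palm must have size three here.)

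First I would fix three marks $a<b<c$ and ask for which fourth marks $d$ the set $\{a,b,c,d\}$ is an edge of the characteristic hypergraph. By \autoref{lem:nonoverlap}, a $4$-edge arises exactly from two non-overlapping pairs of equal distance; for four sorted marks $w<x<y<z$ this is equivalent to the outer-gap condition $x-w=z-y$. The two remaining ways of splitting the four marks into disjoint pairs either reduce to this one after subtracting the overlap or are outright impossible, since the outermost distance $z-w$ strictly exceeds every inner distance. So an edge $\{a,b,c,d\}$ exists precisely when, in the sorted order of the four marks, the leftmost gap equals the rightmost gap.

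Next I would split into the three positions the new mark $d$ can take relative to $a<b<c$: to the left of $a$, strictly between $a$ and $c$, or to the right of $c$. In each case the sorted order is determined and the outer-gap equation becomes linear in $d$, pinning $d$ to a single value; a short computation gives the three candidates $a+b-c$, $a+c-b$, and $b+c-a$, each being the reflection of one of the three marks through the midpoint of the other two. The key observation is that the two sub-cases $a<d<b$ and $b<d<c$ yield the \emph{same} candidate $a+c-b$, so the three positions produce at most three distinct values of $d$, not four. Since these three candidates are pairwise distinct whenever $a,b,c$ are distinct, a fixed triple lies in at most three $4$-edges, and the four-tooth configuration of \autoref{fig:s44tfork} cannot occur.

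The part needing the most care is the reduction of all possible conflicts on four marks to the single outer-gap equation, so that each position of $d$ contributes exactly one candidate; this is where \autoref{lem:nonoverlap} does the real work, ruling out the overlapping pairings that would otherwise suggest additional edges. Everything after that is routine linear arithmetic together with the check that the two interior positions collapse to one value.
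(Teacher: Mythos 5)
Your proof is correct and reaches the paper's conclusion by essentially the same route: both arguments show that a fixed triple $a<b<c$ admits at most three candidate positions for the fourth mark, namely $a+b-c$, $a+c-b$, and $b+c-a$. The only difference is bookkeeping --- the paper enumerates six candidates by choosing which pair of $\{a,b,c\}$ realizes the repeated distance and then observes that each candidate is counted twice, while you case on the position of $d$ in the sorted order and use \autoref{lem:nonoverlap} to reduce everything to the single outer-gap equation, which collapses the two interior positions to one candidate directly.
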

\begin{proof}
  In an edge with four marks there are two pairs of vertices that measure the same distance. We choose one unordered pair from $\{a, b, c\}$ and thus define the distance that caused the edge. Then, for the fourth mark, there are only two possible positions left. Multiplying this with the number of possible unordered pairs, one gets six as an upper bound for such edges intersecting in three marks.

  However, every edge is counted twice here: Assume $a < b$ has been chosen as pair. Then a fourth mark $d$ is defined as follows:
  $$
  \begin{array}{rclcrcl}
    d & = & c - (b - a) & \vee & d & = & c + (b - a)
  \end{array}
  $$
  These are the two possibilities for the fourth mark that are counted above. But we can reformulate these equations as follows:
  $$
  \begin{array}{rclcrcl}
    d & = & a + (c - b) & \vee & d & = & b + (c - a) 
  \end{array}
  $$
  These equations imply that the possibilities that are counted for the pair $a, b$ are identified with one that is counted for the pair $b, c$ and one that is counted for $a, c$. This observation holds for all possible pairs from $\{a, b, c\}$ and thus every possible location for $d$ is counted twice. This means that there are at most three edges comprising four vertices that intersect in three vertices.
\end{proof}

\begin{lemma}[Rotor forbidden subgraph]
  The subgraph shown in \autoref{fig:lollipop} cannot occur in a characteristic hypergraph.
\end{lemma}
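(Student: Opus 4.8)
The plan is to read the rotor off \autoref{fig:lollipop} as a system of metric constraints on the underlying marks and then to show that this system is unsatisfiable over distinct integers (in fact over distinct reals). The two basic translations established above do the bookkeeping. A $3$-edge $\{p,q,r\}$ forces its three marks into arithmetic progression, i.e.\ the median of $\{p,q,r\}$ equals the arithmetic mean of the other two---this is exactly the ``midpoint'' observation used in the proof of \autoref{lem:smallhand}. A $4$-edge $\{p,q,r,s\}$ forces, by \autoref{lem:nonoverlap}, two \emph{non-overlapping} pairs to measure the same distance, i.e.\ an equation of the shape $q-p = s-r$ once the marks are suitably ordered. So the first step is simply to write down, for each edge of the rotor, its midpoint- or equal-distance equation.

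Because the rotor is a cyclic, rotationally symmetric configuration, these equations are not independent: chaining them around the rotor produces a closed relation among the marks. The second step is therefore to normalize the picture---translate so that the central (hub) vertex sits at $0$ and fix a total order of the marks on the real line---and then to propagate the per-edge equations around the cycle. I expect this to collapse to a single linear relation that either forces two marks to coincide (violating the fact that a ruler is a set of \emph{distinct} integers) or contradicts the chosen ordering; in either case the argument closes with~$\lightning$.

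The genuine work, and the step I expect to be the main obstacle, is the case analysis, since the figure prescribes only the combinatorial incidences and not the metric arrangement. For each $3$-edge we do not know a priori which of its three marks is the midpoint, and for each $4$-edge we do not know which pairing realizes the common distance nor in which order the two pairs lie. Every such choice must be examined, and we must show that \emph{every} consistent assignment of midpoints and pairings yields a contradiction, not merely the generic one. To keep this tractable I would exploit the rotational symmetry of the rotor to identify subcases that are equivalent up to relabeling of the blades, thereby cutting the number of genuinely distinct subcases, and I would invoke \autoref{cor:longdistances} to discard arrangements whose measurements would be forced to overlap. Once the representative subcases are dispatched, the lemma follows, establishing that the rotor cannot occur as a subgraph of any characteristic hypergraph.
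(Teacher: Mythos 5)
Your overall strategy---turn each edge into a metric equation, normalize by fixing a total order, and rule out every consistent case---is exactly the paper's strategy. But the proposal stops precisely where the proof begins: the entire content of the lemma is the case analysis that you defer with ``I expect this to collapse to a single linear relation.'' Nothing in the proposal verifies that the rotor's constraint system is actually unsatisfiable, and ``every consistent assignment of midpoints and pairings yields a contradiction'' is the claim to be proved, not a step you may assume. As written, this is a plan for a proof, not a proof.

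The good news is that the analysis you fear is much smaller than you anticipate, because the rotor contains no $4$-edges at all: it is four vertices $a,b,c,d$ with the three $3$-edges $\{a,b,c\}$, $\{a,b,d\}$, $\{b,c,d\}$, all through the hub $b$. So only midpoint equations arise, and the pairing/ordering ambiguities for $4$-edges never enter (neither does \autoref{cor:longdistances}). Since every permutation of $\{a,c,d\}$ is an automorphism of the edge set, one may assume $a<c<d$ and only needs to place $b$, with $b<c$ and $b>c$ symmetric under reflection. In the case $b<c$: the edge $\{b,c,d\}$ forces $c=(b+d)/2$; in the edge $\{a,b,d\}$ the mark $d$ is maximal, so its midpoint is $a$ or $b$---if it is $a$ then $a=(b+d)/2=c$, contradicting distinctness, so $a<b$ and $b=(a+d)/2$; then the edge $\{a,b,c\}$ with $a<b<c$ forces $b=(a+c)/2$, whence $c=d$, again a contradiction. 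That three-line chain is the proof; you should carry it out rather than gesture at it.
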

\begin{proof}
  Fix a total ordering of the three marks in $\{a, c, d\}$. We then try to position $b$ in that ordering and find that all possibilities lead to a contradiction. Because of the symmetry of the graph we can look at one specific ordering without loss of generality. 

Assume that $a < c < d$. Where can $b$ be put? Assume that $b < c$. Because of the edge $\{b, c, d\}$, $c$ is half-way between $b$ and $d$. The edge $\{a, b, d\}$ implies that either $a = c$ ($\lightning$) or $a < b$. But then, because $a, b$ are in one edge with $c$ and in one with $d$, $c = d$. $\lightning$ The case $b > c$ is symmetric.
\end{proof}

Beside forbidden subgraphs, there are also some forbidden induced subgraphs:

\begin{figure}
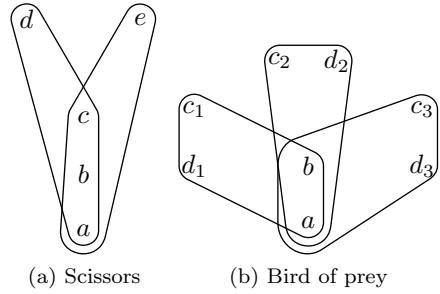

  \begin{center}
    \subfloat[Scissors]{
      \includegraphics{scissors.1}
      \label{fig:scissors}
    }
    \subfloat[Bird of prey]{
      \includegraphics{butterfly.1}
      \label{fig:butterfly}
    }
    
    \caption{Forbidden induced subgraphs}
    \label{fig:forbiddeninducedsub}
  \end{center}
\end{figure}

\begin{lemma}[Scissors forbidden induced subgraph]
  The graph shown in \autoref{fig:scissors} is a forbidden induced subgraph in a characteristic hypergraph.
\end{lemma}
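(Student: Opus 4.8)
The plan is to argue by contradiction. I would assume that the scissors graph occurs as an \emph{induced} subgraph of some characteristic hypergraph $H_R$, translate its two size-$3$ edges into constraints on the positions of the five marks involved, and then exhibit a \emph{further} edge among those same five marks that the constraints force. Since that forced edge is absent from the scissors graph itself, the subgraph of $H_R$ induced on those five vertices cannot be isomorphic to the scissors, giving the contradiction. The whole argument thus hinges on turning the edge structure into arithmetic and reading off one forced equal distance.

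Reading off \autoref{fig:scissors}, the scissors consists of five vertices carrying exactly two size-$3$ edges that share a single common vertex, and that shared vertex is the central (middle) mark of both edges. Recall that a size-$3$ edge $\{x,y,z\}$ arises from an equation $|x-y|=|y-z|$, so its middle mark lies exactly halfway between the other two (this is the observation already used in the proof of \autoref{lem:smallhand}). Hence, writing $m$ for the shared central mark and using the translation invariance of \gr s to put $m=0$, the two edges become $\{-s,0,s\}$ and $\{-t,0,t\}$ for positive integers $s,t$; the marks are distinct, so $s\neq t$, and after relabelling the two edges we may assume $s>t>0$.

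Ordering the marks as $-s<-t<0<t<s$, I would then observe that the pair $(-s,-t)$ and the pair $(t,s)$ both measure the distance $s-t$, and that the intervals $[-s,-t]$ and $[t,s]$ are disjoint, so these two measurements do not overlap. By the construction of $H_R$ (in the non-overlapping form guaranteed by \autoref{lem:nonoverlap}) the four marks therefore form a size-$4$ edge $\{-s,-t,t,s\}$ of $H_R$. All four of these marks lie in the vertex set of the supposed scissors, so this edge belongs to the induced subgraph of $H_R$ on those vertices, yet it is not an edge of the scissors graph. $\lightning$

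The computation here is short, so the only real obstacle is conceptual rather than technical: one must confirm that the configuration drawn really is the common-midpoint crossing, since it is precisely this feature that forces the extra edge. Indeed, two size-$3$ edges sharing an \emph{endpoint}, or sharing the midpoint of one edge with an endpoint of the other, do not in general force any additional equal distance and can therefore appear as genuine induced subgraphs. The proof must make explicit that it is the symmetric crossing of \autoref{fig:scissors}, with the pivot as the centre of both edges, that is being ruled out.
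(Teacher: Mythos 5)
You have proved a statement about the wrong graph. The scissors of \autoref{fig:scissors} is not a pair of 3-edges crossing at a common midpoint: it consists of five vertices $a,b,c,d,e$ carrying two \emph{4-edges} $\{a,b,c,d\}$ and $\{a,b,c,e\}$ that intersect in the three vertices $a,b,c$. The paper's proof picks, for each of the two 4-edges, the pair inside $\{a,b,c\}$ that realizes its conflict (without loss of generality $\{a,b\}$ and $\{b,c\}$, after disposing of the trivial case where both edges use the same pair), writes $d = c \pm (b-a)$ and $e = a \pm (c-b)$, rules out the sign combination forcing $d=e$, and in every remaining case reads off $|e-m| = |m-d|$ for some $m \in \{a,b,c\}$; the forced extra edge is therefore a 3-edge $\{d,e,m\}$, not a 4-edge. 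None of this appears in your argument.

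More importantly, the configuration you analyze cannot be what the lemma asserts, and your own closing caveat is the reason why. A forbidden induced subgraph is an abstract hypergraph; ``two 3-edges whose shared vertex is the midpoint of both'' is not expressible at that level, because being the midpoint is a property of the geometric realization, not of the incidence structure. And the abstract hypergraph ``two 3-edges sharing exactly one vertex'' is genuinely realizable: the ruler $\{0,1,2,5,8\}$ has characteristic hypergraph consisting of exactly the two 3-edges $\{0,1,2\}$ and $\{2,5,8\}$ and nothing else. So the statement you set out to prove is false as a forbidden-induced-subgraph claim, which should have signalled that the figure depicts something else. Your arithmetic for the common-midpoint realization (the forced 4-edge $\{-s,-t,t,s\}$ via \autoref{lem:nonoverlap}) is correct as far as it goes, but it does not establish the lemma.
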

\begin{proof}
  We show that in such a configuration, an edge comprising $d, e$ and exactly one mark from $\{a, b, c\}$ must also be present.
  
  We again use the fact that edges comprising four marks are due to two pairs of them having the same distances. Choose two pairs from $\{a, b, c\}$ corresponding to the two edges. If those pairs comprise the same marks, the statement is trivial. If not, then the pairs must share one mark. Without loss of generality, let the pairs be $a < b$ and $b < c$. The following equations hold:
\begin{align*}
  \begin{split}
    d = c \pm (b - a) 
  \end{split}
  \begin{split}
    e = a \pm (c - b)
  \end{split}
\end{align*}
Note that the sign before $(b - a)$ cannot be negative at the same time with the sign of $(c - b)$ being positive. Otherwise this would imply that $d = e$. In any other case, the two terms on the right hand side of the equations differ only in the sign of exactly two variables. This means that there exists an $m \in \{a, b, c\}$ such that the following equation holds.
\begin{equation*}
  |e - m| = |m - d|
\end{equation*}
Thus there is one edge $\{d, e, m\}$.
\end{proof}

\begin{lemma}[Bird of prey forbidden induced subgraph]
  \label{lem:butterfly}
  The graph shown in \autoref{fig:butterfly} is a forbidden induced subgraph in a characteristic hypergraph.
\end{lemma}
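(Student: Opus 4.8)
The plan is to follow the same strategy that settles the Scissors forbidden induced subgraph: I translate every edge drawn in the bird-of-prey configuration into a distance relation among the marks it contains, and then show that these relations already force a further conflict edge on the very same set of marks. Since an induced subgraph of a characteristic hypergraph must contain \emph{every} edge that the hypergraph places on its vertex set, the forced appearance of an edge that is not drawn in \autoref{fig:butterfly} contradicts the assumption that the configuration occurs as an induced subgraph, and \autoref{lem:butterfly} follows.

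Concretely, I would first record what each drawn edge asserts. By \autoref{lem:nonoverlap} each $4$-edge says that two disjoint pairs of its marks measure a common, non-overlapping distance, which I write as an equation of the form $p = r \pm \delta$ for a shared distance $\delta$; each $3$-edge says that one of its marks sits exactly midway between the other two. Exploiting the symmetry of the figure I then fix a single representative total ordering of the marks without loss of generality, just as in the Rotor and Scissors proofs, so that the $\pm$ ambiguities shrink to a manageable list. Solving the resulting system expresses the outer ``wing'' marks in terms of the shared ``body'' marks through relations of exactly the shape used in the Scissors argument, namely each wing mark equals a body mark plus or minus one of the two governing distances.

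The decisive step is then the sign bookkeeping. For every admissible combination of signs I would check the two possible outcomes: either two of the nominally distinct marks are forced to coincide, so that the configuration degenerates and cannot exist as drawn ($\lightning$), or two wing marks end up equidistant from a common body mark, which by \autoref{lem:nonoverlap} is itself a conflict and hence an edge missing from the figure. Ruling out the single sign pattern that would let the extra edge vanish---by showing it collapses two marks---closes every escape. I expect the main obstacle to be precisely this case distinction: several orderings and sign choices must be handled, and the symmetry reductions have to be justified carefully so that no genuinely distinct configuration is silently dropped. Apart from that bookkeeping, the argument reduces to the same elementary arithmetic of differences that drives the preceding forbidden-subgraph lemmas.
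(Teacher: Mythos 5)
Your overall strategy---derive one additional edge on the marks of the configuration and conclude that it cannot occur as an \emph{induced} subgraph---matches the paper's, but the way you plan to extract that edge does not transfer from the Scissors lemma to this one, and the gap is substantive. In the Scissors configuration the two $4$-edges share \emph{three} marks, so each wing mark is determined by the shared marks up to a single sign, and a finite sign/ordering bookkeeping really does exhaust all cases. In the bird-of-prey configuration the three $4$-edges share only the two marks $a,b$, so each edge $\{a,b,c_i,d_i\}$ carries its own free conflict distance, and the wing marks are \emph{not} expressible as ``a body mark plus or minus one of two governing distances.'' There are infinitely many non-isomorphic placements (one free distance per edge), and the figure's symmetry only permutes the three edges---it does not reduce the analysis to a single representative total ordering. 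Moreover, the conclusion you aim for---two wing marks equidistant from a common body mark, i.e.\ a $3$-edge through $a$ or $b$---is false in general: take $a=0$, $b=10$, $e_1=\{0,10,-1,11\}$ (conflict distance $1$), $e_2=\{0,10,-4,14\}$ (conflict distance $4$); no $3$-edge through $0$ or $10$ is forced among these marks, but the $4$-edge $\{-4,-1,11,14\}$ is, since $-1-(-4)=3=14-11$. So the sign bookkeeping as you describe it would, in such configurations, fail to locate any missing edge.

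The missing idea is a classification plus pigeonhole step. The paper sorts each $4$-edge through $\{a,b\}$ into one of two cases according to how its two remaining marks sit relative to $a$ and $b$ (equivalently, whether the pair $\{a,b\}$ itself realizes the conflict distance or not). With three edges and two cases, two edges $e_i,e_j$ must fall into the same case, and then either both remaining pairs measure the distance $b-a$ (second case), or subtracting the two defining equations $a-c_i=d_i-b$ and $a-c_j=d_j-b$ yields $c_i-c_j=d_j-d_i$ (first case); in both situations one obtains the forced edge $(e_i\cup e_j)\setminus\{a,b\}$ on the four wing marks. Without that classification your case analysis has no finite bound, and it targets the wrong kind of forced edge.
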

\begin{proof}
  We show that there is a case distinction with two cases for 4-edges that contain two fix vertices~$a,b$ and if there are two 4-edges~$e_i, e_j$ that contain $a, b$ and correspond to the same case, then there is an edge $e_i \cup e_j \setminus \{a, b\}$.

  This implies the statement of the lemma, because if there are three edges that intersect in two vertices then at least one additional edge must exist, making the graph in \autoref{fig:butterfly} a forbidden induced subgraph.
  
  The case distinction is as follows: Fix $a < b$ on a ruler. For the marks~$c_i$ and $d_i$ of an edge~$e_i := \{a, b, c_i, d_i\}$ there are two cases: In the first case, both are between $a$ and $b$ or one to the left of $a$ and one to the right of $b$. In the second case either both marks are positioned left of $a$ or both marks are positioned right of $b$ or one mark is positioned left of $a$ or right of $b$ and the other mark between the two. This distinction is exhaustive.

  The existence of the additional edges can be seen as follows:
  
  For edges $e_i = \{a, b, c_i, d_i\}$ that correspond to the second case, it is clear that the marks in~$e_i \setminus \{a. b\}$ must have the same distance as $a$ and $b$. Thus the existence of extra edges trivially follows in this case.

  Now assume the edges~$\{a, b, c_i, d_i\}$ and $\{a, b, c_j, d_j\}$ are present in a characteristic hypergraph and correspond to the first case. Without loss of generality we assume $c_i < d_i$ and $c_j < d_j$. Then the following equations hold:
\begin{align*}
  \begin{split}
    a - c_j = d_j - b
  \end{split}
  \begin{split}
    a - c_i = d_i - b
  \end{split}
\end{align*}
Subtracting the right from the left one, we get $c_i - c_j = d_j - d_i$. This implies the edge $\{c_i, d_i, c_j, d_j\}$.
\end{proof}

These are some forbidden substructures we have found. There are other more specific sorts of forbidden structures, however, we could not observe a thorough explanation of the structure of the characteristic hypergraphs. We consider some general approaches to this question in the succeeding subsection.

\subsubsection{The Recognition of Characteristic Hypergraphs}

Abstracting from forbidden substructures one can pose the following question:
\probsp
\decprob{\chr}{A 3,4-hypergraph $G$.}{Is there a ruler $R$ such that $H_R \cong G$?}
\probsp
We found two approaches to this question, which could lead to structural insights with proper analysis. We briefly describe them here.

On the one hand, observe that any total order of the vertices in $G$ defines for every edge or rather conflict a linear equation in the vertices or marks that are present in it. That is, any total order of vertices of $G$ defines a homogeneous linear equation system with the vertices as variables, a solution to which can provide the sought ruler. However, certain solutions are infeasible: the trivial solution identifying every mark with zero, any other solution that requires two marks to be equal, and any solution that introduces edges that are not present in the graph. One challenge is to classify equation systems that exclude the second and third kind of solutions, another challenge is to either eliminate the need for a total ordering of the vertices or to find a way of deriving a subclass of feasible orderings in order for this classification to be efficient. A third challenge arises from the possibility of ``derived edges'', that is, edges whose linear equations are linear combinations of other edges' equations (as we observed in \autoref{lem:butterfly}).

\begin{figure}
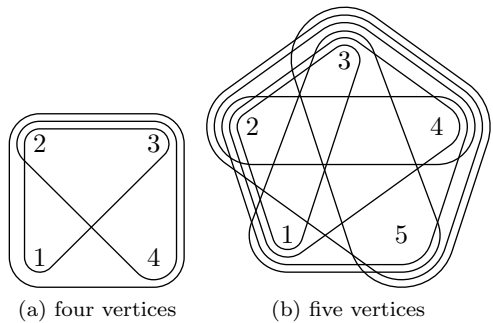

  \begin{center}
    \subfloat[four vertices]{
      \includegraphics{complete4.1}
      \label{fig:complete4}
    }
    \subfloat[five vertices]{
      \includegraphics{complete5.1}
      \label{fig:complete5}
    }
    
    \caption{Here, characteristic hypergraphs of the rulers $\{1, 2, 3, 4\}$ and $\{1, 2, 3, 4, 5\}$ are shown. Since such graphs have the asymptotically maximum number of edges, we call them pseudo-complete. Any ruler of length $\leq 5$ has a characteristic hypergraph that is a subgraph of one of the graphs shown. Perhaps this property can be used to efficiently recognize characteristic hypergraphs.}
    \label{fig:complete}
  \end{center}
\end{figure}

Another approach to \chr{} would be the following: Because every ruler $R$ is a subset of a ruler containing every integer between the maximum and minimum of $R$, one can investigate the question, whether there is an integer $n$ such that $G$ is isomorphic to an induced subgraph of $H_{\{1, 2, ..., n\}}$. (See examples of such pseudo-complete graphs in \autoref{fig:complete}.) This also would give a hint on whether $G$ has a constructing ruler. 

An NP-hardness result for \chr{} of course would be a pessimal alternative. This would also mean that there is no classification through finitely many, constant-size, forbidden (induced) subgraphs. 

\subsubsection{Other Properties}

Planarity and bounded degree are other properties that could possibly be exploited for efficient algorithms. However, in general the characteristic hypergraphs are neither:

\begin{figure}
  \begin{center}
    \includegraphics{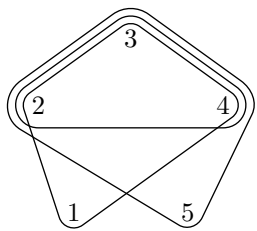}
    \caption{A subgraph of $H_{\{1, 2, ..., 5\}}$ with three edges is shown. The incidence graph of this graph contains a $K_{3,3}$ and thus is nonplanar.}
    \label{fig:nonplanar}
  \end{center}
\end{figure}

As we observed in \autoref{sec:hgraphconstr}, there are characteristic hypergraphs with $n$ vertices and $\bigOmega(n^3)$ edges. However, if a graph has bounded degree, i.e., there are at most $c$ edges adjacent to any vertex, it has at most $c n$ edges for a constant~$c$.

A hypergraph is called planar if and only if its incidence graph representation is planar. We show an example of a hypergraph whose incidence graph comprises a complete bipartite graph with two vertex partitions of size three---a $K_{3,3}$---as a minor. Look at the three-edged subgraph of $H_{\{1, 2, ..., 5\}}$ shown in \autoref{fig:nonplanar}. All three edges intersect in 2, 3, and 4, and thus this graph's incidence graph has a $K_{3,3}$ as a subgraph. Obviously, this graph is then also part of the incidence graph of $H_{\{1, 2, ..., 5\}}$.

\section{A Simplified NP-Hardness Proof for \grsm}
\label{sec:simplnphardness}

\citet{MP08} showed that the following problem is NP-hard via a reduction from a hard SAT-variant.
\probsp{}
\decprob{\grsm}{A finite set $R \subseteq \mathbb{N}$ and $l\in\mathbb{N}$.}{Is there a \gr{} $R' \subseteq R$ with at least $l$ marks?}
\probsp{}
However, the construction of the ruler $R$ corresponding to the SAT-formula is involved and hard to comprehend. We provide a reduction from a different SAT-variant, yielding a much simpler construction. In a simple corollary, we also gather a W[1]-hardness result for a similar problem, that did not directly follow from the original proof. To define the problem, which we reduce from, we first need the following definition:
\begin{definition}
  A boolean formula in 2-CNF is said to be \emph{antimonotone} if and only if every literal is negative.
\end{definition}
\decprob{\wacs}{An antimonotone boolean formula $\phi$ in 2-CNF and an integer~$k$.}{Is there a truth assignment for the variables in $\phi$ such that $\phi$ is satisfied and at least $k$ variables are assigned true?}
\paragraph{NP-Completeness.} \label{anc:istowacs} This problem is NP-complete via a straightforward reduction from \textsc{Independent Set}. In \textsc{Independent Set}, a graph $G = (V,E)$ and an integer $k$ is given and it is asked whether there is a vertex set $S \subseteq V$ of size at least $k$ such that there is no $e \in E$ with $e \subseteq S$. The set $S$ is called an independent set of~$G$.

Let $G=(V,E)$ be a graph. It is clear, that in any edge $\{u, v\} \in E$ either~$u$ or $v$ is not in an independent set of $G$. The following formula is true for any independent set $S$ of $G$, where $\chi_S$ is the characteristic function of the set~$S$. $$\phi(G) := \bigwedge_{\{u, v\} \in E} (\neg \chi_S(u) \vee \neg \chi_S(v))$$ If we interpret the instances of $\chi$ as variables and $\phi(G)$ has a satisfying truth assignment with at least~$k$ positive variables, there is an independent set of size~$k$ for~$G$ and vice versa.

\paragraph{Problem Restriction.} We restrict the problem \wacs{} to instances such that $\phi$ contains no two clauses consisting of the same variables. It is a simple observation that this has no impact on the complexity. (Note that such clauses cannot occur in the formula $\phi(G)$ in the construction above.)

\paragraph{Reduction Outline.} The basic idea of our reduction is to construct a ruler such that there are marks for every variable and additional marks that impose conflicts between those and only those marks that correspond to variables that share a clause. In terms of characteristic hypergraphs, we construct a ruler $R$, such that $H_R$ contains only edges that consist of two marks that correspond to two variables and two auxiliary marks. Every edge shall correspond to one clause.

The construction is given below. It is followed by some auxiliary statements that we use to show that the conflicts corresponding to the clauses are the only conflicts present in the constructed ruler. Once we have done that, we are able to prove that \wacs{} can be solved with \grsm.


\begin{construction}
  \label{con:red}
  Let $\phi$ be an antimonotone boolean formula in 2-CNF with $m$ clauses (numbered from $1$ to $m$), let $x_1, ..., x_n$ be the variables in $\phi$ and $k$ an integer. Construct a ruler~$R$ and an integer~$l$ as follows:

  Let $V = \{ 2^{(m + 2) i} : x_i \text{ in } \phi\}$ be the set of marks corresponding to the variables and let $C_i = \{ 2^{(m + 2)i} + 2^j - 1 : x_i \text{ occurs in the } j \text{'th clause of } \phi\}$ and $C = \bigcup_i C_i$ be the sets of marks corresponding to the clauses. 

The ruler $R = V \cup C$ and $l = k + 2m$ then constitute an instance of \grsm .
\end{construction}
Before we prove the polynomial running time and correctness, i.e., the equivalence of instances of this construction, we need some auxiliary statements:
\begin{lemma}
  \label{lem:2^natgr}
  The ruler $\twonat := \{2^i : i \in \mathbb{N}\}$ is a \gr.
\end{lemma}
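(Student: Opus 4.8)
The plan is to verify the Golomb property directly, i.e.\ to show that all positive pairwise differences of $\twonat$ are distinct. The key fact I would exploit is the rigid binary shape of a difference of two distinct powers of two.

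First I would compute, for exponents $a > b$,
\[
  2^a - 2^b = 2^b\,(2^{a-b} - 1) = \sum_{j=b}^{a-1} 2^j .
\]
In binary this is a contiguous block of $a - b$ ones occupying exactly the bit positions $b, b+1, \ldots, a-1$, with zeros elsewhere. Consequently the numerical value of $2^a - 2^b$ determines the position of its least significant set bit (which is $b$) and of its most significant set bit (which is $a-1$), and hence recovers the ordered pair $(a,b)$ uniquely.

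Then I would conclude as follows. Since every positive difference of two elements of $\twonat$ can, after ordering the two marks, be written as $2^a - 2^b$ with $a > b$, suppose a single distance $d$ admits two such representations $d = 2^a - 2^b = 2^c - 2^e$ with $a > b$ and $c > e$. By uniqueness of binary representation the two blocks of ones must coincide; matching their least significant bits gives $b = e$, and matching their most significant bits gives $a - 1 = c - 1$, i.e.\ $a = c$. Hence the two representing pairs of marks are identical, which is precisely the condition that $\twonat$ is a \gr{}.

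I expect no serious obstacle here; the only point requiring mild care is to confirm that \emph{both} endpoints of the block of ones are pinned down by the value of the difference — not merely one of them — so that the full pair $(a,b)$, and therefore the underlying pair of marks, is uniquely recovered. Since this is immediate from uniqueness of binary representations, the whole argument is short and elementary.
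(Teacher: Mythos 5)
Your proof is correct, but it takes a genuinely different route from the paper. You argue via the uniqueness of binary representations: writing $2^a - 2^b = \sum_{j=b}^{a-1} 2^j$ exhibits the difference as a contiguous block of ones whose endpoints recover the ordered pair $(a,b)$, so distinct pairs of marks yield distinct distances. The paper instead assumes a coincidence $2^j - 2^i = 2^l - 2^k$ with $i < j \leq k < l$, divides by $2^i$, rearranges to $2^{j-i} + 2^{k-i} - 1 = 2^{l-i}$, and derives a contradiction purely by magnitude: since $j, k \leq l - 1$, the left-hand side is at most $2^{l-i} - 1$. Your argument is arguably more transparent for this particular statement and gives slightly more (it explicitly reconstructs both marks from the distance), but the paper's inequality-based template is the one that is reused almost verbatim in \autoref{lem:2^natapproxgr} for the perturbed rulers $\twonatapprox(d,f) = \{2^{di} + f(i)\}$, where the additive terms $f(i)$ destroy the clean block-of-ones structure and a binary-digit argument would no longer apply directly. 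So the paper's choice is driven by generalizability rather than by economy on this one lemma.
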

\begin{proof}
  If $\twonat$ is not a \gr, then there are integers $i < j \leq k < l$ such that the following holds:
  \begin{align*}
    2^j - 2^i &= 2^l - 2^k\\
    2^{j - i} - 1 &= 2^{l - i} - 2^{k - i} \\
    2^{j - i} + 2^{k - i} - 1 &= 2^{l - i} 
  \end{align*}
  However, because $j,k < l$, the right-hand side of this equation is strictly greater than the left-hand side. $\lightning$
\end{proof}
\begin{corollary}
  \label{cor:vandcigr}
  The ruler $V$ and each $C_i$ is a \gr .
\end{corollary}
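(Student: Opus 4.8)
The plan is to derive the corollary directly from \autoref{lem:2^natgr} together with two elementary closure properties of the Golomb condition: that any subruler of a Golomb ruler is again a Golomb ruler, and that translating all marks by a fixed constant preserves the Golomb property (the latter is recorded just after the definition of \gr{}). The key observation is that both $V$ and each $C_i$ arise from the ruler $\twonat = \{2^i : i \in \mathbb{N}\}$ by applying these two operations, so the whole statement collapses to the single fact proved in the preceding lemma.

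First I would dispatch $V$. Every mark $2^{(m+2)i}$ of $V$ is a power of two, hence $V \subseteq \twonat$; that is, $V$ is a subruler of $\twonat$. A repeated distance among the marks of $V$ would immediately witness a repeated distance among the marks of $\twonat$, since any solution of $d = m_i - m_j$ with marks in $V$ is also a solution with marks in $\twonat$. As $\twonat$ is Golomb by \autoref{lem:2^natgr}, the ruler $V$ admits no repeated distance and is therefore Golomb.

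Next I would treat a fixed $C_i$. Rewriting $C_i = \{(2^{(m+2)i} - 1) + 2^j : x_i \text{ occurs in the } j\text{'th clause of }\phi\}$, I would set the constant $c := 2^{(m+2)i} - 1$ and observe that $C_i = \{2^j : j \in J_i\} + c$, where $J_i$ denotes the set of clause indices in which $x_i$ occurs. The set $\{2^j : j \in J_i\}$ is again a subruler of $\twonat$, hence Golomb by the same subruler argument used for $V$, and translating it by the fixed constant $c$ preserves the Golomb property. Consequently $C_i$ is Golomb.

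There is no genuine obstacle here: the corollary reduces entirely to the inheritance of the Golomb property under passing to subrulers and under translation, both of which are immediate from the definition. The only point that deserves an explicit line of justification is the subruler closure, which I would state once and then reuse for both $V$ and each $C_i$.
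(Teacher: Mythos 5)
Your proposal is correct and follows essentially the same route as the paper: both reduce the claim to \autoref{lem:2^natgr} via closure of the Golomb property under taking subrulers and under translation, the only cosmetic difference being that you translate a subruler of $\twonat$ whereas the paper takes a subruler of the translated $\twonat$.
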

\begin{proof}
  The ruler $V$ is a subset of $\twonat$, and each $C_i$ is a subruler of the following ruler: $$\{ 2^{(m+2)i} - 1 + j : j \in \twonat \} \qedhere$$
\end{proof}

Now we know that the basic building blocks of the constructed ruler are conflict-free. We proceed to show that $C$ also is conflict-free. Because conflicts do contain at most four marks, it suffices to prove that $C_i \cup C_j \cup C_k \cup C_l$ contains no conflicts for any choice of $i, j, k, l$. We do that by successively adding more clause sets $C_o$. 

While doing this we maintain that any one mark in $V$ can be added to the unions of clause sets without imposing conflicts. Together with the observation that any one mark in a clause set can be added to $V$ without creating a conflict, we then see that there are no conflicts in $R$ that contain one mark from $V$ and three of $C$ or vice versa.

To continue, we need the following auxiliary results that are a modification of \autoref{lem:2^natgr}.

\begin{lemma}
  \label{lem:2^natapproxgr}
  For any integer $d \geq 2$ and any function~$f: \mathbb{N} \rightarrow \{0, ..., 2^{d} \}$, the following ruler is a \gr{}: 
  $$\twonatapprox(d,f) := \{2^{di} + f(i) : 1\leq i \in \mathbb{N} \}$$
\end{lemma}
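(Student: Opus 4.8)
The plan is to generalize the proof of \autoref{lem:2^natgr}: the dominant terms $2^{di}$ are spaced so far apart (by a factor $2^d$) that the perturbations $f(i) \in \{0, \dots, 2^d\}$ are too small to ever make two distinct differences coincide.

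First I would record that the assignment $a \mapsto 2^{da} + f(a)$ is strictly increasing in $a$: the gap $m_{a+1} - m_a$ is at least $2^{da}(2^d - 1)$ diminished by the perturbation swing $2^d$, which is positive for $d \geq 2$. Hence the order of the marks agrees with the order of their indices, and I may speak interchangeably of marks and indices. Suppose, for contradiction, that $\twonatapprox(d,f)$ is not a \gr. Then it contains a conflict, and by \autoref{lem:nonoverlap} I may take this conflict to be non-overlapping; translating the four marks back to their indices gives $i < j \leq k < l$ with $(i,j) \neq (k,l)$ and $m_j - m_i = m_l - m_k$. The degenerate three-mark conflict is the case $j = k$, which the argument below handles uniformly.

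Next I would separate the powers of two from the perturbations. Writing $m_a = 2^{da} + f(a)$ and rearranging $m_j - m_i = m_l - m_k$ yields
\begin{equation*}
  2^{dl} + 2^{di} - 2^{dj} - 2^{dk} = \bigl(f(j) + f(k)\bigr) - \bigl(f(i) + f(l)\bigr).
\end{equation*}
Since every value of $f$ lies in $\{0, \dots, 2^d\}$, the right-hand side has absolute value at most $2^{d+1}$. The crux is to show that the left-hand side is strictly larger in absolute value. Because $i < j \leq k < l$, the index $l$ is strictly the largest, and using $j \leq k$ I can bound $2^{dj} + 2^{dk} \leq 2^{dk + 1} \leq 2^{d(l-1)+1}$, which for $d \geq 2$ is at most $2^{dl - 1}$. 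Dropping the positive term $2^{di}$, this gives
\begin{equation*}
  2^{dl} + 2^{di} - 2^{dj} - 2^{dk} \; \geq \; 2^{dl} - 2^{dl - 1} = 2^{dl - 1}.
\end{equation*}
From $i \geq 1$ and $i < j \leq k < l$ we get $l \geq 3$, so $2^{dl - 1} \geq 2^{3d - 1}$, and $3d - 1 > d + 1$ whenever $d \geq 2$. Thus the left-hand side exceeds $2^{d+1}$, contradicting the bound on the right-hand side.

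The step I expect to require the most care is this final magnitude estimate for small $d$. For $d = 2$ the perturbation swing $2^{d+1} = 8$ is of the same order as the naive per-term gap bounds, so a crude ``largest term dominates'' argument just barely fails; the slack that saves the proof comes entirely from the non-overlap assumption, which forces the strict chain $i < j \leq k < l$ and hence $l \geq 3$. I would therefore invoke \autoref{lem:nonoverlap} (rather than an arbitrary conflict) precisely so that this separation of indices is available, and double-check that the three-mark degenerate conflict ($j = k$) is genuinely covered by the inequality $j \leq k$ used above.
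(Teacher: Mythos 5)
Your proof is correct and follows essentially the same route as the paper's: assume a non-overlapping conflict with indices $i < j \leq k < l$, isolate the dominant power $2^{dl}$, and show the remaining powers plus the perturbations (bounded by $2^{d+1}$) cannot close the gap; the paper merely normalizes by dividing through by $2^{di}$ where you argue with absolute magnitudes and the bound $l \geq 3$. Your explicit check that the marks are strictly increasing in the index is a small addition the paper leaves implicit.
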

\begin{proof}
  If there is such a ruler that is not a \gr{}, then there are four marks in it---i.e. $i < j \leq k < l$ and constants~$0 \leq m_i, m_j, m_k, m_l \leq 2^{d}$---such that the following holds:
  \begin{align*}
    (2^{dj} + m_j) - (2^{di} + m_i) &= (2^{dl} + m_l) - (2^{dk} + m_k)
  \end{align*}
  If we divide this equation by $2^{di}$, we get the following:
  \begin{align*}
    (2^{d(j-i)} + \hat{m_j}) - (1 + \hat{m_i}) &= (2^{d(l-i)} + \hat{m_l}) - (2^{d(k-i)} + \hat{m_k})\\
    (2^{d(j-i)} + \hat{m_j}) + (2^{d(k-i)} + \hat{m_k}) - (1 + \hat{m_i}) &= (2^{d(l-i)} + \hat{m_l})
  \end{align*}
  Since $0 \leq \hat{m_i}, \hat{m_j}, \hat{m_k}, \hat{m_l} \leq 1$ and $j, k \leq l - 1$, the left-hand side of the above equation is upper bounded by $2^{d(l-i) - d + 1} + 1$. However, the right-hand side is lower bounded by $2^{d(l-i)}$. This is a contradiction, because the following relations hold: $1 \leq l - i$ and~$2 \leq d$. $\lightning$
\end{proof}
\begin{observation}
  \label{obs:mindist}
  The distance between two marks in $\twonatapprox(d,f)$ is at least~$2^{2d - 1}$.
\end{observation}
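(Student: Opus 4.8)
The plan is to bound the distance between two arbitrary distinct marks directly, exploiting the constraint that $f$ takes values only in $\{0, \ldots, 2^d\}$. Take two marks of $\twonatapprox(d,f)$ and, without loss of generality, write them as $2^{di} + f(i)$ and $2^{dj} + f(j)$ with $1 \leq i < j$. Their distance is
\[
(2^{dj} + f(j)) - (2^{di} + f(i)) = (2^{dj} - 2^{di}) + (f(j) - f(i)).
\]

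First I would handle the $f$-part: since $0 \leq f(j)$ and $f(i) \leq 2^d$, we have $f(j) - f(i) \geq -2^d$, so the perturbation can shrink the distance by at most $2^d$. Next I would bound the exponential gap from below. Because $i < j$ forces $j \geq i + 1$, and because $i \geq 1$, we obtain
\[
2^{dj} - 2^{di} \geq 2^{d(i+1)} - 2^{di} = 2^{di}(2^d - 1) \geq 2^d (2^d - 1) = 2^{2d} - 2^d.
\]

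Combining the two estimates, the distance is at least $2^{2d} - 2^d - 2^d = 2^{2d} - 2^{d+1}$. It then remains to verify that $2^{2d} - 2^{d+1} \geq 2^{2d-1}$, which rearranges to $2^{2d-1} \geq 2^{d+1}$, i.e.\ $2d - 1 \geq d + 1$, i.e.\ $d \geq 2$. This is exactly the standing hypothesis on $d$ inherited from \autoref{lem:2^natapproxgr}, so the claimed lower bound of $2^{2d-1}$ follows.

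The argument is entirely elementary and I expect no genuine obstacle; the only point needing a moment's care is the separation of the distance into its exponential part and its $f$-part, observing that the perturbation can close the gap by at most $2^d$ while the separation between consecutive base points $2^{di}$ and $2^{d(i+1)}$ is at least $2^{2d} - 2^d$, which dominates once $d \geq 2$. As in the preceding lemma, the condition $d \geq 2$ is precisely what is needed to settle the final numerical inequality.
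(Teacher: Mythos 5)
Your proof is correct and follows essentially the same route as the paper's: both bound the perturbation from $f$ by $2^d$, bound the exponential gap below by $2^{di}(2^d-1) \geq 2^{2d} - 2^d$ using $i \geq 1$ and $j \geq i+1$, and close with $2^{2d} - 2^{d+1} \geq 2^{2d-1}$ from $d \geq 2$. No issues.
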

\begin{proof}
  Let $2^{dj} + m_j$ and $2^{di} + m_i$ with $i < j$ be two marks in $\twonatapprox(d,f)$. Then the following relations hold:
  $$
  \begin{array}{clclcl}
        & 2^{dj} + m_j - (2^{di} + m_i)    &\geq & 2^{dj} - (2^{di} + 2^d)     & \geq &  2^{d(i + 1)} - 2^{di} - 2^d \\
      = & 2^{di}(2^d - 1) - 2^{d}          &\geq & 2^{d}(2^d - 1) - 2^{d}      & =    &   2^{2d} - 2^{d + 1}          \\
      = & 2 \times 2^{2d - 1} - 2^{d + 1}  &\geq & 2^{2d - 1} & & 
  \end{array}
  $$
  The last estimate holds because $d \geq 2$.
\end{proof}
\begin{lemma}
  \label{lem:2^natapproxaddgr}
  The ruler $\twonatapprox(d,f)$ does not cease to be a \gr{}, if we add the mark~$2^{di} + c$ to it, where $i, c \in \mathbb{N}$ and $0 \leq c \leq 2^d$.
\end{lemma}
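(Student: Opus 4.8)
The plan is to show that any hypothetical conflict in the augmented ruler must involve the new mark, and then to eliminate every possible configuration by combining \autoref{lem:2^natapproxgr} with a size estimate. Write $y := 2^{di} + c$ for the new mark and $x_i := 2^{di} + f(i)$ for the mark already sitting at index $i$; if $c = f(i)$ then $y = x_i$ and nothing is added, so I assume $c \neq f(i)$. I think of the ruler as organized into clusters, the $e$-th cluster containing the mark(s) near $2^{de}$: every cluster except the $i$-th contains a single mark, while cluster $i$ now contains both $x_i$ and $y$. Two magnitude facts drive the argument: two marks of the same cluster are at distance at most $2^d$, whereas two marks of different clusters are at distance at least $2^{2d-1}$ by the computation behind \autoref{obs:mindist} (which applies to $y$ as well, since its offset $c$ lies in $\{0,\dots,2^d\}$ like the $f$-values), and since $d \geq 2$ we have $2^{2d-1} > 2^d$. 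Hence a within-cluster distance can never equal a between-cluster distance.

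First I would note that a conflict not involving $y$ would already live in $\twonatapprox(d,f)$, contradicting \autoref{lem:2^natapproxgr}; so every conflict uses $y$. Next, a conflict using $y$ but not $x_i$ has all its marks in $(\twonatapprox(d,f)\setminus\{x_i\})\cup\{y\}$, which is again of the form $\twonatapprox(d,f')$ with $f'(i) = c$ and $f'(e) = f(e)$ otherwise, hence a \gr{} by \autoref{lem:2^natapproxgr}; so no such conflict exists. Therefore any conflict involves both $y$ and $x_i$.

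If $y$ and $x_i$ form one of the two equal-distance pairs, that pair has distance at most $2^d$; the second pair consists of two distinct marks, and since $\{y,x_i\}$ is the only within-cluster pair available, the second pair is a between-cluster pair of distance at least $2^{2d-1} > 2^d$, a contradiction. (The degenerate three-mark version, where the pairs share a mark, forces a third mark within $2^d$ of $x_i$, i.e.\ a second extra mark in cluster $i$, which does not exist.) This leaves the genuinely new case, which I expect to be the main obstacle: $y$ and $x_i$ lie in different pairs, say $\{y,u\}$ and $\{x_i,v\}$ with $u,v$ outside cluster $i$. Writing $u = 2^{da}+\alpha$, $v = 2^{db}+\beta$ with $\alpha,\beta\in[0,2^d]$ and $\delta := c - f(i)$, I would split on the positions of $a,b$ relative to $i$. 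If $u,v$ lie on the same side of cluster $i$ (both $a,b > i$, or both $< i$), the equal-distance equation collapses to $u - v = \pm\delta$, forcing $|u-v|\le 2^d$; this is impossible across two distinct clusters, and if $a=b$ it forces $u=v$ and hence $y=x_i$. If $u,v$ lie on opposite sides, one distance is of order $2^{d(i+1)}$ while the other is at most $2^{di}$, and a short estimate (using $a\ge i+1$ resp.\ $b\ge i+1$, the lower bounds $b\ge 1$ resp.\ $a\ge 1$, and $d\ge 2$) shows they cannot be equal. Each branch ends in $\lightning$, so the augmented ruler is a \gr{}.

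The hard part is this last case: it is the only configuration that genuinely escapes \autoref{lem:2^natapproxgr}, precisely because $y$ and $x_i$ share the exponent index $i$, breaking the ``one mark per index'' structure on which the earlier proof rested. In particular the opposite-sides sub-case needs the sharper bound $x_i - v \leq 2^{di}$ (rather than the crude $2^{di}+2^d$) to avoid a borderline failure at $d=2$, so I would be careful to bound $v$ from below by $2^{db}\ge 2^d$ there.
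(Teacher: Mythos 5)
Your proof is correct and follows essentially the same route as the paper's: every conflict must contain both marks of cluster $i$, the within-cluster distance $|y-x_i|\leq 2^d$ is too small to equal any cross-cluster distance (at least $2^{2d-1}$ by \autoref{obs:mindist}), and the one remaining configuration is ruled out by a magnitude estimate. The only presentational difference is that the paper folds your same-side subcase into the distance-gap argument via \autoref{lem:nonoverlap} and settles the opposite-sides subcase by noting that the computation in \autoref{lem:2^natapproxgr} never assumed $j \neq k$ nor any relation between $m_j$ and $m_k$, whereas you redo that estimate explicitly.
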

\begin{proof}
  It is clear that any edge $e$ in the characteristic hypergraph of the ruler~$\twonatapprox(d,f) \cup \{2^{di} + c \}$ must comprise the marks~$\alpha := 2^{di} + f(i)$ and $\beta := 2^{di} + c$. The distance measured by those two marks is at most $2^d$ and the distance measured by any two marks in $\twonatapprox(d,f)$ is at least $2^{2d-1}$ (\autoref{obs:mindist}). Since $d \geq 2$, this means that the edge $e$ is not due to the distance~$|\alpha - \beta|$. Because of \autoref{lem:nonoverlap} this also means that we can assume the marks~$\alpha$ and $\beta$ to be positioned between the other two marks in~$e$. The existence of such an edge, however, has been disproven in \autoref{lem:2^natapproxgr}: Have a look at the proof and observe, that we did not postulate the inequality of $j$ and~$k$ and also did not make an assumption about the constants $m_j$ and~$m_k$.
\end{proof}
\begin{observation}
  \label{obs:maxmindist}
  Let $v = 2^{(m +2)i} \in V$, $c_i \in C_i$ and $c_j, \hat{c_j} \in C_j$ with $i \neq j$. The following relations hold:
  \begin{align*}
    \begin{split}
      |v - c_i|          & \leq 2^m \\
      |c_j - \hat{c_j}|   & \leq 2^m
    \end{split}
    \begin{split}
      2^{2m + 3}  & \leq |v - c_j| \\
      2^{2m + 3}  & \leq |c_i - c_j| 
    \end{split}
  \end{align*}
\end{observation}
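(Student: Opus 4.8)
The plan is to split the four inequalities into two groups: the two upper bounds, which follow by direct substitution, and the two lower bounds, which rest on a separation estimate between the ``clusters'' of marks belonging to distinct indices.

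For the upper bounds I would simply unfold the definitions from \autoref{con:red}. Writing $c_i = 2^{(m+2)i} + 2^a - 1$ with clause index $a \in \{1, \dots, m\}$, the term $2^{(m+2)i}$ cancels against $v = 2^{(m+2)i}$, so $|v - c_i| = 2^a - 1 \leq 2^m - 1 < 2^m$. Similarly, for $c_j = 2^{(m+2)j} + 2^a - 1$ and $\hat{c_j} = 2^{(m+2)j} + 2^b - 1$ in the same set $C_j$, the shared center cancels and $|c_j - \hat{c_j}| = |2^a - 2^b| \leq 2^m - 2 < 2^m$. Both are immediate.

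For the lower bounds, the key point is that every mark in $V \cup C_i$ lies within distance $2^m$ of the center $2^{(m+2)i}$ (indeed $v$ sits exactly on its center), while distinct centers are far apart. Writing $p = \min\{i, j\}$ and $q = \max\{i, j\}$, so that $q \geq p + 1$ and $p \geq 1$, I would bound the gap between a mark near $2^{(m+2)p}$ and one near $2^{(m+2)q}$ from below by
\begin{align*}
2^{(m+2)q} - 2^{(m+2)p} - 2^m & \geq 2^{(m+2)p}\left(2^{m+2} - 1\right) - 2^m \\
& \geq 2^{m+2}\left(2^{m+2} - 1\right) - 2^m,
\end{align*}
using $q \geq p + 1$ in the first step and $p \geq 1$ in the second. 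The right-hand side equals $2^{2m+4} - 2^{m+2} - 2^m$, and a one-line check ($2^{2m+4} - 2^{2m+3} = 2^{2m+3} \geq 2^{m+2} + 2^m$ for every $m \geq 0$) shows this is at least $2^{2m+3}$. This single estimate covers both $|v - c_j|$ and $|c_i - c_j|$, since the offset of the nearer mark toward the farther one is at most $2^m$, which is exactly the quantity already subtracted above.

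The only delicate point is pinning down the genuine worst case for the lower bounds: one must verify that the minimum separation is attained simultaneously at adjacent indices ($|i - j| = 1$) and at the smallest admissible center ($p = 1$), and that even after subtracting the maximal offset $2^m$ from each nearby mark the residual still exceeds $2^{2m+3}$. Apart from this bookkeeping, all four claims reduce to elementary arithmetic of powers of two, so I anticipate no conceptual obstacle.
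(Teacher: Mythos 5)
Your proof is correct and takes essentially the same route as the paper: the upper bounds follow by direct substitution, and your separation estimate for the lower bounds is exactly the content of \autoref{obs:mindist} specialized to $d = m+2$ (which is how the paper disposes of the right-hand relations, by modeling the relevant pairs as subrulers of $\twonatapprox(m+2,f)$), with your inline computation mirroring that observation's proof. No gap to report.
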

\begin{proof}
  The relations on the left are trivial. Those on the right-hand side follow from \autoref{obs:mindist} by modeling $\{v, c_j\}$ and $\{c_i, c_j\}$ as subrulers of the ruler~$\twonatapprox(m + 2, f)$ with an appropriate function~$f$.
\end{proof}
\begin{corollary}
  \label{cor:vccjgr}
  The rulers $V^c := V \cup \{c\}, c \in C$ and $C_j^v := C_j \cup \{v\}, v \in V$ are \gr s for any integer $j$, $1 \leq j \leq n$.
\end{corollary}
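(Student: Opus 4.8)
The plan is to prove the two families of rulers Golomb separately, in each case reducing to one of the already-established Golomb rulers $\twonat$ or $\twonatapprox$, and exploiting that subrulers of \gr s are again \gr s (as used implicitly in \autoref{cor:vandcigr}).

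\textbf{The rulers $V^c$.} First I would observe that $V$ is a subruler of $\twonatapprox(m+2,0)$, i.e.\ of the ruler obtained with the constant zero offset function. The adjoined mark $c \in C$ has the form $c = 2^{(m+2)i} + (2^r - 1)$ for some clause index $r$ with $1 \leq r \leq m$, so its offset satisfies $0 \leq 2^r - 1 < 2^{m+2}$. This is precisely a mark of the kind admitted by \autoref{lem:2^natapproxaddgr} (a second mark at index $i$, with offset in $[0,2^{m+2}]$). That lemma then gives that $\twonatapprox(m+2,0)\cup\{c\}$ is a \gr, and since $V^c$ is a subruler of it, $V^c$ is a \gr.

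\textbf{The rulers $C_j^v$, index matching.} For the second family I would split on whether the index $i$ of $v = 2^{(m+2)i}$ equals $j$. If $i = j$, every mark of $C_j^v$ has the form $2^{(m+2)j} + t$ with $t \in \{0\}\cup\{2^k-1 : \ldots\}$; subtracting the common translate $2^{(m+2)j}-1$ turns $C_j^v$ into a subset of $\{2^k : k \geq 0\}\subseteq\twonat$. Since translation preserves the Golomb property (as remarked just after the definition of \gr s) and $\twonat$ is a \gr{} by \autoref{lem:2^natgr}, $C_j^v$ is a \gr.

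\textbf{The rulers $C_j^v$, index mismatch, and the main obstacle.} The remaining case $i \neq j$ is the crux, and the one place where the ``embed into a single $\twonatapprox$-ruler and invoke \autoref{lem:2^natapproxaddgr}'' template breaks down, because the several marks of $C_j$ all share the base index $j$. Here I would instead argue directly through the equivalent characterization that a ruler is Golomb exactly when all its pairwise positive differences are distinct. The differences of $C_j^v$ fall into two groups: those between two marks of $C_j$, which are pairwise distinct because $C_j$ is a \gr{} (\autoref{cor:vandcigr}) and are bounded by $2^m$ via \autoref{obs:maxmindist}; and those of the form $|v - c|$ with $c \in C_j$, which are at least $2^{2m+3}$ by the same observation. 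As $2^m < 2^{2m+3}$ the two groups are disjoint in value, and within the second group the map $c \mapsto |v - c|$ is injective since the far-away mark $v$ lies entirely on one side of the tight cluster $C_j$. Hence no positive difference of $C_j^v$ is realized twice, so $C_j^v$ is a \gr. I expect the work in this last case to lie exactly in recognizing that the magnitude separation of \autoref{obs:maxmindist} already rules out any cross-group coincidence of differences, so that adjoining $v$ cannot create a new conflict.
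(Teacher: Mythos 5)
Your proof is correct and follows essentially the same route as the paper: embed $V^c$ into $\twonatapprox(m+2,f)\cup\{c\}$ and invoke \autoref{lem:2^natapproxaddgr}, handle $C_j^v$ with $i=j$ as a translate of a subset of $\twonat$, and dispose of $i\neq j$ via the magnitude separation in \autoref{obs:maxmindist}. Your treatment of the last case is in fact slightly more complete than the paper's, since you also note explicitly that $c\mapsto|v-c|$ is injective (ruling out conflicts where both pairs involve $v$), a subcase the paper's one-line contradiction leaves implicit.
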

\begin{proof}
  The ruler $V^c$ can be modeled as subruler of $\twonatapprox(m + 2, f) \cup \{c\}$ for appropriate values of $f$. According to \autoref{lem:2^natapproxaddgr}, this is a \gr .

  Let $v = 2^{(m + 2)i}$ and assume that $C_j^v$ contains a conflict. It is not possible that~$j = i$, because otherwise the ruler $C_j \cup \{v\}$ is a shifted subset of the \gr{}~$\twonat$. The case~$j \neq i$ is also not possible: A distance measured by~$c_j, \hat{c_j} \in C_j$ is at most~$2^m$, however, it is $2^{2m + 3} \leq |v - c_j|$ because of \autoref{obs:maxmindist}. $\lightning$
\end{proof}
\begin{lemma}
  \label{lem:cicjgr}
  The ruler $C_i \cup C_j$ is a \gr{} for all integers $i,j$ with $1 \leq i < j \leq n$.
\end{lemma}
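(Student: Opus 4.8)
The plan is to argue by contradiction: assume $C_i \cup C_j$ contains a conflict and show this is impossible. By \autoref{lem:nonoverlap} any conflict is witnessed by marks $a < b \le c < d$ (with $b=c$ in the degenerate three-mark case) whose two conflicting, non-overlapping pairs are the leftmost $\{a,b\}$ and the rightmost $\{c,d\}$, so that $b - a = d - c$. Because $i < j$, \autoref{obs:maxmindist} gives an inter-cluster gap of at least $2^{2m+3}$ while the diameter of either $C_i$ or $C_j$ is at most $2^m < 2^{2m+3}$; hence every mark of $C_i$ lies below every mark of $C_j$, and in the sorted list $a \le b \le c \le d$ all marks from $C_i$ precede all marks from $C_j$. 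The proof is then a case distinction over how the four ordered marks split between the two clusters, and, crucially, over whether each of the two equal distances stays inside a cluster or crosses the gap.

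First I would dispose of the easy splits. If all four marks lie in $C_i$ (or all in $C_j$), the conflict contradicts \autoref{cor:vandcigr}, by which each $C_i$ is already a Golomb ruler. If exactly one of the two pairs crosses the gap --- that is, exactly one or exactly three of the ordered marks lie in $C_i$ --- then one of $b-a$, $d-c$ is an inter-cluster distance of size at least $2^{2m+3}$ while the other is an intra-cluster distance of size at most $2^m$, so the two cannot be equal. The remaining a-priori configuration, in which both pairs cross the gap, cannot arise at all: it would require the sorted sequence to return to $C_i$ after entering $C_j$, which is impossible once all of $C_i$ sits below all of $C_j$.

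This leaves the single interesting case, the balanced split $a,b \in C_i$ and $c,d \in C_j$, where both distances are intra-cluster and the magnitude argument fails. Here I would substitute the construction: write $a = 2^{(m+2)i} + 2^{r} - 1$, $b = 2^{(m+2)i} + 2^{s} - 1$ with $r < s$, and $c = 2^{(m+2)j} + 2^{t} - 1$, $d = 2^{(m+2)j} + 2^{u} - 1$ with $t < u$, where $r,s$ index the clauses containing $x_i$ and $t,u$ those containing $x_j$. The equation $b - a = d - c$ becomes $2^{s} - 2^{r} = 2^{u} - 2^{t}$, an equality of two differences of powers of two; since $\twonat$ is a Golomb ruler (\autoref{lem:2^natgr}), such a distance is realized by a unique pair, forcing $r = t$ and $s = u$. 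But then $x_i$ and $x_j$ both occur in clause $r$ and in clause $s$, so these two $2$-clauses consist of exactly the variables $\{x_i, x_j\}$ --- contradicting the standing restriction that $\phi$ has no two clauses on the same variables.

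I expect the balanced split to be the only real obstacle: the pure and single-crossing cases fall out immediately from \autoref{cor:vandcigr} and the separation bound of \autoref{obs:maxmindist}, whereas the balanced case is precisely where the two design choices of the construction --- encoding clause membership as exponentially spaced offsets $2^{j}-1$, and forbidding repeated clauses --- must work together. The degenerate three-mark conflict needs only a remark: with $b=c$ the shared middle mark lies in one fixed cluster, so such a conflict is always of single-crossing or pure type and never balanced, and is therefore already covered.
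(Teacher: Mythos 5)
Your proposal is correct and follows essentially the same route as the paper's proof: both classify the two conflicting distances as intra-$C_i$, intra-$C_j$, or cross-cluster, dispose of the pure cases via \autoref{cor:vandcigr}, rule out mixed cases by the magnitude gap of \autoref{obs:maxmindist}, use \autoref{lem:nonoverlap} to reduce the double-crossing case to the balanced one, and settle the balanced case via \autoref{lem:2^natgr} together with the standing restriction that no two clauses share the same variable pair. The only difference is cosmetic (you organize by how the four sorted marks split between clusters rather than by distance type), and your explicit remark on the degenerate three-mark conflict is a small bonus of clarity.
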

\begin{proof}
  Assume that there is an edge $e$ in $H_{C_i \cup C_j}$. This edge is due to two equal distances. A distance here can be measured by two marks in $C_i$ (a), by two marks in $C_j$ (b), or by one mark in $C_j$ and one mark in $C_i$ (c). 

  Because of \autoref{cor:vandcigr}, the edge $e$ cannot be due to two distances both corresponding to (a), or (b). Because of \autoref{lem:nonoverlap} the case that both distances correspond to (c) reduces to the case that one distance corresponds to (a) and one to (b).

  Assume that the edge $e$ is due to two distances corresponding to (a) and (b), respectively. Then there are $k < l$ and $o < p$ such that $2^l - 2^k = 2^p - 2^o$ holds. We observed in \autoref{lem:2^natgr} that $\twonat$ is a \gr . That means that~$l = p$ and $k = o$. Then, the variables $x_i$ and $x_j$ occur together in clause $l$ and $k$. However, we excluded such instances from our considerations.

  Assume that the edge $e$ is due to one distance $d_1$ corresponding to (a) and one distance $d_2$ corresponding to (c). Because of \autoref{obs:maxmindist}, the following relation holds.
  \begin{align*}
    d_1 \leq 2^m < 2^{2m + 3} \leq d_2
  \end{align*}
  Thus, $d_1$ and $d_2$ cannot be equal. $\lightning$
  
  The case (b) and (c) is analog to (a) and (c).
\end{proof}
\begin{corollary}
  The ruler $(C_i \cup C_j)^v := C_j \cup C_i \cup\{v\}$ is a \gr{} for all marks~$v \in V$ and integers~$i, j$ with $1 \leq i < j \leq n$.
\end{corollary}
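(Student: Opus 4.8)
The plan is to build directly on the blocks already established. By \autoref{lem:cicjgr} the ruler $C_i \cup C_j$ is already a \gr{}, so every edge of the characteristic hypergraph of $C_i \cup C_j \cup \{v\}$ must contain the single new mark $v$; it therefore suffices to show that no such edge exists. I would write $v = 2^{(m+2)p}$ and group all marks into ``blocks'' according to the power $2^{(m+2)\cdot}$ they are built on: the marks of $C_i$ form one block, those of $C_j$ another, and $v$ sits in block $p$ (which may coincide with $i$ or $j$). By \autoref{obs:maxmindist}, two marks in the same block are at distance at most $2^m$, whereas two marks in different blocks are at distance at least $2^{2m+3}$. Since these ranges are separated, the two equal distances realizing any edge (\autoref{lem:nonoverlap}) are either both \emph{short} (each pair inside one block) or both \emph{long} (each pair spanning two blocks).

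For short edges, all participating marks lie in a single block, so such an edge would live inside $C_i$, inside $C_j$, or inside $C_i \cup \{v\}$ or $C_j \cup \{v\}$ when $p \in \{i,j\}$. All of these are \gr s by \autoref{cor:vandcigr} and \autoref{cor:vccjgr}, so no short edge occurs; this also kills every degenerate three-mark edge that stays within one block.

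The long edges are the heart of the matter, and I would split on the location of $p$. If $p \notin \{i,j\}$, then any candidate edge meets three blocks ($p$, $i$, $j$) with at most one block used twice: the pair through $v$ joins block $p$ to one of $i,j$, and the other pair joins $i$ to $j$. Such a configuration is a subruler of $\twonatapprox(m+2, f) \cup \{2^{(m+2)q} + c\}$ for a suitable $f$ and offset $c \le 2^{m+2}$, which is a \gr{} by \autoref{lem:2^natapproxgr} and \autoref{lem:2^natapproxaddgr}; hence no long edge occurs. The remaining case $p \in \{i,j\}$, say $p = i$, is the obstacle: here only the two blocks $i$ and $j$ are available, so a four-mark edge is forced into a $2+2$ pattern (two marks $v, x$ from block $i$ and two marks $w, y$ from block $j$), which the lemmas on $\twonatapprox$---tolerating only a single doubled block---cannot absorb.

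For that case I would argue by hand. Writing $v = 2^{(m+2)i}$, $x = 2^{(m+2)i} + 2^{s_x} - 1$, $w = 2^{(m+2)j} + 2^{s_w} - 1$ and $y = 2^{(m+2)j} + 2^{s_y} - 1$, and equating the two long distances, the block terms cancel and leave the identity $2^{s_w} + 2^{s_x} - 1 = 2^{s_y}$ (the other admissible pairing gives the symmetric identity). This is impossible, exactly as in \autoref{lem:2^natgr}: the left-hand side is odd when $s_w = s_x$ and otherwise lies strictly between two consecutive powers of two, so it can never equal $2^{s_y}$. Long three-mark edges in this two-block case are excluded by a short position argument---one block lies entirely to one side of $v$, so the midpoint of such an edge can have no second far neighbour on the opposite side. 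I expect the $2+2$ subcase to be where the real work lies, since it is the only configuration not already covered by the $\twonatapprox$-lemmas and forces an explicit appeal to the arithmetic of powers of two.
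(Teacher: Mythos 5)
Your overall architecture (every new edge contains $v$; split by short versus long distances; dispatch the single-block and three-block cases to the $\twonat$/$\twonatapprox$ lemmas; handle the two-block $2{+}2$ long case by an explicit power-of-two identity) is sound, and the arithmetic in that $2{+}2$ case is correct. But there is one genuine gap: the claim that ``for short edges, all participating marks lie in a single block'' is false for 4-edges. The two equal distances of a 4-edge only need to be \emph{each} realized inside \emph{some} block, and these can be two \emph{different} blocks. Concretely, when $v = 2^{(m+2)i}$ the configuration $\{v, c_i, c_j, \hat{c_j}\}$ with $c_i \in C_i$, $c_j, \hat{c_j} \in C_j$ and $|v - c_i| = |c_j - \hat{c_j}|$ is a short--short candidate edge meeting two blocks; it contains $v$, so it is not excluded by \autoref{lem:cicjgr}, and it lies in none of $C_i$, $C_j$, $C_i \cup \{v\}$, $C_j \cup \{v\}$, so neither \autoref{cor:vandcigr} nor \autoref{cor:vccjgr} applies. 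Your case analysis never reaches it.

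The case is easily closed, which is why the statement survives: the required identity is $2^k - 1 = 2^l - 2^{k'}$ with clause indices $k, l, k' \geq 1$, whose left-hand side is odd while the right-hand side is even (equivalently, read it as $2^k - 2^0 = 2^l - 2^{k'}$ inside the \gr{} $\twonat$ and apply \autoref{lem:2^natgr} together with $k' \geq 1$). This is essentially how the paper's own (very terse) proof absorbs the configuration: it re-runs the case distinction of \autoref{lem:cicjgr} with $C_i$ redefined to contain $v$, so the cross-block case of one distance inside each clause set automatically covers the pair $\{v, c_i\}$ against a pair in $C_j$. If you add this one subcase (and note explicitly that a short 3-edge cannot straddle two blocks, since its two measurements share the middle mark), your proof is complete.
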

This follows directly from \autoref{cor:vccjgr} and the proof of \autoref{lem:cicjgr}, because every statement there still holds if we redefine $C_i$ to contain $v$.
\begin{lemma}
  \label{cor:cicjckgr}
  The ruler $C_i \cup C_j \cup C_k$ is a \gr{} for all integers~$i,j,k$ with $1 \leq i < j < k \leq n$.
\end{lemma}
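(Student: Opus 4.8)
The plan is to follow the strategy of \autoref{lem:cicjgr} but to exploit that a conflict involves at most four marks, so it can touch at most the three sets $C_i, C_j, C_k$ and, crucially, whenever it touches at most two of them we are already finished by the results proved so far. Concretely, I would assume for contradiction that the characteristic hypergraph of $C_i \cup C_j \cup C_k$ contains an edge $e$, and then case-split according to how many of the three clause sets the marks of $e$ are spread across.

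If all marks of $e$ lie in a single $C_s$ or in a union $C_s \cup C_t$ of two of the three sets, there is nothing new to prove: by \autoref{cor:vandcigr} each $C_s$ is a \gr{} and by \autoref{lem:cicjgr} each $C_s \cup C_t$ (ordering $s<t$) is a \gr{}, so the characteristic hypergraphs of these rulers have no edges at all. Since whether the marks of $e$ form a conflict depends only on their numerical values and not on the ruler in which we view them, this already contradicts the existence of $e$. Because an edge has at most four marks, the only case that is genuinely new is that the marks of $e$ are spread across all three sets, which forces the distribution to be $1$--$1$--$1$ (a three-mark edge) or $2$--$1$--$1$ (a four-mark edge).

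In the $1$--$1$--$1$ case I would take the single marks from $C_i, C_j, C_k$; they occupy the three distinct ``slots'' $2^{(m+2)i}, 2^{(m+2)j}, 2^{(m+2)k}$ with offsets bounded by $2^m - 1 < 2^{m+2}$, hence they form a subruler of $\twonatapprox(m+2,f)$ for a suitable $f$, which is a \gr{} by \autoref{lem:2^natapproxgr}; so these marks cannot constitute a conflict. The $2$--$1$--$1$ case is precisely what \autoref{lem:2^natapproxaddgr} was tailored for: two marks $p,q$ share one slot, say that of $C_i$, while the remaining two marks sit in the distinct slots of $C_j$ and $C_k$. I would choose $f$ so that $\twonatapprox(m+2,f)$ already realizes $p$ together with the other two marks in their respective slots, and then reinstate $q$ as the extra mark $2^{(m+2)i}+c$ (with $c = 2^b-1 \le 2^{m+2}$) permitted by \autoref{lem:2^natapproxaddgr}. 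The resulting ruler is a \gr{} and contains all four marks of $e$, so $e$ cannot be a conflict. This contradicts the assumed edge in every case, establishing the lemma.

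I expect the only real obstacle to be the bookkeeping of the $2$--$1$--$1$ case: one should note that, of the three ways to split the four marks into two distance pairs, the pairing that isolates the two same-slot marks equates a small intra-set distance ($\le 2^m$) with a large inter-set distance ($\ge 2^{2m+3}$) and is therefore impossible by \autoref{obs:maxmindist}, so any witnessing conflict must genuinely span all three slots through two inter-set distances --- exactly the configuration excluded by the \gr{} obtained from \autoref{lem:2^natapproxaddgr}. Reducing a three-set conflict to these modeled \gr s, rather than running a fresh analysis on equal differences as in \autoref{lem:cicjgr}, is what keeps the proof short; \autoref{lem:nonoverlap} remains available should one prefer to first normalize $e$ to non-overlapping measurements before classifying its distances.
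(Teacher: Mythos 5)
Your proof is correct and follows essentially the same route as the paper's: use \autoref{lem:cicjgr} to force the edge to span all three clause-sets, dispose of the $1$--$1$--$1$ case by modeling the marks as a subruler of $\twonatapprox(m+2,f)$ via \autoref{lem:2^natapproxgr}, and dispose of the $2$--$1$--$1$ case via \autoref{lem:2^natapproxaddgr}. The paper's version is just terser; your added bookkeeping on the distance pairings is sound but not needed beyond what those lemmas already deliver.
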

\begin{proof}
  Assume that there is an edge $e$ in $H_{C_i \cup C_j \cup C_k}$. Because of \autoref{lem:cicjgr}, $e$ must comprise marks from every of the three clause-sets. 

  Assume that $e$ is a 3-edge. It consists of one mark in $C_i$, one in $C_j$ and one in $C_k$. However, these three marks form a ruler that is a subruler of $\twonatapprox(d,f)$ with appropriate values for $d$ and $f$.~$\lightning$
  
  Now assume that $e$ is a 4-edge. Because there are three clause-sets and $e$ must comprise marks from every of them, $e$ contains two marks $m_1$ and $m_2$ from exactly one of the sets. Then, however, $e$ forms a subset of a ruler corresponding to \autoref{lem:2^natapproxaddgr}.~$\lightning$
\end{proof}
\begin{corollary}
   The ruler $C_i \cup C_j \cup C_k \cup \{v\}$ with $v \in V$ is a \gr{} for all integers $i,j,k,l$ with $1 \leq i < j < k \leq n$.
\end{corollary}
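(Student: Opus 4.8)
The plan is to mirror the proof of \autoref{cor:cicjckgr}: assume for contradiction that the characteristic hypergraph $H_{C_i \cup C_j \cup C_k \cup \{v\}}$ contains an edge $e$, and show, using the \gr-ness results accumulated so far, that no conflict involving $v$ can survive that was not already excluded.

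First I would constrain the shape of $e$. By \autoref{cor:cicjckgr} the set $C_i \cup C_j \cup C_k$ is conflict-free, so $e$ must contain $v$. By the preceding corollary every union $C_a \cup C_b \cup \{v\}$ with $\{a,b\} \subset \{i,j,k\}$ is also a \gr, and by \autoref{cor:vccjgr} so is each $C_a \cup \{v\}$; hence $e$ cannot be confined to $v$ together with marks from only two of the clause sets. Since an edge has at most four marks, this forces $e = \{v, c_i, c_j, c_k\}$ with exactly one mark $c_a \in C_a$ from each clause set.

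Next I would write $v = 2^{(m+2)i'}$ and distinguish whether $i'$ coincides with one of $i,j,k$. If $i' \notin \{i,j,k\}$, the four marks of $e$ live at four distinct scales $2^{(m+2)i'}, 2^{(m+2)i}, 2^{(m+2)j}, 2^{(m+2)k}$ and thus form a subruler of $\twonatapprox(m+2,f)$ for a suitable $f$; by \autoref{lem:2^natapproxgr} this is a \gr, so no such $e$ exists. If $i' \in \{i,j,k\}$, say $i' = i$, then $v$ and $c_i$ share the scale $2^{(m+2)i}$—their distance is at most $2^m$, far below the distance $2^{2m+3}$ that \autoref{obs:maxmindist} guarantees between marks of different scales—so the four marks form a subruler of a $\twonatapprox(m+2,f)$ augmented by one extra mark of the form $2^{(m+2)i}+c$ with $0 \leq c \leq 2^m$; by \autoref{lem:2^natapproxaddgr} this too is a \gr. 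In both cases $e$ cannot exist, so $C_i \cup C_j \cup C_k \cup \{v\}$ is a \gr.

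I expect the only delicate point to be the bookkeeping in the middle step—verifying that the prior corollaries really exclude every sub-configuration of $e$ that drops one clause set—together with the case distinction on $i'$: a fresh scale reduces the argument to \autoref{lem:2^natapproxgr}, whereas a coinciding scale reduces it to \autoref{lem:2^natapproxaddgr}. Both reductions are immediate once the scale structure is laid out, and no estimate beyond \autoref{obs:maxmindist} is required; the main risk is merely overlooking the coinciding-scale case, which is precisely what the terse ``redefine $C_i$ to contain $v$'' phrasing of the previous corollary glosses over.
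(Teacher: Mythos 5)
Your proposal is correct and follows essentially the same route as the paper, which disposes of this corollary in one line by re-running the proof of \autoref{cor:cicjckgr} with $v$ included: both arguments reduce the only possible edge to a four-mark set with one mark per scale (handled by \autoref{lem:2^natapproxgr}) or with two marks sharing a scale (handled by \autoref{lem:2^natapproxaddgr}). Your version merely makes explicit the case distinction on whether the index of $v$ coincides with one of $i,j,k$, which the paper leaves implicit.
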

This again directly follows from the proof of \autoref{cor:cicjckgr}.
\begin{lemma}
  \label{lem:cgr}
  The ruler $C$ is a \gr.
\end{lemma}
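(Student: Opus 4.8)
The plan is to reduce the claim to a statement about at most four clause sets and then invoke \autoref{lem:2^natapproxgr}. Since every edge in a characteristic hypergraph comprises at most four marks, any hypothetical conflict in $H_{C}$ lives inside $C_{i} \cup C_{j} \cup C_{k} \cup C_{l}$ for some (not necessarily distinct) indices. Hence it suffices to prove that $C_{i} \cup C_{j} \cup C_{k} \cup C_{l}$ is a \gr{} for every choice of $1 \le i < j < k < l \le n$; this is exactly the case left open by \autoref{cor:cicjckgr}, which handles unions of three clause sets.

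First I would suppose, for contradiction, that there is an edge $e$ in $H_{C_i \cup C_j \cup C_k \cup C_l}$ and argue that $e$ must place exactly one mark in each of the four clause sets. Indeed, if $e$ used marks from at most three of the sets, then $e$ would already be an edge in the union of three clause sets, contradicting \autoref{cor:cicjckgr}. Since $e$ has at most four marks and must meet all four sets, it is necessarily a $4$-edge with precisely one mark per set. (A $3$-edge or a $4$-edge with a repeated set would touch at most three sets.)

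Next I would observe that these four marks form a subruler of $\twonatapprox(m + 2, f)$ for a suitable function $f$. Each mark of $C_{o}$ has the form $2^{(m+2)o} + (2^{t} - 1)$ with $1 \le t \le m$, so writing $d := m + 2$ the offset $2^{t}-1$ lies in $\{0, \dots, 2^{d}\}$ as required, and we may choose $f$ to realise the four relevant offsets at the indices $i, j, k, l$ (its values elsewhere being irrelevant). By \autoref{lem:2^natapproxgr} the ruler $\twonatapprox(m + 2, f)$ is a \gr{}, so it contains no conflict, and neither does its subruler $C_i \cup C_j \cup C_k \cup C_l$. This contradicts the existence of $e$, completing the argument.

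I expect the only real subtlety to be the bookkeeping in the first two paragraphs: making sure that ``at most four marks per edge'' genuinely lets us restrict attention to four clause sets, and that the case analysis on how many distinct sets $e$ meets is exhaustive, so that the residual case is always the clean one of four marks from four distinct sets. Once that reduction is settled, the heavy lifting is carried entirely by \autoref{lem:2^natapproxgr}, and no further computation is needed.
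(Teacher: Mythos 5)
Your proposal is correct and follows essentially the same route as the paper: reduce to the union of four distinct clause sets via the earlier lemmas, observe that any remaining edge must take exactly one mark from each of the four sets, and then rule it out by viewing those four marks as a subruler of $\twonatapprox(m+2,f)$ and invoking \autoref{lem:2^natapproxgr}. Your version merely spells out the bookkeeping (the exhaustive case analysis on how many sets the edge meets, and the bound on the offsets $2^t-1$) that the paper leaves implicit.
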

\begin{proof}
  Because of \autoref{lem:cicjgr} and \autoref{cor:cicjckgr} it only remains to show that $C_i \cup C_j \cup C_k \cup C_l$ is a \gr . This however is not hard to see, as any edge in the corresponding characteristic hypergraph must consist of four marks from every one of the clause-sets. Such an edge again would be a subruler of~$\twonatapprox(m + 2,f)$ with an appropriate function~$f$.~$\lightning$
\end{proof}
\begin{corollary}
  The ruler $C \cup \{v\}, v \in V$ is a \gr .
\end{corollary}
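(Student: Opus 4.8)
The plan is to argue by contradiction, exploiting that edges of a characteristic hypergraph are \emph{local} objects depending only on the marks they contain, so that any conflict in $C \cup \{v\}$ can be confined to a union of few clause-sets that we have already proven to be conflict-free. Concretely, suppose $C \cup \{v\}$ is not a \gr{}. Then its characteristic hypergraph contains some edge $e$. Since $C$ is a \gr{} by \autoref{lem:cgr}, the hypergraph $H_C$ has no edges; because whether a set of marks forms an edge is decided solely by the pairwise distances among those marks, $e$ cannot be contained in $C$, and therefore $v \in e$.

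Next I would use that edges are small. An edge comprises at most four marks, so $e \setminus \{v\}$ consists of at most three marks of $C$, and these lie in at most three clause-sets, say $C_i$, $C_j$, $C_k$ (taking fewer sets if fewer are needed). Hence $e \subseteq C_i \cup C_j \cup C_k \cup \{v\}$. This is exactly the situation covered by the corollary following \autoref{cor:cicjckgr}, which states that $C_i \cup C_j \cup C_k \cup \{v\}$ is a \gr{}; the analogous two-set and one-set cases are handled by the corollary following \autoref{lem:cicjgr} and by \autoref{cor:vccjgr}, respectively. In every case $e$ is contained in a subruler that is a \gr{}, whose characteristic hypergraph therefore has no edges. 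Since an edge of $H_{C \cup \{v\}}$ that happens to be contained in a subset $S$ is also an edge of $H_S$, this contradicts $e$ being an edge, and the result follows.

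The proof is essentially bookkeeping, so there is no single hard computational step; the only point that deserves care is the locality observation used above, namely that membership of a vertex set in the edge set of a characteristic hypergraph is determined purely by the marks in that set and not by the surrounding ruler. With this in hand, the statement follows in the same manner in which the earlier $v$-augmented corollaries followed from the proofs of their respective lemmas, and the distance estimates of \autoref{obs:maxmindist} together with \autoref{lem:nonoverlap} are only invoked indirectly, through those prior statements.
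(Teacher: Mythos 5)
Your argument is correct and is exactly the one the paper leaves implicit: since edges are determined locally by the marks they contain, any edge of $H_{C \cup \{v\}}$ must contain $v$ (as $C$ is Golomb by \autoref{lem:cgr}) and its at most three remaining marks lie in at most three clause-sets, so it would already appear in one of the previously established Golomb subrulers $C_j^v$, $(C_i \cup C_j)^v$, or $C_i \cup C_j \cup C_k \cup \{v\}$. This matches the paper's stated strategy of reducing all conflicts to unions of at most four of the building blocks, so nothing further is needed.
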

\begin{corollary}
  \label{cor:2-2-edges}
  If there are edges in $H_R$, there only are 4-edges that contain two marks from $V$ and two from $C$.
\end{corollary}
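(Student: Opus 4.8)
The plan is to argue by a short case analysis on how the marks of a putative edge split between $V$ and $C$, using the Golomb property of the building blocks assembled in the preceding lemmas to eliminate every split except the intended one. Since every edge in a characteristic hypergraph is a $3$-edge or a $4$-edge, it suffices to control, for an arbitrary edge $e$ of $H_R$, the two quantities $p := |e \cap V|$ and $q := |e \cap C|$, which satisfy $p + q = |e| \in \{3, 4\}$.

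First I would record the hereditary fact that a set of marks lying inside a \gr{} can never form an edge: an edge encodes a conflict, i.e.\ two pairs of its marks measuring the same distance, and a \gr{} admits no repeated distance, so every subruler of a \gr{} is conflict-free. With this in hand I would bound $q$ from below. If $q \leq 1$, then $e$ is contained in $V$ (when $q = 0$) or in $V \cup \{c\}$ for the single mark $c \in e \cap C$ (when $q = 1$); both $V$ (\autoref{cor:vandcigr}) and $V^c$ (\autoref{cor:vccjgr}) are \gr s, so $e$ cannot be an edge, and hence $q \geq 2$. Symmetrically, if $p \leq 1$ then $e$ is contained in $C$ or in $C \cup \{v\}$ for the single $v \in e \cap V$; since $C$ (\autoref{lem:cgr}) and $C \cup \{v\}$ (the corollary immediately preceding this one) are both \gr s, $e$ cannot be an edge, forcing $p \geq 2$.

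Combining the two bounds gives $p + q \geq 4$, while $p + q = |e| \leq 4$; hence $|e| = 4$ and $p = q = 2$. Thus any edge of $H_R$ is a $4$-edge containing exactly two marks from $V$ and two from $C$, which is precisely the claim.

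I do not expect a genuine obstacle here: the statement is a bookkeeping corollary that packages the chain of \gr{}-facts already established for $V$, $C$, and their small augmentations. The only points requiring a moment of care are the completeness of the case distinction and the routine hereditary observation that subrulers of \gr s carry no conflicts; once those are fixed, every split other than the $(2,2)$ one is closed by citing exactly one earlier result.
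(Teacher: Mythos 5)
Your argument is correct and is precisely the reasoning the paper leaves implicit: the corollary is stated without proof as a direct consequence of the chain of Golomb-ruler facts for $V$, $V^c$, $C$, and $C\cup\{v\}$, and your $(p,q)$ case analysis with the hereditary observation about subrulers is exactly how those facts combine. No difference in approach worth noting.
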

\begin{lemma}
  \label{lem:clausesedges}
  If there is an edge $e$ in $H_R$, then it intersects with $V$ in two marks corresponding to the variables $x_i$ and $x_j$ and it intersects in exactly one mark with $C_i$ and in exactly one mark with $C_j$. The variables $x_i$ and $x_j$ are together in one clause in $\phi$.
\end{lemma}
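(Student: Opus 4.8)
The plan is to start from \autoref{cor:2-2-edges}, which already tells us that any edge $e$ of $H_R$ is a 4-edge meeting $V$ in exactly two marks and $C$ in exactly two marks. So I may write $e = \{v_1, v_2, c, c'\}$ with $v_1 = 2^{(m+2)i}$ and $v_2 = 2^{(m+2)j}$ in $V$; since the elements of $V$ carry pairwise distinct exponents, this immediately forces $i \neq j$, say $i < j$, and identifies the two variables $x_i, x_j$. The remaining marks $c, c'$ lie in some $C_s$ and $C_{s'}$, and the whole burden of the lemma is to prove $\{s, s'\} = \{i, j\}$ and, from there, that $x_i$ and $x_j$ share a clause.

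The main device is to rewrite the edge condition as an equal-sum condition and then to separate ``scale'' from ``offset''. Every mark has the shape $2^{(m+2)\sigma} + o$ with $\sigma \geq 1$ its variable index and $0 \leq o < 2^m$ (namely $o = 0$ for a $V$-mark and $o = 2^t - 1$ for a $C$-mark coming from the $t$-th clause). By \autoref{lem:nonoverlap} I sort the four marks as $w < x < y < z$ with $x - w = z - y$; this is exactly the statement that the two pairs $\{w, z\}$ and $\{x, y\}$ have equal sum. Writing $B := 2^{m+2}$, the sum of any pair decomposes into a scale part $B^{\sigma_1} + B^{\sigma_2}$ and an offset part lying in $[0, 2^{m+1})$. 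First I would observe that, since all scales are $\geq 1$, every scale part is divisible by $B$, whereas the two offset parts differ by strictly less than $2^{m+1} < B$. Hence equality of the two pair-sums forces equality of their scale parts; and by uniqueness of base-$B$ representations (here $B \geq 8 > 2$, so the digit~$2$ arising when $\sigma_1 = \sigma_2$ is still admissible) this means the two pairs carry the same multiset of scales.

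From this the scale statement drops out: the four scales $\{i, j, s, s'\}$ are split into two pairs with identical scale-multisets, which is possible only if the multiset has the shape $\{p, p, q, q\}$. As $i \neq j$, the pair $\{i, j\}$ already supplies one $p$ and one $q$, so $\{s, s'\} = \{i, j\}$; that is, one of $c, c'$ lies in $C_i$ and the other in $C_j$, in exactly one mark each. Finally, writing $c = 2^{(m+2)i} + 2^t - 1 \in C_i$ and $c' = 2^{(m+2)j} + 2^{t'} - 1 \in C_j$, the scale grouping coming from the distance bounds of \autoref{obs:maxmindist} puts the marks in the order $v_1 < c < v_2 < c'$, so the non-overlap equation $x - w = z - y$ reads $c - v_1 = c' - v_2$, i.e.\ $2^t - 1 = 2^{t'} - 1$, hence $t = t'$. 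Thus $x_i$ and $x_j$ both occur in clause $t$, which is the desired shared clause.

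I expect the delicate point to be the scale/offset separation in the second paragraph: one must check carefully that the worst-case offset part (a pair consisting of two $C$-marks, with offsets summing to less than $2^{m+1}$) is still strictly smaller than the granularity $B = 2^{m+2}$ of the scale parts, and that all exponents are genuinely $\geq 1$ so that scale parts are multiples of $B$. Everything else is bookkeeping, but the two facts ``$2^{m+1} < 2^{m+2}$'' and ``scales $\geq 1$'' are exactly what make the high-order and low-order contributions non-interfering and let the base-$B$ uniqueness argument go through.
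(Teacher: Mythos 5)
Your proof is correct, and although it starts and ends where the paper's proof does --- \autoref{cor:2-2-edges} for the $2{+}2$ split between $V$ and $C$, and the equation $c - v_1 = c' - v_2$ to extract the common clause index --- the core is handled by a genuinely different argument. The paper rules out the bad placements of the two $C$-marks by a case analysis: a mark from $C_k$ with $k \notin \{i,j\}$ is excluded by reducing to \autoref{lem:2^natapproxaddgr}, and two marks from the same $C_k$ are excluded via the distance bounds of \autoref{obs:maxmindist}. You instead rewrite the edge condition as an equal-pair-sum condition (which is legitimate: for sorted marks $w<x<y<z$ both admissible pairings are equivalent to $x+y=w+z$, consistent with \autoref{lem:nonoverlap}), reduce modulo $B = 2^{m+2}$ to separate the scale parts (multiples of $B$, since all exponents are $\geq 1$) from the offset parts (each pair-sum of offsets being $< 2^{m+1} < B$), and invoke uniqueness of base-$B$ digits (valid since $B > 2$, so the digit $2$ arising from a repeated exponent is admissible) to conclude that the two pairs carry the same multiset of exponents; with $i \neq j$ this forces $\{s,s'\} = \{i,j\}$ in one stroke. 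This is a clean, self-contained replacement for the paper's case analysis --- it essentially re-derives inline, for the four marks at hand, the content of the auxiliary lemmas about rulers of the form $\twonatapprox(d,f)$ --- at the cost of not reusing the machinery the paper has already built. A small bonus of your write-up is that you justify the ordering $v_1 < c < v_2 < c'$ before writing down the final equation, a step the paper's proof asserts without comment; the two inequalities you flag as delicate ($2^{m+1} < 2^{m+2}$ and exponents $\geq 1$) are indeed the load-bearing ones, and both check out.
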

\begin{proof}
  Let $e$ be an edge in $H_R$. Because of \autoref{cor:2-2-edges} it intersects with $V$ in two marks. Let these two marks be $v_i = 2^{(m + 2) i}$ and $v_j = 2^{(m + 2) j}$, $i < j$, i.e. the marks corresponding to the variables $x_i$ and $x_j$ in $\phi$. 

  The edge $e$ cannot contain a mark $c_k \in C_k$ with $k \notin \{i,j\}$, because then $e$ would be a subset of a ruler corresponding to \autoref{lem:2^natapproxaddgr}.  However, $e$ can also not contain two marks $c_k, \hat{c_k} \in C_k$ with $k \in \{i, j\}$: Any distance measured by marks in~$\{v_k, c_k, \hat{c_k}\}$ is at most $2^m$ and those three marks are at least $2^{2m+3}$ units away from the fourth mark (\autoref{obs:maxmindist}).

  Let $e$ contain the mark $c_i = 2^{(m + 2) i} + 2^k - 1 \in C_i$ and the fourth mark~$c_j \in C_j$. Then, the following equations hold.
  \begin{align*}
    c_i - v_i &= c_j - v_j &  & \Leftrightarrow & c_j &= 2^{(m + 2) j} + 2^k - 1
  \end{align*}
  This means that $x_i$ and $x_j$ are together in clause $k$.
\end{proof}
\begin{lemma}
  \label{lem:clauseedgeinter}
  Let $e,f$ be two different edges in $H_R$, then $e \cap f \subseteq V$.
\end{lemma}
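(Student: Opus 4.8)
The plan is to reduce the statement to a single clean fact: \emph{every mark of $C$ lies in at most one edge of $H_R$}. Once that is established, two distinct edges $e \neq f$ cannot share any mark of $C$, so whatever they share must lie in $V$, which is exactly $e \cap f \subseteq V$.

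First I would invoke the structure already determined for edges of $H_R$. By \autoref{cor:2-2-edges} and \autoref{lem:clausesedges}, every edge has the form $\{v_i, v_j, c_i, c_j\}$, where $v_i = 2^{(m+2)i}$ and $v_j = 2^{(m+2)j}$ are the variable-marks of two variables $x_i, x_j$ that jointly occur in some clause $k$, and $c_i = 2^{(m+2)i} + 2^k - 1 \in C_i$ and $c_j = 2^{(m+2)j} + 2^k - 1 \in C_j$ are the corresponding clause-marks. In particular an edge is completely specified once we know the unordered variable pair and the clause realizing it.

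Then I would show that a single clause-mark $c \in C$ already pins down its incident edge. Suppose $c \in C_p$, so $c = 2^{(m+2)p} + 2^k - 1$ for some clause $k$. Since the offset satisfies $0 \le 2^k - 1 \le 2^m - 1 < 2^{m+2}$ (the bound $2^m$ on such offsets is recorded in \autoref{obs:maxmindist}) while $2^{(m+2)p}$ is a multiple of $2^{m+2}$, this representation is unique: from $c$ one reads off both the variable index $p$ (hence $v_p$) and the clause index $k$. By \autoref{lem:clausesedges}, any edge containing $c$ must be $\{v_p, v_q, c, c_q\}$, where $x_q$ is a variable sharing clause $k$ with $x_p$ and $c_q = 2^{(m+2)q} + 2^k - 1$. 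Because $\phi$ is in 2-CNF, clause $k$ contains exactly two variables, so the partner $x_q$ of $x_p$ in clause $k$ is uniquely determined; hence $v_q$ and $c_q$, and thus the entire edge, are determined by $c$ alone.

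Finally I would conclude by contradiction: if $e \neq f$ shared a mark $c \in C$, then by the preceding paragraph both $e$ and $f$ would coincide with the unique edge determined by $c$, forcing $e = f$. Therefore $e \cap f$ contains no mark of $C$, i.e.\ $e \cap f \subseteq V$. I do not expect a genuine obstacle here; the only points requiring care are the unambiguous decoding of a clause-mark into its variable and clause indices (immediate from the magnitude separation of $V$- and $C$-offsets in \autoref{obs:maxmindist}) and the observation that a 2-CNF clause fixes the partner variable, both of which are elementary given the earlier lemmas.
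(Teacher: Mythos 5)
Your proof is correct and follows essentially the same route as the paper's: both arguments use \autoref{lem:clausesedges} to show that a shared clause-mark $c \in C_p$ forces each edge containing it to also contain $v_p$ and to correspond to the clause encoded by $c$, and then conclude from the 2-CNF structure (a clause has only two variables) that the two edges must coincide. Your write-up merely makes explicit the unambiguous decoding of a clause-mark into its variable and clause indices, which the paper leaves implicit.
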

\begin{proof}
  Assume that $e,f$ have a non-empty intersection and $e \cap f \cap C_k \neq \emptyset$. As we observed in \autoref{lem:clausesedges}, $e \cap f$ also contains $v_k \in V$. This means that either $e = f$ or there are three variables in one clause.~$\lightning$
\end{proof}
We are now ready to prove the following:
\begin{lemma}
  \wacs{} is polynomial-time many-to-one reducible to \grsm.
\end{lemma}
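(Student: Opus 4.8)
The plan is to show that \autoref{con:red} is the desired reduction, so I must verify that it is computable in polynomial time and that it maps yes-instances to yes-instances and vice versa. For the running time, note that the largest mark produced is below $2^{(m+2)n}+2^m$, hence has bit-length $\bigO(mn)$, and that $R=V\cup C$ contains $|V|=n$ variable marks and $|C|=2m$ clause marks (each of the $m$ clauses contributing exactly two clause marks), so $|R|=n+2m$ and the whole instance has size polynomial in $\phi$. The structural groundwork is already done: combining \autoref{cor:2-2-edges}, \autoref{lem:clausesedges}, and \autoref{lem:clauseedgeinter}, the edges of $H_R$ are in one-to-one correspondence with the clauses---the clause on variables $x_i,x_j$ induces the single edge $\{v_i,v_j,c_i,c_j\}$ with $v_i,v_j\in V$ and $c_i\in C_i,c_j\in C_j$---and, crucially, every clause mark lies in exactly one edge. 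With the target value $l=k+2m$, finding a \gr{} $R'\subseteq R$ with at least $l$ marks is equivalent to deleting at most $|R|-l=n-k$ marks so that every edge of $H_R$ loses a vertex; that is, the deleted marks form a transversal of $H_R$.

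For the forward direction, I would start from a satisfying assignment of $\phi$ with at least $k$ variables set to true and let $F$ be the set of false variables; since $\phi$ is antimonotone and satisfied, every clause contains a false variable, so $F$ ``hits'' every clause. I then keep all clause marks together with the true-variable marks, i.e. set $R':=C\cup\{v_i:x_i\text{ true}\}$, deleting exactly the $|F|\le n-k$ false-variable marks. Then $|R'|=2m+(n-|F|)\ge 2m+k=l$, and every clause edge $\{v_i,v_j,c_i,c_j\}$ is destroyed because one of $v_i,v_j$ has been deleted; hence $R'\subseteq R$ is a \gr{} with at least $l$ marks and $(R,l)$ is a yes-instance of \grsm.

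For the backward direction---which I expect to be the main obstacle---I would take a \gr{} $R'\subseteq R$ with $|R'|\ge l$ and consider its deletion set $D:=R\setminus R'$, which is a transversal of $H_R$ with $|D|\le n-k$. The difficulty is that $D$ may break edges by deleting clause marks rather than variable marks, so reading the assignment directly off $R'\cap V$ need not satisfy $\phi$. The repair is an exchange argument that relies on each clause mark lying in a \emph{unique} edge (\autoref{lem:clauseedgeinter}): for every clause mark in $D$, replace it by one of the two variable marks of its edge. Since that clause mark covered only that one edge and the substituted variable mark still covers it, the result $D^{*}\subseteq V$ is again a transversal with $|D^{*}|\le|D|\le n-k$. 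Setting $x_i$ to false exactly when $v_i\in D^{*}$ then yields at least $n-|D^{*}|\ge k$ true variables, and every clause is satisfied because $D^{*}$ hits each clause edge through one of its two variable marks. Thus $(\phi,k)$ is a yes-instance of \wacs, which together with polynomial-time computability establishes the claimed reduction.
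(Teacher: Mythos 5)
Your proposal is correct and matches the paper's proof in all essentials: it relies on the same structural lemmas (\autoref{cor:2-2-edges}, \autoref{lem:clausesedges}, \autoref{lem:clauseedgeinter}), uses the same solution $C\cup\{v_i: x_i \text{ true}\}$ in the forward direction, and the same exchange argument in the backward direction. The only difference is cosmetic: you phrase the backward direction in terms of the deletion set $D=R\setminus R'$ as a transversal of $H_R$ and swap clause marks out of $D$, whereas the paper works directly with $R'$ and swaps variable marks out of $R'$ for uncovered clause marks---these are the same operation viewed from complementary sides.
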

\begin{proof}
   Let $\phi$ be an antimonotone boolean formula in 2-CNF with $m$ clauses, let $x_1, ..., x_n$ be the variables in $\phi$ and let $k$ be an integer, that is, let $\phi$ and~$k$ constitute an instance of \wacs. Construct an instance for \grsm{}, i.e., a ruler~$R = V \cup C$ and an integer~$k + 2m$ according to \autoref{con:red}. 

  The marks to be constructed are given explicitly by the variables and their position in the clauses and the number of marks is polynomial in the number of variables and clauses. The length of the binary encoding of the marks is also bounded by a polynomial in the number of variables and clauses and thus the construction is possible in polynomial time.

  Given a truth assignment that satisfies~$\phi$ with $k$ positive variables, we can find a \gr{} with $k + 2m$~marks in $R$: We have seen in \autoref{lem:clausesedges} that every edge in $H_R$ corresponds to a clause. In one clause at most one variable can be assigned true. Therefore the marks corresponding to the positive variables form a \gr{} with $k$~marks. Additionally, for every edge~$e$ in $H_R$, $e \cap C$ contains only degree-one vertices, because of \autoref{lem:clauseedgeinter}. Because every edge contains at least one mark corresponding to a negative variable, we can add these degree-one vertices to the \gr{} constructed from the positive variables, yielding a \gr{} with $k + 2m$~marks.

  Given a \gr{} $R' \subseteq R$ with $k + 2m$~marks, we construct a truth assignment that satisfies~$\phi$ with $k$~positive variables: We assign every variable that corresponds to a mark in $R'$ the value true, every other variable is assigned false. Because there are exactly $n + 2m$ marks in $R$, any such truth assignment has at least $k$ positive variables. However, this simple assignment does not necessarily satisfy $\phi$: There may be two marks in $V$ and $R'$ that are in the same clause in $\phi$ and thus are in the same edge in $H_R$. But in every such edge $e$, there must be one mark $r_e$ that is not in $R'$ and, thus, we can simply exchange an arbitrary mark from $e \cap V$ in $R'$ with $r_e$ and get an assignment that satisfies~$\phi$.
\end{proof}
Now the following theorem directly follows:
\begin{theorem}
  \grsm{} is NP-complete, even if there are only conflicts with four marks in the input instance.
\end{theorem}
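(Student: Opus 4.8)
The plan is to assemble the theorem from three ingredients, essentially all of which have been prepared above: membership in NP, NP-hardness inherited from the reduction, and the observation that the reduction already produces the claimed restricted instances.

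First I would establish that \grsm{} lies in NP. Given an instance $(R, l)$, the natural certificate is a subset $R' \subseteq R$ together with the claim that $R'$ is a \gr{} with $|R'| \geq l$. Since $R$ is part of the input, $R'$ has size at most $|R|$ and is thus of polynomial length. To verify the certificate, I would compute the characteristic hypergraph $H_{R'}$ with \autoref{alg:hgraphconstrimpr} in $\bigO(|R'|^3)$ time and check, using the hypergraph characterization of \gr s, that $H_{R'}$ has no edges; testing $|R'| \geq l$ is trivial. Hence verification is polynomial and \grsm{} is in NP.

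Second, NP-hardness follows immediately from the reduction established in the preceding lemma, which maps an instance of \wacs{} to an instance of \grsm{} in polynomial time and preserves yes- and no-answers. Because \wacs{} is NP-complete (shown above by reduction from \textsc{Independent Set}), \grsm{} is NP-hard, and together with membership in NP it is NP-complete.

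Finally, for the refined statement I would invoke \autoref{cor:2-2-edges}: the ruler $R = V \cup C$ produced by \autoref{con:red} has a characteristic hypergraph whose only edges are 4-edges consisting of two marks from $V$ and two from $C$. Thus the image of the reduction already consists entirely of instances in which every conflict involves exactly four marks, so the NP-hardness, and hence the NP-completeness, persists even under this restriction. I do not expect a genuine obstacle here, since the substantive work lives in \autoref{con:red} and the chain of lemmas verifying that $R$ contains no spurious conflicts. The only point requiring a moment of care is the membership argument: one must confirm that the certificate $R'$ is of polynomial size relative to the binary encoding of the input set $R$ and that \gr{} verification is polynomial, both of which are guaranteed by the cubic-time construction of the characteristic hypergraph.
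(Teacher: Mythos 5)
Your proof is correct and follows essentially the same route as the paper, which states that the theorem ``directly follows'' from the reduction lemma: NP-hardness via the \wacs{} reduction, the restriction to four-mark conflicts via \autoref{cor:2-2-edges}, and routine NP membership. Nothing to add.
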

The reduction also yields a W[1]-hardness result for a modified problem:
\probsp{}
\decprob{Golomb Subruler $\geq$ Double Conflicts}{A ruler $R \subseteq \mathbb{N}$ and $k\in\mathbb{N}$.}{Is there a \gr{} $R' \subseteq R$, such that $|R'|$ is at least $k$ plus two times the number of edges in $H_R$?}
\probsp{}
\begin{corollary}
  \label{anch:grsm'w-hard}
  \textsc{Golomb Subruler $\geq$ Double Conflicts} is W[1]-hard with respect to parameter $k$.
\end{corollary}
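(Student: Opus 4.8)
The plan is to reuse \autoref{con:red} essentially verbatim, but now as a \emph{parameterized} reduction whose source is \wacs{} taken with parameter~$k$. The first step is to record that \wacs{} is already W[1]-hard with respect to~$k$: the reduction from \textsc{Independent Set} sketched earlier maps an instance $(G,k)$ to the formula $\phi(G)$ together with the \emph{same} bound~$k$ on the number of true variables, so the target parameter depends only on~$k$, the construction is polynomial-time, and \textsc{Independent Set} is W[1]-complete with respect to~$k$; hence this is a parameterized reduction and \wacs{} inherits W[1]-hardness for parameter~$k$.

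The core of the argument is the observation that the \grsm{} instance produced by \autoref{con:red}, namely the ruler~$R = V \cup C$ together with the threshold $l = k + 2m$, is \emph{literally} an instance of \textsc{Golomb Subruler $\geq$ Double Conflicts} with parameter~$k$, \emph{provided} that the number of edges of $H_R$ equals the number of clauses~$m$. Indeed, that problem asks for a \gr{} $R' \subseteq R$ with $|R'| \geq k + 2\lvert E(H_R)\rvert$, so if $\lvert E(H_R)\rvert = m$ this threshold is exactly $k + 2m = l$, and the yes/no equivalence of the two instances is precisely the one already established for the \grsm{} reduction. Thus no new correctness proof is needed; everything reduces to pinning down the edge count.

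Establishing $\lvert E(H_R)\rvert = m$ is the step I expect to be the main obstacle, but all the necessary facts are in hand. By \autoref{cor:2-2-edges} every edge of $H_R$ is a 4-edge with two marks in~$V$ and two in~$C$; by \autoref{lem:clausesedges} each such edge meets $V$ in the marks $v_i,v_j$ of two variables that jointly occur in some clause, and meets $C_i$ and $C_j$ in one mark each, so every edge is charged to a unique clause; and by \autoref{lem:clauseedgeinter} distinct edges share only $V$-marks, so no clause is charged twice. Conversely, each clause~$j$ on variables $x_i,x_{i'}$ does produce an edge, since the distance $2^j - 1$ is measured by both pairs $(v_i,c_i)$ and $(v_{i'},c_{i'})$ without overlap, yielding the 4-edge $\{v_i,c_i,v_{i'},c_{i'}\}$. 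The restriction to formulas in which no two clauses consist of the same variables (imposed in the problem restriction) guarantees that these edges are pairwise distinct, so clauses and edges are in bijection and $\lvert E(H_R)\rvert = m$.

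Finally I would assemble the pieces: the map $(\phi,k) \mapsto (R,k)$ is computable in polynomial time (hence in $f(k)\,p(n)$ time), its target parameter~$k$ depends only on the source parameter, and—by the edge count together with the correctness of \autoref{con:red}—it preserves membership. It is therefore a parameterized reduction from \wacs{} (W[1]-hard with respect to~$k$) to \textsc{Golomb Subruler $\geq$ Double Conflicts}, which proves the latter is W[1]-hard with respect to~$k$.
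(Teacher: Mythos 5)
Your proposal is correct and follows essentially the same route as the paper: compose the parameter-preserving reduction from \textsc{Independent Set} to \wacs{} with \autoref{con:red}, and observe that the threshold $k+2m$ makes the latter a parameterized reduction to \textsc{Golomb Subruler $\geq$ Double Conflicts}. The only difference is that you explicitly verify the bijection between clauses and edges of $H_R$ (so that $\lvert E(H_R)\rvert = m$), a fact the paper's proof asserts implicitly; that verification is sound and arguably worth spelling out.
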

\begin{proof}
  Observe that there is a parameterized reduction from \textsc{Independent Set} parameterized with the sought independent set size to \wacs{} parameterized with the number of positive variables in a satisfying truth assignment (we have given the idea for this on page \pageref{anc:istowacs}). The reduction from \wacs{} to \grsm{} we have given above maps a formula $\phi$ and the parameter ``number of positive variables'' $k$ to a ruler $R$ and the sought ruler size $k + 2m$, where $m$ is the number of edges in $H_R$. This means, this reduction identifies the parameters of \wacs{} and \textsc{Golomb Subruler $\geq$ Double Conflicts}, making it a parameterized reduction.
\end{proof}

\section{Fixed-Parameter Tractability of Constructing Golomb Rulers}
\label{sec:fpt}

The number of implementations of the search for optimal or near-optimal \gr s hints to the importance of this problem (see e.g. \cite{CF05, CDFH07, Dim02, Distrib, PTC03, Ran93, TPC07}). Unfortunately, several problems closely related to the construction of \gr s have been proven to be NP-complete (cf. \secref{sec:prcomplex}).

We now look at the fixed-parameter tractability of constructing \gr s with the goal to get exact and relatively efficient algorithms. In particular, we focus on the following problem.
\probsp{}
\decprob{\grsm}{A finite set $R \subseteq \mathbb{N}$ and $n\in\mathbb{N}$.}{Does there exist a \gr{} $R' \subseteq R$ of at least $n$ marks?}
\probsp{}
For a given $R$ in \grsm, let $H_R=(R,E)$ be the characteristic hypergraph as defined in \secref{sec:hgraphconstr}. Now the task is to find a subset $R' \subseteq R$ such that the induced subgraph $H_{R'}=H_R[R']$ contains no edges. This can either be done by selecting marks to keep or by deleting marks from $R$. 

\subsection{Mark Deletion Parameter}
\label{sec:markdeletion}

If we decide to delete marks, we can parameterize \grsm{} with the maximum number of allowed mark deletions. Together with the notion of hypergraph characterization, we can reformulate it as follows.
\probsp{}
\decprob{\grmd}{A finite set $R \subseteq \mathbb{N}$ and $k\in\mathbb{N}$.}{Is there a set of marks $D$ with size at most $k$, such that $H_{R \setminus D}$ contains no edges?}
\probsp{}
The above described graph problem is strikingly similar to the \textsc{Hitting Set} problem. In \textsc{Hitting Set}, a hypergraph is given and a (minimum size) subset~$S$ of vertices is sought, such that every edge has at least one vertex in $S$. In fact, our problem can canonically (by simply computing the characteristic hypergraph) be reduced to this problem parameterized with the size of $S$ in polynomial time. 

However, as we observed in \autoref{sec:hgraphstruc}, our graph instances are a strict subset of all possible hypergraph instances, which raises hope for better algorithms than those known for this generic problem.
\subsubsection{Fixed-Parameter Algorithm}
\label{sec:mdalgorithm}

From the hypergraph characterization and the notion of deleting vertices we immediately get a search tree algorithm: It is clear that in every edge of a characteristic hypergraph at least one mark has to be deleted. That means, one can choose one edge, branch into all possibilities of deleting one mark in that edge and do this recursively until either $k$ marks have been deleted or the characteristic graph has no edges. (See also the pseudocode in \autoref{alg:searchtree}.)

\begin{figure}
  \begin{algorithm}[H]
    \KwIn{A hypergraph $H = (R, E)$, an integer $k$ and a vertex deletion set~$D$.}
    \KwOut{A set of vertices $D$ of size $\leq k$ such that $H[R \setminus D]$ contains no edges, if it exists.}

    \SetKwFunction{STN}{SearchTreeNode}

    \lIf{$H[R \setminus D]$ has no edges}{
      Output $D$ and halt\;}
    \lIf{$|T| = k$}{
      Abort this branch\;}
    \Else{
      Choose an edge $e \in E$\;
      \lFor{$i \in e$}{
        \STN{$H, k - 1, D \cup \{i\}$}}}
    
    \SetAlgoRefName{SearchTreeNode}
    \caption{Solving \grmd .}
    \label{alg:searchtreenode}
  \end{algorithm}
  \begin{algorithm}[H]
    \KwIn{A ruler $R \subset \mathbb{N}$ and an integer $k$.}
    \KwOut{A set of marks $D$ of size $\leq k$ such that $R \setminus D$ is Golomb, if it exists.}
    
    \SetKwFunction{STN}{SearchTreeNode}
    \SetKwFunction{HGC}{HypergraphConstruction}

    $H_R \leftarrow$ \HGC{R}\;
    \STN{$H_R,k,\emptyset$}

    \SetAlgoRefName{Search\-Tree}
    \caption{Solving \grmd .}
    \label{alg:searchtree}
  \end{algorithm}
\end{figure}

Because every edge in a graph defined by a ruler $R$ as in \secref{sec:hgraphconstr} has edges with at most four vertices, and the recursion depth of \ref{alg:searchtreenode} is at most $k$, the running time of this algorithm is time-bounded by a term in $\bigO^*(4^k)$. That is, this algorithm suffices to classify this problem as fixed-parameter tractable with respect to at most $k$ mark deletions.

As we noted above, \grmd{} can be solved with \textsc{Hitting Set} algorithms. The fastest known algorithm for \textsc{Hitting Set} with edges of four vertices and parameterized with the solution size $k$ is $\bigO(3.076^k + m)$ by \citet{DGHNT10}. 
For this more restricted problem it is probably possible to beat this upper bound. Unfortunately we did not find such an algorithm. However, there are some applicable heuristic tricks that improve the above mentioned search strategy in practice:

\paragraph{Edges with Three Vertices.} In every edge with three vertices, at least one vertex has to be deleted, to make the graph edge-free. Thus, in \ref{alg:searchtreenode} one can search for such an edge, and branch into the deletion of every one vertex in it. This leads to a branching vector of $(1, 1, 1)$ and the search for the edge takes $\bigO(m)$ time, $m$ being the number of edges in the graph. However, there are characteristic hypergraphs that do not comprise edges with three vertices and thus this rule does not suffice to improve the theoretical upper bound for the search running time.

\paragraph{Dominating Vertices.} Another simple rule for an improved branching vector is the search for dominating vertices. 
\begin{definition}[Dominating vertices]
  \label{def:domvertices}
  A vertex $v$ dominates a vertex $u$, if and only if $v$ is in every edge that $u$ is incident to. 
\end{definition}
It is clear, that if there is an optimal solution that contains a vertex $u$ that is dominated by another vertex $v$, there is also an optimal solution that contains $v$ instead of $u$ (and obviously an optimal solution does never comprise both). Thus, in \ref{alg:searchtreenode} one can search for a vertex $v$ that dominates another vertex, choose an edge that contains both and branch on either deleting $v$ or any one of the vertices in this edge, that are not dominated by $v$. The worst case for this rule is, that there are only vertices that dominate exactly one other vertex, leading to a branching vector of $(1, 1, 1)$. The search for dominating vertices can be conducted in time $\bigO(nm^2)$ by iterating over every vertex $v$ and checking for every adjacent vertex, whether its incident edge set is a subset of the edge set of $v$.

Also notice that this rule doubles as a reduction rule for edges that intersect in at most one fix vertex $v$ with any other edge in the graph.

\paragraph{Cementating Branched-on Vertices.} \label{anch:cementating} A strategy that is applicable to any deletion search is the ``cementating'' of an entity $v$, when the recursive call for the deleted $v$ returns and no solution has been found. It is then clear, that no solution with $v$ deleted is possible in further branching and $v$ can be prohibited from deletion or be ``cementated''.

This strategy however can be extraordinarily powerful in the search routine for our particular problem: Suppose that at a call of \ref{alg:searchtreenode} a set of vertices $C$ has been cementated. This means, that if there is a solution to be found through further branching, this solution does not contain any vertex in $C$, and $C$ is a \gr. This can already be exploited for an early termination of the branch, if $C$ is not. Furthermore, it is clear that any distance between vertices in $C$ must not appear in the rest of the vertices. This has two implications: First, if a distance appearing between vertices in $C$ is also measured by one vertex $v$ in $C$ and one vertex $u$ not in $C$, then $u$ has to be added to the solution. Second, any such distance measured by two vertices not in $C$ has to be destroyed, leading to a branching vector of $(1, 1)$. 

This rule is very powerful in practice as we observe in \secref{sec:impl} and with further analysis might also lead to an improvement of the theoretical upper bound on the running time of the search.

\subsubsection{Cubic Problem Kernel}
\label{sec:mdkernel}

\citet{Abu09} observed, that \textsc{Hitting Set}, parameterized with the maximum number of vertices $d$ in an edge and the maximum solution size $k$, has a problem kernel of at most $\bigO(k^{d-1})$ edges. However, the reduction rules are not directly applicable to our problem. This is because instances produced by those reduction rules may not correspond to a ruler anymore: Some rules shrink edges, but since the edges correspond to conflicts in characteristic hypergraphs, there is no equivalent to a shrunk edge.

However, there are adequate substitute rules for our hypergraph instances that also retain the problem kernel size of $\bigO(k^3)$ edges (in characteristic hypergraphs $d=4$). These substitute rules compared to the ones by \citet{Abu09}, we salvage the basic idea of the high occurrence rules and use some structure we observed in characteristic hypergraphs to make them applicable.

In the following we assume that the characteristic hypergraph of the input ruler has been computed, and is kept updated alongside the ruler. We first list some simple and obvious rules. Then we give a rule that suffices to bound the number of 3-edges in the characteristic hypergraph. With an additional observation we give another rule to bound the number of 4-edges in the reduced graph. With the help of these two bounds, we are then able to bound the number of marks in a reduced instance.

\begin{rrule}[Lone edges]
  \label{rrule:loneedges}
  If there is an edge that does not intersect with any other edge, then remove it and all vertices it comprises from the graph and reduce $k$ by one.
\end{rrule}
\begin{rrule}[Lone vertices]
  \label{rrule:lonevertices}
  If there is a vertex with degree zero, then remove it from the graph.
\end{rrule}
\begin{rrule}[Leaf edges]
  \label{rrule:leafedges}
  If there is an edge that intersects any other edge in at most one of its vertices $v$, then remove $v$, remove any edges incident to $v$ and decrement $k$.
\end{rrule}
It is clear, that these three simple rules are correct and can be executed in time $\bigO(n+m)$.
\begin{rrule}[High Degree for 3-Edges]
  \label{rrule:highdegree3}
  If there is a vertex that has more than $3k$ incident 3-edges, then remove it from the graph, remove any incident edges and reduce $k$ by one.
\end{rrule}
\begin{lemma}
  \label{lem:highdegree3}
  \autoref{rrule:highdegree3} is correct and can be carried out in running time $\bigO(n + m)$. A graph has at most $3k^2$ 3-edges, if this rule cannot be applied to it and it is solvable with $k$ vertex deletions.
\end{lemma}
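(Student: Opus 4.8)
The plan is to split the claim into its three assertions---correctness, the $\bigO(n+m)$ running time, and the $3k^2$ bound on the number of 3-edges---and to route the correctness argument through the structural fact established in \autoref{lem:smallhand}.

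First, for correctness I would show that a vertex $v$ with more than $3k$ incident 3-edges must lie in every vertex deletion set of size at most $k$. The key observation is that \autoref{lem:smallhand} bounds the number of 3-edges through any fixed pair of vertices by three; in particular, for any single vertex $u \neq v$ there are at most three 3-edges containing both $v$ and $u$, so deleting a vertex other than $v$ destroys at most three of the 3-edges incident to $v$. Hence if a solution $D$ with $|D| \le k$ avoided $v$, its at most $k$ vertices could hit at most $3k$ of $v$'s incident 3-edges, contradicting that there are more than $3k$ of them and that every edge must be hit. So $v$ lies in every solution of size at most $k$, and I would then check the reduction-rule equivalence in both directions: a solution $D \ni v$ of the original instance yields $D \setminus \{v\}$ for $H-v$ with budget $k-1$, and conversely any solution $D'$ of $H-v$ of size at most $k-1$ gives $D' \cup \{v\}$ for the original instance, with neither the instance length nor the parameter increasing, as a reduction rule requires.

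Second, for the running time I would maintain, or compute in a single pass, a per-vertex counter of incident 3-edges: iterating once over the edges and incrementing the three counters of each 3-edge costs $\bigO(m)$, scanning the vertices for one exceeding $3k$ costs $\bigO(n)$, and removing that vertex together with its incident edges costs $\bigO(n+m)$ in the incidence-graph representation, so one application runs in $\bigO(n+m)$.

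Finally, for the bound I would use that in a reduced instance every vertex is incident to at most $3k$ 3-edges, and that an instance solvable with $k$ vertex deletions admits a deletion set $D$ with $|D| \le k$ meeting every edge, in particular every 3-edge. Summing the incident-3-edge counts over $D$ then gives at most $3k \cdot |D| \le 3k^2$ distinct 3-edges, since counting a multiply-covered edge more than once only inflates the sum. The main obstacle is isolating the correct structural ingredient for correctness: the whole argument hinges on the small-hand bound of three 3-edges per vertex pair from \autoref{lem:smallhand}, and once that is available both the correctness reasoning and the counting are routine.
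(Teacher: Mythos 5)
Your proof is correct and follows essentially the same route as the paper's: correctness via the small-hand bound of \autoref{lem:smallhand} (any other vertex destroys at most three of $v$'s incident 3-edges, so $v$ must be in every size-$k$ solution), counting vertex occurrences over the 3-edges for the $\bigO(n+m)$ running time, and summing the at-most-$3k$ incident 3-edges over the at-most-$k$ solution vertices for the $3k^2$ bound. Your explicit check of both directions of the reduction-rule equivalence is slightly more detailed than the paper's, but the argument is the same.
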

\begin{proof}
  Assume there are more than $3k$ 3-edges incident to one vertex $v$. We have seen in \autoref{sec:hgraphstruc}, \autoref{lem:smallhand} that there are at most three edges with three vertices that intersect in two vertices. That means the deletion of any other vertex in the graph can destroy at most three edges incident to $v$. Thus, if $v$ is not deleted, at least $k + 1$ vertices are necessary to destroy all edges incident to $v$.
  
  To apply \autoref{rrule:highdegree3}, one can simply iterate over every 3-edge and count the occurrence of the vertices in an array indexed by the vertices. This is possible in time $\bigO(n + m)$.

  Now assume \autoref{rrule:highdegree3} cannot be applied to the yes-instance $H$. Every edge in $H$ has to be destroyed, 3-edges in particular. One vertex can hit at most $3k$ 3-edges and thus, $H$ has at most $3k^2$ 3-edges.
\end{proof}

\begin{lemma}[Induced Clique Structures]
  \label{lem:inducedclique}
  If there are more than $3k$ 4-edges that intersect in two vertices, this instance cannot be solved with at most $k$ vertex deletions.
\end{lemma}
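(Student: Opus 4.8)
The plan is to argue by contradiction. Let $\{a,b\}$ be the pair shared by the more than $3k$ 4-edges, write these edges as $e_i=\{a,b,c_i,d_i\}$ for $1\le i\le t$ with $t>3k$, and suppose some vertex set $D$ with $|D|\le k$ destroys every edge of the characteristic hypergraph. The central counting tool will be \autoref{lem:largehand}: since all $e_i$ contain both $a$ and $b$, any further vertex $v$ forms a common triple $\{a,b,v\}$ with at most three of the $e_i$, so $v$ is incident to at most three of these 4-edges. Hence a single deletion other than $a$ or $b$ can destroy at most three of the $e_i$.

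First I would dispose of the case $a\notin D$ and $b\notin D$. Here every $e_i$ must be hit by deleting one of $c_i,d_i$, and by the bound above each deleted vertex accounts for at most three of the $t$ edges; thus $|D|\ge t/3>k$, a contradiction. The harder and more interesting case is when $D$ contains $a$ or $b$, say $a\in D$: now deleting $a$ destroys all $t$ edges $e_i$ simultaneously, so the naive counting no longer yields a contradiction, and this is exactly where I expect the main difficulty to lie.

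To overcome it I would invoke the derived-edge structure of \autoref{lem:butterfly}. The vertices of $D$ that coincide with some $c_i$ or $d_i$ number at most $k-1$ (since $a\in D$ and $a,b$ are themselves never endpoints), so by the same incidence bound they touch at most $3(k-1)$ of the petals $\{c_i,d_i\}$; because $t>3k$, at least $t-3(k-1)\ge 4$ petals have neither endpoint in $D$. Since \autoref{lem:butterfly} sorts the 4-edges through $\{a,b\}$ into only two cases, two of these untouched petals, say $e_i$ and $e_j$, must fall into the same case, and then \autoref{lem:butterfly} guarantees the edge $\{c_i,d_i,c_j,d_j\}$. This derived edge avoids $a$ and $b$ entirely and, by the choice of $e_i,e_j$, contains no vertex of $D$, so $D$ fails to destroy it, contradicting that $D$ is a solution. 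The induced clique of derived edges among same-case petals is thus what renders deleting $a$ or $b$ useless, and the threshold $3k$ is tight for this argument, being precisely the value that forces more untouched petals than the two cases of \autoref{lem:butterfly} can separate.
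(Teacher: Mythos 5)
Your proof is correct, but it is organized quite differently from the paper's. The paper never fixes a hypothetical solution set: it splits \emph{all} of the more than $3k$ edges through $\{a,b\}$ into the two positional cases of \autoref{lem:butterfly}, observes that within each case the petals $\{c_i,d_i\}$ are (essentially) disjoint and pairwise span derived edges avoiding $a$ and $b$, and then argues that this ``induced clique'' of derived edges alone already forces more than $k$ deletions, whichever of $2k$ and $k$ the two cases exceed. You instead argue by contradiction against a concrete solution $D$ with $|D|\le k$, distinguishing whether $D$ contains $a$ or $b$: when it does not, you use \autoref{lem:largehand} (which the paper does not invoke here) to bound by three the number of the given 4-edges any single deleted vertex can hit, giving $|D|\ge t/3>k$ directly; when it does, you use the same incidence bound to find at least four petals untouched by $D$, pigeonhole two of them into the same case of \autoref{lem:butterfly}, and exhibit a single derived edge disjoint from $D$. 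Your route is arguably tighter and more transparent --- it sidesteps the somewhat delicate clique-counting in the paper (whose case thresholds are stated a little loosely) and isolates exactly why deleting $a$ or $b$ cannot help --- while the paper's version extracts more reusable structure, namely the clique of derived edges that gives the lemma its name. One cosmetic point: your closing remark on the tightness of $3k$ is not quite right, since the pigeonhole step only needs three untouched petals; the bound $t>3k$ is what your first case requires, not the second.
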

\begin{proof}
  Recall that if there are more than two 4-edges intersecting in two vertices, there are additional edges not containing those vertices, as we observed in \autoref{lem:butterfly} in \autoref{sec:hgraphstruc}. In fact, in this lemma we proved a stronger result: There is a case distinction with two cases for 4-edges that intersect in two vertices $a, b$ and if the edges $e_1$ and $e_2$ correspond to the same case, there is an edge $(e_1 \cup e_2) \setminus \{a, b\}$.

  In case one, the marks other than $a, b$ are in between $a$ and $b$ or one to the left of $a$ and one to the right of $b$. That means, edges that correspond to case one cannot intersect in vertices other than $a, b$ and if there are more than $k - 1$ such edges, there is a structure with more than $k$ pairs of vertices that are all pairwise in an edge. It is clear that in such a structure, more than $k$ vertices have to be deleted, to make it edge-free.
  
  In case two, the two marks other than $a, b$ are positioned left of $a$ or both marks are positioned right of $b$ or one mark is positioned left of $a$ or right of $b$ and the other mark between the two. In this case $a, b$ measure the same distance as the other two marks in the edge. That means, at most two edges in this case can intersect in a vertex other than $a$ and $b$. Thus, if there are more than $2k$ edges that intersect in two vertices and correspond to case two, then there is a structure with more than $2k + 1$ pairs of vertices that are all pairwise in an edge. At most two of those pairs overlap in one vertex, and thus more than $k$ vertices have to be deleted to destroy every edge in this structure.

  If there are more than $3k$ edges that intersect in two vertices, either more than $2k$ edges correspond case one or more than $k$ to case two and thus the graph cannot be solved with $k$ vertex deletions.
\end{proof}

\begin{rrule}[High Degree for 4-Edges]
  \label{rrule:highdegree4}
  If there is a vertex that has more than $3k^2$ incident 4-edges, remove it from the graph, remove any incident edges and reduce $k$ by one.
\end{rrule}

\begin{lemma}
  \label{lem:highdegree4}
  \autoref{rrule:highdegree4} is correct and can be carried out in time $\bigO(n + m)$. A graph has at most $3k^3$ 4-edges, if this rule cannot be applied to it and it is solvable with $k$ vertex deletions.
\end{lemma}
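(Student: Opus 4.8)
The plan is to mirror the proof of \autoref{lem:highdegree3}, using \autoref{lem:inducedclique} in place of the \emph{small hand} bound. First I would establish correctness by showing that a vertex $v$ with more than $3k^2$ incident 4-edges must belong to every solution of size at most $k$. The key consequence I extract from \autoref{lem:inducedclique} is that, in any instance solvable with at most $k$ deletions, every \emph{pair} of vertices is contained in at most $3k$ common 4-edges---otherwise the pair $\{v,w\}$ together with more than $3k$ edges through it would already violate \autoref{lem:inducedclique}. Hence, if $v$ is not deleted, each other deleted vertex $w$ can destroy at most the (at most) $3k$ edges that contain both $v$ and $w$. Since every 4-edge incident to $v$ must then be covered by some deleted vertex other than $v$, and each such vertex covers at most $3k$ of them, destroying all of the more than $3k^2$ edges incident to $v$ would require more than $3k^2/(3k) = k$ deletions, contradicting the bound $k$. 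Therefore $v$ lies in every size-$k$ solution, which justifies deleting $v$, removing its incident edges, and decrementing $k$.

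Next I would verify the running time: exactly as in \autoref{lem:highdegree3}, one iterates once over all 4-edges and tallies, for each vertex, how many 4-edges it is incident to, using an array indexed by the vertices. This takes $\bigO(n+m)$ time, and a vertex exceeding the threshold $3k^2$ is then located within the same bound.

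Finally I would prove the size bound on a reduced yes-instance. Suppose the rule can no longer be applied, so that every vertex is incident to at most $3k^2$ 4-edges, and suppose the instance admits a solution $D$ with $|D| \le k$. Every 4-edge must contain at least one vertex of $D$, and each vertex of $D$ is incident to at most $3k^2$ 4-edges; counting the 4-edges through a covering vertex of $D$ therefore yields at most $|D|\cdot 3k^2 \le 3k^3$ 4-edges, as claimed.

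The main obstacle I anticipate is the first step: one has to read off the per-pair bound of $3k$ from the global formulation of \autoref{lem:inducedclique}, and then phrase the counting argument as ``each 4-edge through $v$ must be covered, and each deleting vertex covers at most $3k$ of them'' so as to avoid double-counting the edges that contain $v$ together with two vertices of the deletion set. Once this framing is fixed, the remaining steps are routine and run exactly parallel to \autoref{lem:highdegree3}.
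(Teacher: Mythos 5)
Your proposal is correct and is precisely the argument the paper intends: the paper only states that the proof is ``analogous to the proof for \autoref{lem:highdegree3} by substituting \autoref{lem:smallhand} with \autoref{lem:inducedclique}'', and you have carried out exactly that substitution, correctly reading off the per-pair bound of $3k$ common 4-edges from \autoref{lem:inducedclique} and running the same counting for correctness, running time, and the $3k^3$ edge bound. No gaps.
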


The proof is analogous to the proof for \autoref{lem:highdegree3} by substituting \autoref{lem:smallhand} with \autoref{lem:inducedclique}.

\begin{theorem}[Problem Kernel for \grmd]
  \grmd{} has a problem kernel with at most $9k^3 + 2k^2 + k$ marks. The characteristic hypergraph of the ruler of a kernelized instance has at most $3k^3$ 4-edges and $3k^2$ 3-edges. The kernel can be computed in $\bigO(kn + km)$~time if the characteristic hypergraph is known.
\end{theorem}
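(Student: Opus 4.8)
The plan is to treat the three assertions of the theorem separately, noting that the two edge bounds are essentially already in hand and that the genuine work lies in the mark bound and the running-time accounting. Throughout I assume the instance has been reduced with respect to all of \autoref{rrule:loneedges}--\autoref{rrule:highdegree4} and that it is a yes-instance witnessed by some deletion set $D$ with $|D|\le k$ (if no such $D$ exists the kernelization may simply output a trivial no-instance). The bound of at most $3k^2$ 3-edges is then exactly the conclusion of \autoref{lem:highdegree3}, and the bound of at most $3k^3$ 4-edges is exactly the conclusion of \autoref{lem:highdegree4}; both hold because the high-degree rules guarantee that every mark lies in at most $3k$ incident 3-edges and at most $3k^2$ incident 4-edges, while $D$ must hit every edge.

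For the mark bound I would count marks by relating each surviving mark to the solution $D$. By \autoref{rrule:lonevertices} every remaining mark lies in at least one edge, and since $D$ is a hitting set every edge contains a mark of $D$; hence every mark outside $D$ shares an edge with some $d\in D$. Charging each such mark to one witnessing $d$, it suffices to bound, for a fixed $d\in D$, the number of distinct other marks occurring in edges through $d$. The reduced degree bounds give at most $3k^2$ 4-edges through $d$, contributing at most $3\cdot 3k^2=9k^2$ marks; summing this 4-edge contribution over the at most $k$ marks of $D$ already accounts for the leading $9k^3$ term, and adding $|D|\le k$ accounts for the $+k$. The delicate point is squeezing the 3-edge contribution down to the stated $2k^2$ rather than the naive $6k^2$ one obtains from the raw degree $3k$: here I would exploit \autoref{lem:smallhand}, which limits how 3-edges can cluster around a pair of marks, to argue that the distinct marks reachable through the 3-edges at $d$ are far fewer than the incidence count alone suggests. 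This tightening is the main obstacle, since the $9k^3$ and $+k$ terms fall out immediately, whereas the quadratic term is sensitive to double-counting and needs the structural input of \autoref{lem:smallhand} (and, for the analogous 4-edge clustering, \autoref{lem:inducedclique}).

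Finally, for the running time I would observe that, once the characteristic hypergraph is available, every rule can be checked and applied in $\bigO(n+m)$ time by maintaining per-mark 3-edge and 4-edge degree counts alongside the incidence structure. Each application of \autoref{rrule:loneedges}, \autoref{rrule:leafedges}, \autoref{rrule:highdegree3}, or \autoref{rrule:highdegree4} decrements $k$, so at most $k$ such applications occur in total; the remaining cleanup by \autoref{rrule:lonevertices} can be folded into one $\bigO(n+m)$ sweep per round. Organising the computation into $\bigO(k)$ rounds, each consisting of a single scan for an applicable rule followed by a constant number of $\bigO(n+m)$ updates, then yields the claimed $\bigO(kn+km)$ bound.
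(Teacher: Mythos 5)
Your treatment of the edge bounds and of the running time matches the paper: the $3k^2$ and $3k^3$ figures are read off from \autoref{lem:highdegree3} and \autoref{lem:highdegree4}, and the $\bigO(kn+km)$ bound comes from the observation that every application of a high-degree rule decrements $k$, so only $\bigO(k)$ rounds of $\bigO(n+m)$ work occur. Your mark count is a per-solution-vertex recasting of the paper's global count (the paper charges at most three non-solution vertices to each of the $3k^3$ 4-edges, you charge at most $3\cdot 3k^2$ marks to each of the $k$ solution vertices); both yield the $9k^3$ term and the $+k$ for the deletion set itself.

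The genuine gap is exactly the step you flag and then defer: the $2k^2$ term. A proof cannot rest on ``I would exploit \autoref{lem:smallhand} to argue the count is far smaller''; that is the one place where the arithmetic does not close, so it is the one place an actual argument is required. Moreover, \autoref{lem:smallhand} bounds the number of 3-edges through a fixed \emph{pair} of marks, not the number of distinct marks covered by the 3-edges through a single mark $d$; for a fixed $d$ each 3-edge $\{d,u,v\}$ can still introduce two marks not seen before, so the lemma does not obviously buy the factor-three saving you need. The straightforward count --- two non-solution marks per 3-edge, at most $3k^2$ 3-edges --- gives $6k^2$, not $2k^2$, and this is in fact all that the paper's own ``analogously'' establishes as well; the honest bound obtainable by either route is $9k^3+6k^2+k$, which is still a cubic kernel. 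You should either supply a real argument for the tightening (I do not see one) or state and prove the theorem with the weaker quadratic coefficient.
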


\begin{proof}
  To compute the kernel, proceed as follows: Apply \autoref{rrule:highdegree3}, apply \autoref{rrule:highdegree4} and recurse until neither applies anymore. Then apply \autoref{rrule:lonevertices} until it does not apply anymore.
  
  Since both high-degree reduction rules can be applied at most $k$ times, the procedure recurs at most $k$ times and the overall running time adds up to $\bigO(kn + km)$. \autoref{rrule:lonevertices} can of course be applied exhaustively in $\bigO(n)$~time.


  The upper bound on the 3- and 4-edges follows from \autoref{lem:highdegree3} and \autoref{lem:highdegree4}. In a yes-instance there is a set $S$ of at most $k$ vertices, such that every edge in the graph has a non-empty intersection with $S$. That means in each of the 4-edges, there are at most three vertices not in $S$ and thus there are at most $9k^3 + k$ vertices in 4-edges. This argument holds analogously for 3-edges and thus there are at most $9k^3 + 2k^2 + k$ vertices in a yes-instance.
\end{proof}

\subsection{Mark Preservation Parameter}

We also tried to analyze \grsm{} with respect to the size of the sought \gr. However, the problem mostly escaped our attempts. We conjecture it to be W[1]-hard, and gather some hints towards this in the following. 

With the notion of hypergraph characterization, \grsm{} reformulates as follows.
\probsp{}
\decprob{\grsm}{A ruler $R \subseteq \mathbb{N}$ and $k\in\mathbb{N}$.}{Is there a ruler $R' \subseteq R$ with size at least $k$, such that $H_{R'}$ contains no edges?}
\probsp{}
At first, observe that \grsm{} can be solved with \textsc{Independent Set} on hypergraphs. In \textsc{Independent Set} on hypergraphs, a hypergraph $H$ is given and a (maximum size) vertex set $S$ is sought, such that $H[S]$ contains no edges. However, \textsc{Independent Set} on $r$-uniform hypergraphs, parameterized with the size of the sought vertex set, has been proven to be W[1]-hard by \citet{NR07}. Their proof relies on heavily overlapping edges, and thus this approach is not directly applicable to our problem. Nevertheless, our problem retains some of the features of general \textsc{Independent Set}. For example, a simple branching strategy like the one in \textsc{Hitting Set} seems not to be feasible, because in our instances too, there can be edges that do not contain solution vertices at all.

If we modify the problem slightly, it indeed becomes W[1]-hard:
\probsp{}
\decprob{Golomb Subruler $\geq$ Double Conflicts}{A ruler $R \subseteq \mathbb{N}$ and $k\in\mathbb{N}$.}{Is there a ruler $R' \subseteq R$, such that $H_{R'}$ contains no edges and $R'$ has a number of marks that is least $k$ plus two times the number of edges in $H_R$?}
\probsp{}
The proof for this is given in \autoref{sec:simplnphardness} on page \pageref{anch:grsm'w-hard}.

\chapter{Implementation and Empirical Results}
\label{sec:impl}

In this chapter, we investigate the practical implications of our considerations in \secref{sec:fpt}. We implemented the search tree algorithm that is discussed there. It uses some heuristic improvements and the problem kernel we observed. We first describe the implementation in detail and then report on our results. Some of the heuristic improvements prove very effective in practice.

\section{Description of the Implementation}

\paragraph{Problem Definition.} Our implementation solves the following problem with help of the fixed-parameter algorithm we developed in \autoref{sec:markdeletion}.
\optprob{\textsc{Maximum Mark Golomb Subruler}}{A ruler $R \subset \mathbb{N}$.}{Find a \gr{} $R' \subseteq R$ of maximum cardinality.}
Observe that, theoretically, this problem can be used to answer a variety of questions: By using appropriate rulers---with marks taken consecutively from~$\mathbb{N}$---the following problem reduces to the one above.
\optprobnotitle{}{An integer $D \in \mathbb{N}$.}{Find a \gr{} of length at most $D$ and maximum number of marks.}
\grd{} can be solved with the second problem, which means that our algorithm can be used to directly search for optimal \gr s as well as a subroutine in other programs that produce partial instances corresponding to \textsc{Maximum Mark Golomb Subruler}.

\paragraph{Algorithm.} The algorithm first generates the characteristic hypergraph of the input ruler $R$. It then proceeds to heuristically compute a minimal subset $S$ of $R$ such that $H_{R \setminus S}$ contains no edges. This solution is used as an upper bound for the parameter in the corresponding instance of \grmd{}. We interpret this upper bound and the input ruler as an instance of \grmd{}, solve it, and successively decrease the parameter, thus obtaining successive smaller mark deletion sets, until no solution can be found anymore. The last solution~$S$ that could be found is a minimum-size set of marks such that $H_R$ contains no edges. It is clear that $R \setminus S$ must be optimal to \textsc{Maximum Mark Golomb Subruler}. See also \autoref{alg:golombsubruler}. 
\begin{algorithm}
  \KwIn{A ruler $R \subset \mathbb{N}$.}
  \KwOut{A \gr{} $R' \subseteq R$ with maximum number of marks.}
  
  \SetKwFunction{HGC}{HypergraphConstructionImproved}
  \SetKwFunction{SolveHeuristic}{SolveHeuristic}
  \SetKwFunction{SolveParameterized}{SolveParameterized}
  \SetKwData{Continue}{continue}
  \SetKwData{HeuristicSol}{MinimalMarkDeletionSet}
  \SetKwData{ParameterizedSol}{AtMost$k$MarkDeletionSet}
  \SetKwData{ParameterizedSolp}{ParameterizedSolution}
  \SetKw{False}{false}
  
  $H_R \leftarrow$ \HGC{R}\;
  \HeuristicSol $\leftarrow$ \SolveHeuristic{$H_R$}\;
  \Continue $\leftarrow$ true\;
  $k \leftarrow |\text{\HeuristicSol}|$\;
  \ParameterizedSol $\leftarrow$ \HeuristicSol\;
  \While{\Continue}{
    \ParameterizedSolp $\leftarrow$ \SolveParameterized{$H_R$, $k$, $\emptyset$, $\emptyset$}\;
    \lIf{\ParameterizedSolp = \False}{\Continue $\leftarrow$ \False\;}
    \Else{
      $k \leftarrow |\text{\ParameterizedSolp}| - 1$\; 
      \ParameterizedSol $\leftarrow$ \ParameterizedSolp\;
    }
  }
  $R' \leftarrow R \setminus \text{\ParameterizedSol}$\;
  \Return $R'$\;
  \SetAlgoRefName{Find\-Golomb\-Sub\-ruler}
  \caption{Computing Golomb subruler with maximum number of marks for a given ruler}
 \label{alg:golombsubruler}
\end{algorithm}

The heuristic solution is computed via two very simple strategies and the best solution is kept. The first strategy greedily deletes vertices from the input graph $H_R$, until it is edge-free. The second strategy greedily selects an edge, deletes all vertices of this edge, and iterates until the graph contains no edges anymore. Observe that the second strategy yields a solution that contains at most four times the number of vertices in an optimal solution, because in any edge at least one vertex has to be deleted.

For the subroutine \ref{alg:solvepara} that solves the instances of \grmd{}, we make use of the (heuristic) improvements described in \autoref{sec:mdalgorithm}. That is, we cementate vertices (see page \pageref{anch:cementating}), if the recursive calls of \ref{alg:solvepara} return without finding a solution, we use the implications of the cementated vertices and we try to find edges that imply favorable branching vectors. Additionally, in search tree nodes, we apply the reduction rules we described in \autoref{sec:mdkernel}. A pseudocode description of the search tree subroutine can be seen in \autoref{alg:solvepara}. (The cementating of marks in lines 10 through 14 is simplified for readability.)

\begin{algorithm}
  \LinesNumbered

  \KwIn{A characteristic Hypergraph $H_R$, an integer $k$, a set of cementated marks $C \subset R$ and a solution set $S$.}
  \KwOut{A set of marks $S$ such that $|S| \leq k$ and $H_{R \setminus S}$ contains no edges or false if such a set does not exist.}

  \SetKwFunction{SolveHeuristic}{SolveHeuristic}
  \SetKwFunction{SolveParameterized}{SolveParameterized}
  \SetKwData{Continue}{continue}
  \SetKwData{BranchingEdge}{BranchingEdge}
  \SetKwData{Sol}{Solution}
  \SetKw{False}{false}

  \lIf{$k < 0$}{\Return \False\;}
  \lIf{$H_R$ contains no edges}{\Return $S$\;}
  Apply the problem kernel to $H_R$ and $k$, adding deleted vertices to $S$\;
  \lIf{$k < 0$ or $H_R$ exceeds the problem kernel size}{\Return \False\;}
  \lIf{$H_R$ contains no edges}{\Return $S$\;}
  \BranchingEdge $\leftarrow$ An edge $e$ in $H_R$ such that $e \setminus C$ has minimum size\;
  \For{$i \in \text{\BranchingEdge} \setminus C$}{
    \Sol $\leftarrow$ \SolveParameterized{$H_{R \setminus \{i\}}$, $k - 1$, $C$, $S \cup \{i\}$}\;
    \uIf{\Sol = \False}{
      $C \leftarrow C \cup \{i\}$\;
      \For{$a, b, c \in C$}{
        $X \leftarrow \{c + |a - b|, c - |a - b|\}$\;
        \lIf{$X \cap C \neq \emptyset$}{\Return \False\;}
        $H_R \leftarrow H_{R \setminus X}$; $S \leftarrow S \cup (X \cap R)$; $k \leftarrow k - |X \cap R|$\;
    }}
    \lElse{\Return \Sol\;}
    
  }
  \Return \False\;

  \SetAlgoRefName{Solve\-Parameterized}
  \caption{Solving \grmd}
 \label{alg:solvepara}
\end{algorithm}

\paragraph{Implementation.} We implemented the above described algorithm in the \emph{Objective Caml} language. 
Objective Caml is a multi-paradigm language, allowing for object oriented, functional and procedural programming styles. We chose it because first, it allows for easy transition from theoretical algorithms to practical programs via the functional programming possibilities. Second, because it is very robust to programming errors through strict variable typing and automatic type inference. And third, because the implementation is meant to be a proof of concept and not a highly optimized solver. 

The design is a simple one-process, one-threaded solution.

\paragraph{Testing Environment.} The experiments were conducted on an Intel Xeon E5410 machine with 2.33 GHz, 6 MB L2 cache, and 32 GB main memory. The operating system was GNU/Linux with a kernel of version 2.6.26. The Objective Caml compiler used was of version 3.10.2. We did not utilize the symmetric multiprocessing capabilities available on the system.

\section{Running Times and other Results}

We conducted our experiments on rulers with $n$ marks of the form $\{i \in \mathbb{N} : i < n\}$. 

\paragraph{General Running Times.} Varying the number of marks $n$, we observe the running time behavior shown in \autoref{fig:runningtimesgeneral}. The running times obviously increase exponentially with increasing $n$. However, if we fit the running time function~$f(n) = a + b n^2 c^n$ to the empirical running times via the Levenberg-Marquardt algorithm \cite{Mo78}, we get the following result:

$$f(n) = 0.989502 + 9.89765 \times 10 ^{-5} n^2 1.18727^n$$

Even with iteratively using the $\bigO^*(4^k)$ subroutine, we still obtain an algorithm whose empirical running time has an exponential term below $1.2^n$. Although this seems like a good result, it would be necessary to examine rulers of about $550$ marks, to proof or disproof the optimality of the currently shortest known \gr{} with $27$ marks (as of now, \gr s of minimum length with up to 26 marks are known). Assuming the running time behavior corresponds to the function~$f$ we determined above, this would result in running times in the order of $10^{30}$~years on our architecture, i.e., this particular implementation is not feasible for this application. However, with the reduction rules and heuristic improvements, it still might be possible to use it as subroutine in such search algorithms. Experiments on random rulers as input could settle this question.

\begin{figure}
  \begin{center}
    \includegraphics{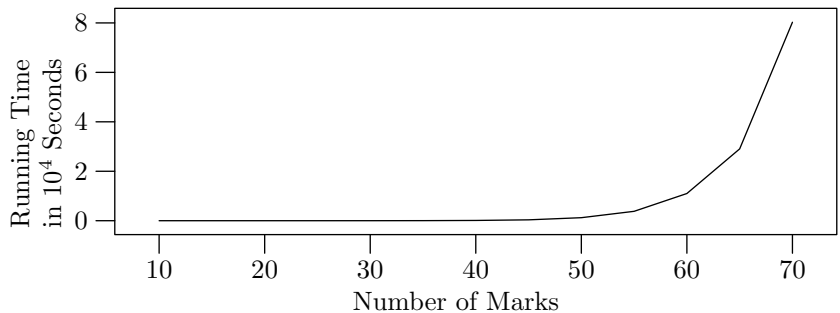}
    \caption{Observed running times of \autoref{alg:golombsubruler} versus number of marks of the input ruler.}
    \label{fig:runningtimesgeneral}
  \end{center}
\end{figure}

\paragraph{Greedy Strategies.} In \autoref{alg:golombsubruler}, two greedy strategies are used to obtain an upper bound for the parameter. We observe that in our experiments the strategy that greedily deletes vertices always yields a solution that is superior to greedily deleting all vertices of one edge.
The size of the greedy solution is very close to the optimal solution in our experiments, as shown in \autoref{fig:greedyvsopt}. In this context, it would be interesting to test whether there are lower bounding techniques that can be applied fast.

\begin{figure}
  \begin{center}
    \includegraphics{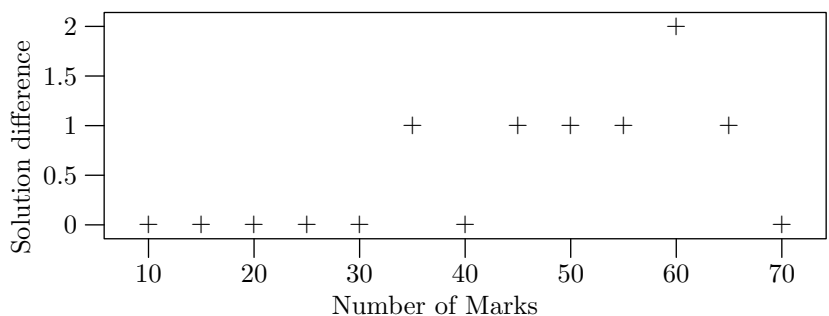}
    \caption{Greedy solution size minus optimal mark deletion set size. The greedy algorithm performs very close to the optimal solution sizes.}
    \label{fig:greedyvsopt}
  \end{center}
\end{figure}

\begin{table}
  \centering
    \begin{tabularx}{\textwidth}{X XcX XrX XrX X}
      \toprule
      &\multicolumn{3}{c}{Number of Marks} 
      & \multicolumn{3}{c}{No Cementation} 
      & \multicolumn{3}{c}{Cementation} &\\
      \midrule
      &&20 &&& 35.22 &&& 0.18 && \\
      &&25 &&& 8313.78 &&& 1.27 && \\
      \bottomrule
    \end{tabularx}
  \caption{Cementating vertices yields tremendous speedups. Running times in seconds.}
  \label{tab:cementspeedup}
\end{table}

\begin{table}
  \centering
    \begin{tabularx}{\textwidth}{X XcX XrX XrX X}
      \toprule
      &\multicolumn{3}{c}{Number of Marks}
      & \multicolumn{3}{c}{Search Tree Nodes}
      & \multicolumn{3}{c}{Vertices Deleted} &\\ 
      \midrule
      &&\hspace{0.5cm}20\hspace{0.5cm} &&& 2,734 &&& 3,278&& \\
      &&40 &&& 893,070 &&& 1,351,977&& \\
      &&60 &&& 60,035,055 &&& 103,541,222&& \\
      \bottomrule
    \end{tabularx}
  \caption{Search tree nodes and vertices deleted due to cementation in comparison.}
  \label{tab:cementdelete}
\end{table}

\paragraph{Cementating Vertices.} The cementating of vertices that have a recursive call returning without solution yields large speedups as is shown in \autoref{tab:cementspeedup}. In \autoref{tab:cementdelete}, it is shown, how many vertices are deleted by it in the course of the search tree, when varying the size of the input ruler. This success can be explained with the following example:

Suppose branching has been done on edges that contain vertices of a confined region on the input ruler. A set of $d$ cementated vertices have been gathered and now the algorithm moves on to branch on an edge in a different region on the ruler. If the first recursive call for a vertex in the edge returns negative, $d (d - 1) /2$ vertices can be deleted, because the vertex now is cementated and the set of cementated vertices must be a \gr . The number of deleted vertices most likely increases another time, when the recursive call for the second vertex in the edge returns negative and so on. This means that we get a branching vector of $(1, d(d-1)/2, d(d-1)/2 + c, ...)$ for this particular case. The running times suggest that the branching vector of an edge is likely to be much more favorable in practise with cementated vertices.

\paragraph{Reduction Rules in the Search Tree.} 
When not using cementation, we obtain speedups of about two using the high-degree rules. However, when using cementation they do not yield running time benefits anymore; although the reduction rules delete two to three vertices per search tree node and are responsible for many direct terminations of branches (as we show in \autoref{tab:reddelete}). The time needed to calculate the rules outweighs their benefits in our instances. This is shown for the high-degree rule in \autoref{tab:highdegree}. This can not be remedied by only applying the reduction rules every $d$'th search tree node in our instances. However, considering the number of deleted vertices and aborted branches and the fact that they seem to improve in overall effectiveness with increasing number of vertices (see \autoref{tab:reddelete}), it might prove worthwhile to thoroughly optimize the implementation to support these routines in minimal time.

\begin{table}
  \centering
  \begin{tabularx}{\textwidth}{XcX XrX XrX XrX}
    \toprule
     \multicolumn{3}{c}{Number}
    & \multicolumn{3}{c}{}
    & \multicolumn{3}{c}{}
    & \multicolumn{3}{c}{}  \\ 
     \multicolumn{3}{c}{of Marks}
    & \multicolumn{3}{c}{Search Tree Nodes}
    & \multicolumn{3}{c}{Vertices Deleted}
    & \multicolumn{3}{c}{Branches Aborted}  \\ 
    \midrule
    &20 &&& 2,734 &&& 6,148 &&& 1,856 & \\
    & 40 &&& 893,070 &&& 2,472,270 &&& 570,670 &\\
    & 60 &&& 60,035,055 &&& 186,900,562 &&& 34,506,053 &\\
    \bottomrule
  \end{tabularx}
  \caption{Search tree nodes compared to vertices deleted and branches aborted due to the high-degree, leaf-edge and lone-edge reduction rules.}
  \label{tab:reddelete}
\end{table}
\begin{table}
  \centering
  \begin{tabularx}{\textwidth}{X XcX XrX XrX X}
    \toprule
    & \multicolumn{3}{c}{Number}
    & \multicolumn{3}{c}{With}
    & \multicolumn{3}{c}{Without} & \\ 
    & \multicolumn{3}{c}{of Marks}
    & \multicolumn{3}{c}{High-Degree Rules}
    & \multicolumn{3}{c}{High-Degree Rules} & \\ 
    \midrule
    &&20 &&& 0.18 &&& 0.17 &&  \\
    &&30 &&& 6.07 &&& 5.48 && \\
    &&40 &&& 106.21 &&& 97.79 && \\
    &&50 &&& 1212.30 &&& 1134.45 && \\
    \bottomrule
  \end{tabularx}
  \caption{Running times using and ignoring the high-degree reduction rules whilst cementating vertices. All values in seconds. }
  \label{tab:highdegree}
\end{table}

\paragraph{Conclusion} On the plus side, we could show that our heuristic improvements are very effective and our reduction rules delete many vertices and are effective in trimming the search tree. Unfortunately, despite a relatively small exponential term in the observed running times, this did not lead to an implementation that is feasible for discovering new \gr s.




\chapter{Conclusion and Outlook}

\paragraph{Our Work.} In this work, we have given a short overview of some hurdles and algorithmic possibilities in the field of \gr s. The basic problems \gro{} and \grd{} seem to be elusive to classic complexity classification and may well lie between P and NP. To settle these questions would of course be very interesting; however, it seems likely that this would imply major new insights into classic complexity theory and/or number theory.

The natural hypergraph characterization we have given for rulers makes it possible to get a better understanding of conflicts with respect to \gr s. We observed some structure in characteristic hypergraphs that we later exploited for a problem kernel. This implies that a more sophisticated structurization of those graphs could lead to much better algorithms and thus would be very interesting. A more thorough understanding of the graphs could also be used to settle other questions related to \gr{} construction, for example the W[1]-hardness of \grsm{} with respect to the mark preservation parameter.

We have given a simplified proof for the NP-hardness of \grsm{} that also lead to a W[1]-hardness result for a modified problem that did not directly follow from the original proof. We hope that this makes fixed-parameter research for related problems more accessible and attractive.

Concerning fixed-parameter algorithmics, we have observed that \grmd{} is tractable and we have given a cubic-size problem kernel. We strongly believe that this bound can be surpassed and we also have just scratched the surface regarding solution algorithms.

The implementation of a corresponding algorithm showed that the reduction rules prove quantitatively efficient in a search tree, however, their effective running times have to be improved. The technique of cementating vertices applied to \gr s proved very effective in practice. 


\paragraph{Other Aspects.} There also are some interesting aspects which we did not cover in this work. We briefly list some of them here:

$\bullet$ To our knowledge the relations between \textsc{Difference Cover}, \textsc{Turnpike} and \gro{} have mostly been left unexplored in the past. There might be a trinity similar to \textsc{Clique}, \textsc{Vertex Cover} and \textsc{Independent Set}.


$\bullet$ Also to our knowledge the parameterized complexity of \grd{} has not been touched yet. Since easily checkable lower bounds and allegedly efficiently constructible upper bounds on the length of a \gr{} exist \cite{Dim02}, it may be possible to apply a below or above guarantee parameterization. 

$\bullet$ There is a problem called \textsc{Minimum Redundancy Linear Array} \cite{Mof68}, where rulers are sought which are perfect---they measure every distance up to their length---and have minimum redundant differences between marks. Since in a \gr{} there is no redundancy at all, maybe this would make for a practical parameterization. 

$\bullet$ Considering the research in local search and evolutionary strategies \cite{CF05, CDFH07, PTC03, SHL95, TPC07}, it might be interesting to contemplate fixed-parameter tractability for problems arising in this field.

$\bullet$ The notion of treewidth is a popular topic in fixed-parameter algorithms. There is a theorem by Courcelle \cite{FG06} that classifies every decision problem that can be formulated in monadic second-order logic as fixed-parameter tractable with respect to the parameter treewidth. One could investigate if such formulations exist for the hypergraph problems we considered in this work.




\bibliographystyle{abbrvnat}
\bibliography{seminar.bib}
 
\end{document}